\make@display@tag\ltx@label{#1}}}}}}
\begin{document}
\begin{frontmatter}
    \title{Unambiguous Acceptance of Thin Coalgebras}

    \thanks[ALL]{C\^{i}rstea and Kupke were funded by a Leverhulme Trust Research Project Grant (RPG-2020-232).}
 
    \author{Anton Chernev\thanksref{a}\thanksref{antonEmail}} 
    \author{Corina Cîrstea\thanksref{b}\thanksref{corinaEmail}}
    \author{Helle Hvid Hansen\thanksref{a}\thanksref{helleEmail}}
    \author{Clemens Kupke \thanksref{c}\thanksref{clemensEmail}}

    \address[a]{University of Groningen\\
    Groningen, Netherlands}
    \address[b]{University of Southampton\\
    Southampton, United Kingdom}
    \address[c]{University of Strathclyde\\
    Glasgow, United Kingdom}  							
    \thanks[antonEmail]{Email: \href{mailto:a.chernev@rug.nl} {\texttt{\normalshape
        a.chernev@rug.nl}}}
    \thanks[corinaEmail]{Email: \href{mailto:cc2@ecs.soton.ac.uk} {\texttt{\normalshape
        cc2@ecs.soton.ac.uk}}}
    \thanks[helleEmail]{Email: \href{mailto:h.h.hansen@rug.nl} {\texttt{\normalshape
        h.h.hansen@rug.nl}}}
    \thanks[clemensEmail]{Email: \href{mailto:clemens.kupke@strath.ac.uk} {\texttt{\normalshape
        clemens.kupke@strath.ac.uk}}}
        
    \begin{abstract} 
        Automata admitting at most one accepting run per structure, known as unambiguous automata, find applications in verification of reactive systems as they extend the class of deterministic automata whilst maintaining some of their desirable properties. In this paper, we generalise a classical construction of unambiguous automata from thin trees to thin coalgebras for analytic functors. This achieves two goals: extending the existing construction to a larger class of structures, and providing conceptual clarity and parametricity to the construction by formalising it in the coalgebraic framework. As part of the construction, we link automaton acceptance of languages of thin coalgebras to language recognition via so-called coherent algebras, which were previously introduced for studying thin coalgebras. This link also allows us to establish an automata-theoretic characterisation of languages recognised by finite coherent algebras.
    \end{abstract}

    \begin{keyword}
        Coalgebra, unambiguous automaton, thin tree, thin coalgebra, verification.
    \end{keyword}

\end{frontmatter}

\section{Introduction}\label{sec:introduction}
\textbf{Background and Motivation}
Model checking~\cite{BaierKatoenPrinciplesOfModelChecking} of reactive systems makes extensive use of automata over infinite objects~\cite{GraedelEtAl}.
A core result facilitating the use of infinite word automata in verification is the determinisation of parity automata. However, this result is limited to infinite words, so systems whose runs exhibit tree-like
structure call for more refined theoretical tools.

Recent work \cite{cirstea_et_al:LIPIcs.CALCO.2017.7,CirsteaKupke:CSL2023} (building on \cite{urabe_et_al:LIPIcs.CONCUR.2016.24}) presents coalgebraic approaches to quantitative model checking using parity automata. Coalgebra~\cite{Rutten:TCS2000} allows for a unified treatment of various system types by viewing these as coalgebras for a functor describing the system type. 
In particular, \cite{CirsteaKupke:CSL2023} proposes an approach to quantitative model checking
of systems with quantitative branching type given by a monad $T$ and the structure of system runs given by a polynomial functor $F$. 
A key condition in \cite{CirsteaKupke:CSL2023} 
is that the property to be checked must be given by an $F$-coalgebra automaton~\cite{KupkeVenema2008CoalgebraicAutomata} that is \emph{unambiguous}, i.e., there is at most one accepting run on each coalgebra.
This raises the question of when an equivalent unambiguous $F$-coalgebra automaton can be constructed from a nondeterministic one.
This question is also of fundamental interest 
and the coalgebraic framework allows to investigate for which system types unambiguous acceptance results can be obtained.
For ordered ranked trees, there are languages that are not accepted by an unambiguous automaton~\cite{CarayolLNW}.
However, for the subclass of \emph{thin trees}, i.e., trees with only countably many infinite branches, \cite{Skrzypczak2016} shows how to construct
from a nondeterministic automaton, an automaton that unambiguously accepts the same thin trees. The construction goes via \emph{thin algebras}: every automaton can be transformed into a finite thin algebra, which can be transformed into an automaton that is unambiguous on thin trees.

Inspired by these results on thin trees, we showed in \cite{ChernevCirsteaHansenKupkeThinCoalg} 
that thin trees and their inductive characterisation can be generalised to the level of $F$-coalgebras for an analytic functor $F$.
Analytic functors~\cite{JoyalFoncteursAnalytiques} include polynomial functors (the type of ordered ranked trees) and quotients thereof such as the bag functor. 
In the present paper, we build on the algebraic characterisation from \cite{ChernevCirsteaHansenKupkeThinCoalg} 
of thin $F$-coalgebras via so-called \emph{coherent algebras}
in order to prove unambiguous acceptance for thin $F$-coalgebras for analytic $F$.

\textbf{Contributions}
We summarise our contributions below.
\begin{itemize}
    \item We show that, when restricting to thin $F$-coalgebras for analytic functors $F$, every (nondeterministic)
    $F$-coalgebra automaton can be transformed into an equivalent unambiguous $F$-coalgebra automaton. 
    We thus extend the results for thin trees~\cite{Skrzypczak2016},
    thereby making a step towards applications in quantitative model checking~\cite{CirsteaKupke:CSL2023}.
    \item We give an automata-theoretic characterisation of languages recognised by finite coherent algebras; 
    these are precisely the languages accepted by $F$-automata with a so-called \emph{prefix-agnostic} acceptance condition, which informally means that acceptance does not depend on any finite prefix of paths in the run.
    \item When instantiated to a polynomial functor $F$, our unambiguous automaton construction provides a categorical account of the classical construction in \cite{Skrzypczak2016}. In particular, thin algebras arise as coherent algebras with additional structure, which we call \emph{rational coherent algebras}.
\end{itemize}

We obtain these results as follows. In order to define unambiguous acceptance, in Section~\ref{sec:coalgebraicAutomata}, we generalise the concept of run for $F$-coalgebra automata in \cite{CirsteaKupke:CSL2023} from polynomial $F$ to analytic $F$.
In Section~\ref{sec:automatonToAlgebra}, we show how to transform an automaton into a finite coherent algebra recognising the same language restricted to thin coalgebras. This construction works not just for parity automata, but, more generally, for automata with a prefix-agnostic acceptance condition.  
We identify \emph{rational coherent algebras} as the coherent algebras obtained from parity automata.
In Section~\ref{sec:algebraToAutomaton}, we show how to transform a finite coherent algebra into an automaton, 
called the \emph{algebraic automaton}, which unambiguously accepts precisely the thin coalgebras that are recognised by the algebra. 
In order to prove correctness of this construction, we show that runs of the algebraic automaton correspond to certain coalgebra-to-algebra morphisms called \emph{markings}.
The uniqueness of markings, and hence of runs, follows from thin coalgebras being recursive
thanks to their inductive structure.
Finally, in Section \ref{sec:mainResults}, we combine the two constructions to obtain our main result, the transformation of an automaton into an automaton which, over thin coalgebras, is unambiguous and equivalent to the original one. 
In addition, we show that the languages recognised by finite coherent algebras coincide with the languages of thin behaviours accepted by automata with prefix-agnostic acceptance.

We finish the section with a brief example of the significance of our unambiguous automaton construction for model checking. Figure \ref{fig:probabilisticServer} depicts (a variant of) the simple probabilistic server from \cite{CirsteaKupke:CSL2023}. The state diagram on the left consists of a server and a worker. At each step, the server process spawns a worker with probability $\frac 1 5$ and returns to itself. A worker process performs a computation with probability $\frac 7 8$ and finishes otherwise. The type of this system is given by the functor $T \circ F$ where $T$ is the distribution monad and $F$ is a polynomial functor with a binary operation $\mathit{fork}$, two unary operations $\mathit{wait}, \mathit{compute}$ and a nullary operation $\mathit{done}$. On the right we see a possible execution (or trace) of the system. Suppose we are given a property $P$ of system executions, such as ``there exists a worker that never finishes''. The framework~\cite{CirsteaKupke:CSL2023} can then determine the probability with which $P$ holds, as long as $P$ is specified by an automaton that has at most one accepting run on each possible execution. Consider the automaton $\mathcal A$ for $P$ that guesses at each $\mathit{fork}$ whether the worker does not terminate. This automaton is ambiguous, so it cannot be readily used for determining the probability.  Note, however, that all executions of the given system are thin, because the server can spawn at most countably many workers. Therefore we can apply our unambiguous automaton construction to $\mathcal A$, so that the resulting automaton satisfies the desired condition of having at most one accepting run for each possible execution.

\begin{figure}
    \centering
    \begin{subfigure}[b]{0.35\textwidth}
        \centering
        \scalebox{0.9}{
            \begin{tikzcd}[ampersand replacement=\&]
                {\text{server}} \& \bullet \& {\text{worker}} \& {}
                \arrow["{\frac{4}{5}, wait}", from=1-1, to=1-1, loop, in=55, out=125, distance=10mm]
                \arrow["{\frac{1}{5}, fork}", from=1-1, to=1-2]
                \arrow["old", curve={height=-12pt}, from=1-2, to=1-1]
                \arrow["new", from=1-2, to=1-3]
                \arrow["{\frac 7 8, compute}", from=1-3, to=1-3, loop, in=55, out=125, distance=10mm]
                \arrow["{\frac 1 8, done}", from=1-3, to=1-4]
            \end{tikzcd}
        }
        \caption{State diagram}
    \end{subfigure}%
    \begin{subfigure}[b]{0.65\textwidth}
        \centering
        \scalebox{0.9}{
            \begin{tikzcd}[ampersand replacement=\&]
                {\text{server}} \& \bullet \& {\text{server}} \& \bullet \& {\text{server}} \& \dotsb \\
                \& {\text{worker}} \& {} \& {\text{worker}} \& {\text{worker}} \& \dotsb
                \arrow["fork", from=1-1, to=1-2]
                \arrow["old", from=1-2, to=1-3]
                \arrow["new"', from=1-2, to=2-2]
                \arrow["fork", from=1-3, to=1-4]
                \arrow["old", from=1-4, to=1-5]
                \arrow["new"', from=1-4, to=2-4]
                \arrow["wait", from=1-5, to=1-6]
                \arrow["done", from=2-2, to=2-3]
                \arrow["compute", from=2-4, to=2-5]
                \arrow["compute", from=2-5, to=2-6]
            \end{tikzcd}
        }
        \caption{Possible execution (trace)}
    \end{subfigure}
    \caption{Example of a probabilistic server}
    \label{fig:probabilisticServer}
\end{figure}

\section{Preliminaries}\label{sec:preliminaries}
\subsection{Automata and Algebras for Languages of Infinite Words}
We begin by reviewing basics from the classical theory of automata on infinite words~\cite{GraedelEtAl}. There are multiple types of equivalent infinite word automata, but here we focus on \emph{(nondeterministic) parity word automata}. Given a finite \emph{alphabet} $\Sigma$, a nondeterministic parity word automaton is a tuple $\mathcal A = (Q, \delta, Q_I, \Omega)$, where $Q$ is a finite set of states, $\delta: Q \to \Pow(\Sigma \times Q)$ is a transition function, $Q_I \subseteq Q$ is a set of initial states and $\Omega: Q \to \omega$ is a priority function. An \emph{accepting run} of $\mathcal A$ on an infinite word $x = (a_n)_{n\in\omega} \in \Sigma^\omega$ is a sequence of states $(q_n)_{n\in\omega} \in Q^\omega$ such that $q_0 \in Q_I$, $(a_n, q_{n+1}) \in \delta(q_n)$ and $\limsup_{n\in\omega}\Omega(q_n)$ is even, i.e., the largest priority occurring infinitely often is even. An infinite word $x$ is \emph{accepted} by $\mathcal A$ if there exists an accepting run of $\mathcal A$ on $x$. Languages (i.e., sets) of infinite words accepted by a nondeterministic parity word automaton are called \emph{$\omega$-regular}. An automaton $\mathcal A$ is \emph{deterministic} if $Q_I$ is a singleton and for each $q \in Q$ and $a \in \Sigma$, we have a single pair $(a, q_1) \in \delta(q)$. For convenience, we write deterministic parity word automata as $(Q, \delta, q_I, \Omega)$ where $\delta: Q \to Q^\Sigma$ and $q_I \in Q$. An important result is that deterministic parity word automata accept the same languages as all (nondeterministic) parity word automata.

An alternative, algebraic approach to characterising $\omega$-regular languages is via \emph{$\omega$-semigroups}~\cite[Chapter~2]{PerrinPin:2004:InfiniteWords}. An $\omega$-semigroup is a two-sorted algebraic structure $(V, W)$ with three operations $\cdot: V \times V \to V$, $\times: V \times W \to W$, $\Pi: V^\omega \to W$, satisfying certain associativity axioms. In order to get some intuition about $\omega$-semigroups, consider $(\Sigma^+, \Sigma^\omega)$, which is the $\omega$-semigroup freely generated by $\Sigma$. Here $\cdot$ is concatenation between two finite words, $\times$ is concatenation between a finite and an infinite word and $\Pi$ is concatenation of infinitely many finite words. A \emph{homomorphism} between $\omega$-semigroups $(V_1, W_1)$ and $(V_2, W_2)$ is a pair of maps $f = (f_V, f_W)$, where $f_V: V_1 \to V_2$, $f_W: W_1 \to W_2$, that preserves the $\omega$-semigroup operations. The key property of $\omega$-semigroups is that $L \subseteq \Sigma^\omega$ is $\omega$-regular if and only if there exists a finite $\omega$-semigroup $(V, W)$, a homomorphism $f: (\Sigma^+, \Sigma^\omega) \to (V, W)$ and a \emph{recognising set} $U \subseteq W$ such that $L = f_W^{-1}(U)$.

There exist extensions of parity automata from words to other infinite structures, such as binary trees. Instead of considering automata running on some concrete  structures, we will work with \emph{$F$-coalgebra automata} (see Section \ref{sec:coalgebraicAutomata}) that run on \emph{$F$-coalgebras}.

\subsection{F-Coalgebras and F-Algebras}
\emph{$F$-coalgebras}~\cite{Rutten:TCS2000} are a formalism for modelling state-based systems that is parametric in the transition type $F$. Let $F$ be an endofunctor on the category $\Set$. An $F$-coalgebra is a tuple $(X, \xi)$ consisting of an object $X$ and a morphism $\xi: X \to FX$. An \emph{$F$-coalgebra morphism} $f: (X,\xi) \to (Y,\upsilon)$ is a map $f: X \to Y$ (in $\Set$) such that $\upsilon \circ f = Ff \circ \xi$. Informally, $F$-coalgebra morphisms map states in such a way that the transition structure is preserved. $F$-coalgebras, together with $F$-coalgebra morphisms, form a category. A terminal object $(Z, \zeta)$ in this category is called a \emph{final $F$-coalgebra} and its elements can be thought of as \emph{abstract behaviours}. By selecting a \emph{root state} $x_I$ in a coalgebra $(X, \xi)$, we get a \emph{pointed $F$-coalgebra} $(X,\xi,x_I)$. \emph{Pointed $F$-coalgebra morphisms} are $F$-coalgebra morphisms that also preserve the root.

Given F-coalgebras $(X,\xi)$ and $(Y,\upsilon)$, two states $x \in X$ and $y \in Y$ are \emph{behaviourally equivalent} if there exist  $F$-coalgebra morphisms $f_X: (X,\xi) \to (W, \eta)$ and $f_Y: (Y,\upsilon) \to (W,\eta)$ into a third $F$-coalgebra such that $f_X(x) = f_Y(y)$.
Two pointed $F$-coalgebras $(X,\xi,x_I)$, and $(Y,\upsilon, y_I)$ are behaviourally equivalent if $x_I$ and $y_I$ are behaviourally equivalent. Under the assumption that $F$ preserves weak pullbacks, behavioural equivalence amounts to the existence of a \emph{span} of pointed coalgebra morphisms, i.e., a pointed $F$-coalgebra $(R, \rho, s_I)$ with pointed morphisms $f_X: (R,\rho, s_I) \to (X,\xi, x_I)$ and $f_Y: (R, \rho, s_I) \to (Y, \upsilon, y_I)$.

Assuming that $F$ preserves intersections and preimages, there exists a natural transformation $\Base_F: F \Rightarrow \Pow$, where $\Pow$ is the covariant power-set functor (see \cite[Theorem~8.1]{Gumm2005FromTCoalgebrasToFilterStructures}). For $\bar x \in FX$, $\Base_F(\bar x) \subseteq X$ is the least set such that $\bar x \in F(\Base_F(\bar x))$. The notion of base allows us to define \emph{reachable} pointed $F$-coalgebras. These are pointed coalgebras $(X,\xi,x_I)$ where for every $x \in X$, there exists a finite sequence $x_0, x_1, \dotsc, x_n$ such that $x_0 = x_I$, $x_n = x$ and $x_{i+1} \in \Base_F(\xi(x_i))$ for all $i < n$. As the name suggests, every state in a reachable coalgebra can be reached from the root along some transitions. One readily observes that reachable coalgebras come with an induction principle: if $P \subseteq X$ is a property such that $x_I \in P$ and, for all $x \in X$, $x \in P$ implies $\Base(\xi(x)) \subseteq P$, then $P = X$.

\emph{$F$-algebra} is the dual notion of $F$-coalgebra. An $F$-algebra is a pair $(C,\gamma)$ with $\gamma: FC \to C$. An \emph{$F$-algebra morphism} $f: (B,\beta) \to (C,\gamma)$ is then a map $f: B \to C$ with $f \circ \beta = \gamma \circ Ff$. An \emph{initial $F$-algebra} is an initial object in the category of $F$-algebras and $F$-algebra morphisms. An $F$-algebra can be thought of as an algebra with a (generalised) signature $F$, and the elements of an initial $F$-algebra can be seen as terms over this signature.

Given an $F$-coalgebra $(X,\xi)$ and an $F$-algebra $(C,\gamma)$, an \emph{$F$-coalgebra-to-algebra morphism} is a map $f: X \to C$ satisfying $f = \gamma \circ Ff \circ \xi$. An $F$-coalgebra $(X,\xi)$ is \emph{recursive} if for every $F$-algebra $(C, \gamma)$, there exists a unique $F$-coalgebra-to-algebra morphism from $(X,\xi)$ to $(C,\gamma)$. Recursive coalgebras capture the idea of recursion on well-founded relations (see \cite{AdamekEtAlWellFoundedRecursiveCoalgebras} for details).

In this paper, we will work with coalgebras for \emph{analytic functors}.

\subsection{Analytic Functors}
\emph{Analytic functors}~\cite{JoyalFoncteursAnalytiques} (see also \cite{HasegawaAnalyticFunctors}) generalise polynomial functors by allowing 
symmetries of successors, thus including, for instance, the bag functor. They were shown in \cite{ChernevCirsteaHansenKupkeThinCoalg} to be a natural setting for studying \emph{thin coalgebras} (see Section \ref{subsec:thinCoalgebras}). While here we give the basic definitions, we refer the reader to \cite[Sections~II,III]{ChernevCirsteaHansenKupkeThinCoalg} for a more detailed discussion with examples.

Given sets $X, U$ and a group $H$ of permutations on $U$, $H$ \emph{acts} on the set $X^U$ of functions by $\sigma \cdot \phi = \phi \circ \sigma^{-1}$, for $\sigma \in H$ and $\phi \in X^U$. The set of \emph{orbits} of this action is written as $X^U / H$, with elements of the form $[\phi]_H = \{ \psi \in X^U \mid \exists \sigma \in H(\psi = \sigma \cdot \phi) \}$. An analytic functor is a functor of the form $F(X) = \bigsqcup_{i \in I} X^{U_i} / H_i$ where $I$ is an index set, $U_i$ is a finite set and $H_i$ is a group of permutations on $U_i$, for all $i \in I$. Thus elements of $F(X)$ are of the form $(i, [\phi]_{H_i})$. We think of the sets $U_i$ as positions to which we assign data in $X$. These positions can be permuted according to $H_i$.
For a function $f: X \to Y$, $F(f)(i,[\phi]_{H_i}) = (i, [f \circ \phi]_{H_i})$. 

We will use the notion of \emph{functor derivative}~\cite{AbbottEtAl:DifferentiatingDataStructures} for an analytic functor $F$, which models \emph{one-hole contexts} over $F$. Consider the collection of functions $\bigsqcup_{u \in U} X^{U \setminus \{ u\}}$, which can be seen as the collection of partial functions from $U$ to $X$ that are undefined precisely at one element. A group $H$ of permutations on $U$ acts on $\bigsqcup_{u \in U} X^{U \setminus \{u\}}$ by $\sigma \cdot (u, \phi) = (\sigma(u), \phi \circ (\sigma^{-1}|_{U \setminus \{\sigma(u)\}}))$, for $u \in U$, $\phi: U \setminus \{ u \} \to X$. The orbit of an element $(u, \phi) \in  \bigsqcup_{u \in U} X^{U \setminus \{u\}}$ is denoted by $[u, \phi]_{H_i}$. The functor derivative of $F$ is the functor $F'(X) = \bigsqcup_{i \in I}\big(\bigsqcup_{u \in U_i}X^{U_i \setminus \{u\}}\big) / H_i$. Elements of $F'(X)$ are of the form $(i, [u, \phi]_{H_i})$ and are called one-hole contexts, because one position is empty. An element $x \in X$ can be ``plugged'' into a context $(i,[u, \phi]_{H_i}) \in F'X$, resulting in $(i,[\phi \cup \{ (u, x) \}]_{H_i}) \in FX$. Formally, define the \emph{context plug-in} natural transformation $\trig: F' \times \Id \Rightarrow F$ by $\trig_X((i,[u, \phi]_{H_i}), x) \coloneqq (i,[\phi \cup \{ (u, x) \}]_{H_i})$.

\begin{proposition}
\label{prop:trigNaturalitySquareIsWeakPullback}
    The plug-in is weakly cartesian, i.e., every naturality square of $\trig$ is a weak pullback.
\end{proposition}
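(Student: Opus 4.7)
My plan is to verify the weak pullback property by an elementary chase of elements, using the concrete description of analytic functors and their derivatives, and by identifying a single group element that witnesses all the required equalities simultaneously.

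First I would fix $f : X \to Y$ and unfold the naturality square for $\trig$ at $f$: it sends $F'X \times X$ to $FX$ via $\trig_X$, to $F'Y \times Y$ via $F'f \times f$, and finally $F'Y \times Y \to FY$ via $\trig_Y$ and $FX \to FY$ via $Ff$. To prove it is a weak pullback, I would take a compatible pair, i.e. $\bar z \in FX$ and $(c', x') \in F'Y \times Y$ with $Ff(\bar z) = \trig_Y(c', x')$, and construct a lift $(c, x) \in F'X \times X$ with $\trig_X(c, x) = \bar z$ and $(F'f \times f)(c, x) = (c', x')$.

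Next I would pick representatives: write $\bar z = (i, [\psi]_{H_i})$ with $\psi : U_i \to X$, and $c' = (j, [v, \phi']_{H_j})$ with $\phi' : U_j \setminus \{v\} \to Y$. The hypothesis forces $i = j$, and there exists $\sigma \in H_i$ with $(\phi' \cup \{(v, x')\}) = (f \circ \psi) \circ \sigma^{-1}$. I would then define the ``hole position'' $u^\star := \sigma^{-1}(v) \in U_i$, take $x := \psi(u^\star)$, and set $c := (i, [u^\star, \psi|_{U_i \setminus \{u^\star\}}]_{H_i})$. Then $\trig_X(c, x) = (i, [\psi]_{H_i}) = \bar z$ is immediate from the definition, and $f(x) = f(\psi(u^\star)) = (f \circ \psi)(\sigma^{-1}(v)) = x'$ by the choice of $\sigma$.

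Finally I would verify that $(F'f)(c) = c'$, which is the one place where the use of functor derivatives must be handled with care: applying $F'f$ yields the orbit $[u^\star, f \circ \psi|_{U_i \setminus \{u^\star\}}]_{H_i}$, and the very same $\sigma$ acts on this representative to give $(\sigma(u^\star), (f \circ \psi|_{U_i \setminus \{u^\star\}}) \circ \sigma^{-1}|_{U_i \setminus \{v\}}) = (v, \phi')$, using that $\sigma(u^\star) = v$ and that for $u \in U_i \setminus \{v\}$ we have $(f \circ \psi)(\sigma^{-1}(u)) = \phi'(u)$ by the chosen $\sigma$. I expect the only genuinely delicate step to be this bookkeeping around the group action on one-hole contexts; the key insight that makes everything line up is that the same witness $\sigma$ of the equality in $FY$ must be reused as the witness of the equality in $F'Y$, which is exactly what allows the hole position $u^\star$ to be transported correctly.
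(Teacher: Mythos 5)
Your proposal is correct and follows essentially the same route as the paper's proof: unfold the square element-wise, pick representatives, use the orbit equality in $FY$ to locate the hole position, and build the lift by restricting $\psi$ away from that position. The only difference is presentational — where the paper says ``without loss of generality'' and silently re-chooses representatives, you explicitly carry the witnessing permutation $\sigma$ through all three verifications, which is exactly the bookkeeping the paper's WLOG is hiding.
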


We often use the following notational convention: given a set $X$, write $x \in X$, $\bar x \in FX$ and $\bar x' \in F'X$. 

Analytic functors and their derivatives satisfy the conditions for the existence of a base. Concretely, their base is given by $\Base_F([\phi]_{H_i}) = \Im(\phi)$ and $\Base_{F'}([u, \phi]) = \Im(\phi)$ for $\phi \in FX$, $[u, \phi]_{H_i} \in F'X$. We have the property $\Base_F(\trig_X(\bar x', x)) = \Base_{F'}(\bar x') \cup \{x\}$, for $\bar x' \in F'X$ and $x \in X$. Moreover, if $\bar x \in FX$ and $x \in \Base_F(\bar x)$, there exists a (not necessarily unique) $\bar x' \in F'X$ with $\trig_X(\bar x', x) = \bar x$ (see \cite{ChernevCirsteaHansenKupkeThinCoalg} for details). Analytic functors also preserve weak pullbacks.

We introduce a new \emph{context decomposition} natural transformation. Intuitively, context decomposition $\decomp: F \Rightarrow F(F' \times \Id)$ does the opposite of context plug-in: it gives all possible ways to split $\bar x \in FX$ into a context in $\bar x' \in F'X$ and an element in $x \in X$. Moreover, it organises all decompositions $(\bar x', x)$ of $\bar x$ into an $F$-structure, based on the position of the context hole. For each such decomposition $(\bar x', x)$, think of $\bar x'$ as the context of siblings of $x$ in $\bar x$. This will be essential in Definition \ref{def:algebraicAutomaton} (the algebraic automaton).

\begin{definition}
\label{def:decomp}
    Given an analytic functor $F = \bigsqcup_{i \in I} X^{U_i} / H_i$, define the \emph{context decomposition} natural transformation $\decomp: F \Rightarrow F(F' \times \Id)$ as follows:
    \begin{equation*}
        \decomp_X(i, [\phi]_{H_i}) \coloneqq (i, [\psi]_{H_i}), \quad \text{where} \quad 
        \psi: U_i \to F'X \times X, \quad
        \psi(u) \coloneqq ((i, [u, \phi \setminus \{ \langle u, \phi(u) \rangle \} ]_{H_i}), \phi(u)).
    \end{equation*}
\end{definition}

\begin{example}
    Take $F(X) = \Sigma \times X^3$ where $\Sigma = \{a, b\}$. Then $F'(X) = \Sigma \times 3 \times X^2$, and for $X = \{x_0, x_1, x_2\}$ and $(a, x_0, x_1, x_2) \in FX$, we have:
    \begin{equation*}
        \decomp_X(a, x_0, x_1, x_2) = (a, ((a, 0, x_1, x_2), x_0), ((a, 1, x_0, x_2), x_1), ((a, 2, x_0, x_1), x_2)).
    \end{equation*}
\end{example}

\begin{example}
    Take $F = \Bag_3$, the bag functor where the bag size is bounded by $3$, i.e., $F = \bigsqcup_{n \leq 3} X^{n} /H$ where $H$ is the symmetric group on $n$. Then $F' = \Bag_2$. We use the notation $\{ \dotsc \}_b$ for bags. For $X = \{ x_0, x_1, x_2 \}$, we have:
    \begin{equation*}
        \decomp_X(\{x_0, x_0, x_1 \}_b) = \{ (\{x_0, x_1 \}_b, x_0), (\{ x_0, x_1\}_b,x_0), (\{x_0, x_0\}_b, x_1) \}_b.
    \end{equation*}
\end{example}

In order to avoid working with the concrete definition of $\decomp$, we identify its key abstract properties. Below we write $\prj{1}$ and $\prj{2}$ for product projections (later, we also write $\inj{1}$ and $\inj{2}$ for coproduct injections).

\begin{lemma}
\label{lem:decompProperties}
    Context decomposition $\decomp$ satisfies:
    \begin{enumerate}
        \item[(i)] $F\prj{2} \circ \decomp_X = \id$ (see Figure \ref{fig:decompProp1});
        \item[(ii)] for every element $\upsilon = (\bar y', y) \in F'(F'X \times X) \times (F'X \times X)$ with $\trig_{F'X \times X}(\upsilon) \in \decomp_X[FX]$, we have $F'\prj{2}(\bar y') = \prj{1}(y)$ (see Figure \ref{fig:decompProp2});
        \item[(iii)] if $\bar x \in FX$ and $(\bar x', x) \in \Base_{F}(\decomp_X(\bar x))$, then $\trig_X(\bar x', x)  = \bar x$.
    \end{enumerate}
    \begin{figure}[ht]
        \centering
        \begin{subfigure}[b]{0.5\textwidth}
            \centering
            \begin{tikzcd}
                FX & {F(F'X \times X)} \\
                & FX
                \arrow["\decomp_X", from=1-1, to=1-2]
                \arrow["\id"', from=1-1, to=2-2]
                \arrow["{F\prj{2}}", from=1-2, to=2-2]
            \end{tikzcd}
            \caption{Property (i)}
            \label{fig:decompProp1}
        \end{subfigure}%
        \begin{subfigure}[b]{0.5\textwidth}
            \centering
            \begin{tikzcd}
                1 & {F'(F'X\times X)} \\
                {F'X\times X} & {F'X}
                \arrow["{\prj{1} \circ \upsilon}", from=1-1, to=1-2]
                \arrow["{\prj{2} \circ \upsilon}"', from=1-1, to=2-1]
                \arrow["{F'\prj{2}}", from=1-2, to=2-2]
                \arrow["{\prj{1}}"', from=2-1, to=2-2]
            \end{tikzcd}
            \caption{Property (ii)}
            \label{fig:decompProp2}
        \end{subfigure}
        \caption{Diagrams for Lemma \ref{lem:decompProperties}.}
        \label{fig:decompProperties}
    \end{figure}
\end{lemma}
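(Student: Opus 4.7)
All three properties are fundamentally definitional: the proof proceeds by fixing a representative $\bar x = (i, [\phi]_{H_i}) \in FX$ and unfolding $\decomp_X(\bar x) = (i, [\psi]_{H_i})$ with $\psi(u) = ((i, [u, \phi \setminus \{\langle u, \phi(u)\rangle\}]_{H_i}), \phi(u))$. The only genuine subtlety is bookkeeping of the group action, which is localised to part (ii).

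For (i), I observe that $\pi_2 \circ \psi = \phi$ by the definition of $\psi$, so $F\pi_2(i,[\psi]_{H_i}) = (i, [\pi_2 \circ \psi]_{H_i}) = (i,[\phi]_{H_i}) = \bar x$. This is a one-line calculation.

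For (ii), I first write $\bar y' = (j, [v, \rho]_{H_j})$ with $\rho : U_j \setminus \{v\} \to F'X \times X$, so that $\trig_{F'X \times X}(\bar y', y) = (j, [\rho \cup \{\langle v, y\rangle\}]_{H_j})$. The hypothesis that this lies in the image of $\decomp_X$ forces $j = i$ and the existence of a permutation $\sigma \in H_i$ witnessing $\rho \cup \{\langle v, y\rangle\} = \sigma \cdot \psi$ for some $\phi$ as above. After replacing $\phi$ by $\sigma^{-1} \cdot \phi$ (which is allowed since we only care about orbits), I may assume $v \in U_i$ with $y = \psi(v)$ and $\rho = \psi|_{U_i \setminus \{v\}}$. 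Then a direct computation gives $F'\pi_2(\bar y') = (i, [v, \pi_2 \circ \rho]_{H_i}) = (i, [v, \phi|_{U_i \setminus \{v\}}]_{H_i})$, while $\pi_1(y) = \pi_1(\psi(v)) = (i, [v, \phi \setminus \{\langle v, \phi(v)\rangle\}]_{H_i})$, and these agree as orbits. The delicate point is ensuring that the equality of orbits $[\rho \cup \{\langle v,y\rangle\}]_{H_i} = [\psi]_{H_i}$ transports correctly to an equality in the derivative $F'$ after applying $F'\pi_2$; this is handled by choosing the representative $\phi$ consistently with $\sigma$, as above.

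For (iii), I use the explicit formula $\Base_F([\psi]_{H_i}) = \Im(\psi)$ stated in the preliminaries. Thus $(\bar x', x) \in \Base_F(\decomp_X(\bar x))$ means $(\bar x', x) = \psi(u)$ for some $u \in U_i$, giving $\bar x' = (i, [u, \phi \setminus \{\langle u, \phi(u)\rangle\}]_{H_i})$ and $x = \phi(u)$. Applying $\trig_X$ plugs $x = \phi(u)$ back into the hole at position $u$, recovering $(i,[\phi]_{H_i}) = \bar x$.

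The main obstacle, modest as it is, is the orbit bookkeeping in (ii): one must fix a single representative $\phi$ that simultaneously describes $\bar y'$ and $y$, and verify that the equations claimed are equations of orbits rather than of raw representatives. Once that representative is fixed, each of (i)--(iii) reduces to rewriting according to the definitions of $\psi$ and $\trig$.
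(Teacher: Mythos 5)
Your proof is correct and follows essentially the same route as the paper's: unfold $\decomp_X$ on a representative $\phi$, read off (i) from $\prj{2}\circ\psi=\phi$, handle (ii) by a without-loss-of-generality choice of representative so that the orbit equality becomes an equality of functions, and derive (iii) from $\Base_F([\psi]_{H_i})=\Im(\psi)$. The only cosmetic difference is that you normalise the representative $\phi$ by the witnessing permutation $\sigma$, whereas the paper normalises the representative $(u,\chi)$ of the context; both resolve the same orbit-bookkeeping issue.
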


Property (i) completely describes the content of the $\Id$-component of $\decomp_X(\bar x) \in F(F' \times \Id)$. Together with property (i), property (ii) completely describes the $F'$-component. Thus these two properties can be taken as an abstract, equivalent definition of $\decomp$. Property (iii) follows from (i) and (ii) and it conveys our intuitive understanding that $\decomp$ decomposes $\bar x \in FX$ into pairs of an element $x$ and its siblings $\bar x'$.

We apply the concept of \emph{relation lifting}~\cite{KurzVelebilRelationLifting} for analytic functors. Specifically, we will use the lifting $\inlift$ of the ``element of'' relation $\in$.
Given a set $X$, $p \in FX$ and $q \in F(\Pow(X))$, we have $p \inlift q$ if there exists $r \in F({\in})$ such that $F\prj{1}(r) = p$ and $F\prj{2}(r) = q$. Informally, $p \inlift q$ means ``$p$ and $q$ have matching indices in $I$ and $p$ is position-wise contained in $q$''. The parameters $F$ and $X$, on which $\inlift$ depends, are left implicit and understood from the context.

\textbf{Assumption}. For the rest of the paper, we fix an analytic functor $FX = \bigsqcup_{i \in I} X^{U_i} / H_i$ where $I$ is finite. This ensures that $F$ and $F'$ preserve finite sets.

\subsection{Thin Coalgebras}
\label{subsec:thinCoalgebras}
We are interested in running $F$-automata on a subclass of $F$-coalgebras called \emph{thin $F$-coalgebras}~\cite{ChernevCirsteaHansenKupkeThinCoalg}. Thin coalgebras generalise the notion of \emph{thin tree}~\cite{Skrzypczak2016} to the level of coalgebras. They are defined as those $F$-coalgebras for which every state is the starting point of only countably many infinite paths. More precisely, given a $F$-coalgebra $(X,\xi)$ and $x \in X$ with $\xi(x) = (i, [\phi]_{H_i})$, we say that an element $x_1 \in \Base_F(i, [\phi]_{H_i}) = \Im(\phi)$ is a \emph{successor} of $x$ with multiplicity $|\phi^{-1}(x_1)|$. The successor relation on $(X,\xi)$ defines a multigraph, with multiplicities corresponding to multiple parallel edges. A state $x \in X$ is \emph{thin} if there are only countably many infinite paths starting from $x$ in this multigraph. A (pointed) coalgebra is \emph{thin} if all its states are thin.

Behaviours of thin coalgebras can be characterised algebraically via \emph{coherent $(F+G)$-algebras}. Define the functor $G(X) \coloneqq (F'X)^\omega$, mapping $X$ to the set of streams of contexts over $X$. An $(F+G)$-algebra is of the form $(C, \gamma)$, with $\gamma = [\gamma_0, \gamma_1]$, where $\gamma_0: FX \to X$ is an $F$-algebra structure and $\gamma_1: GX \to X$ is a $G$-algebra structure. An $(F+G)$-algebra $(C,\gamma)$ is \emph{coherent} if it satisfies the equation $\gamma_1 = \gamma_0 \circ \trig_C \circ \langle \id, \gamma_1 \rangle \circ \langle \head, \tail \rangle$, where $\head$ stands for stream head and $\tail$ stands for stream tail. Roughly, the equation says ``evaluating a stream with $\gamma_1$ is equal to evaluating the stream tail, plugged into the stream head, with $\gamma_0$''. \cite[Corollary~VII.6]{ChernevCirsteaHansenKupkeThinCoalg} shows that the initial coherent $(F+G)$-algebra is isomorphic to the collection of behaviours of thin coalgebras.

The initial coherent $(F+G)$-algebra is given concretely as follows. Fix an initial $(F+G)$-algebra $(A, \alpha = [\alpha_0, \alpha_1])$ and a final $F$-coalgebra $(Z,\zeta)$ (their existence is proven in \cite{ChernevCirsteaHansenKupkeThinCoalg}). There exists a natural way to interpret terms $a \in A$ in $Z$. Informally speaking, for $\bar a \in FA$, $a = \alpha_0(\bar a)$ is interpreted as a state with successors $\bar a$; for $(\bar a'_n)_{n\in\omega} \in GA$, $a = \alpha_1((\bar a'_n)_{n \in \omega})$ is interpreted by successively plugging all contexts $(\bar a'_n)_{n\in\omega}$ into each other, i.e., plugging $\bar a'_1$ into $\bar a'_0$, $\bar a'_2$ into $\bar a'_1$, $\bar a'_3$ into $\bar a'_2$ and so on. This is formalised by defining a suitable $(F+G)$-algebra structure $\beta = [\beta_0,\beta_1]$ on $Z$ and taking the \emph{semantics map} $\interpr-: (A, \alpha) \to (Z, \beta)$ to be the unique map obtained by initiality of $(A,\alpha)$. By taking the image $\thin Z \subseteq Z$\footnote{The superscript ${\mathit{\text{\th}}}$ is pronounced as ``thin''. The letter \emph{thorn} \text{\th} denotes a dental fricative (e.g., the first sound in ``thin'') in Old English~\cite{Jekiel2012DentalFricatives}.} of the semantics map, one obtains both an $(F+G)$-subalgebra $(\thin Z, \thin \beta)$ of $(Z,\beta)$ and an $F$-subcoalgebra $(\thin Z, \thin \zeta)$ of $(Z, \zeta)$. 
Figure \ref{fig:initialCoherentAlg} gives a visual summary.
\begin{figure}
    \centering
    \begin{tikzcd}
        {(F+G)A} & {(F+G)\thin Z} & {(F+G)Z} \\
        A & {\thin Z} & Z \\
        & {F\thin Z} & FZ
        \arrow["{{(F+G)\interpr-}}", from=1-1, to=1-2]
        \arrow["\alpha"', from=1-1, to=2-1]
        \arrow[from=1-2, to=1-3]
        \arrow["{{\thin \beta}}"', from=1-2, to=2-2]
        \arrow["\beta", from=1-3, to=2-3]
        \arrow["{\interpr-}", two heads, from=2-1, to=2-2]
        \arrow[hook, from=2-2, to=2-3]
        \arrow["{{\thin \zeta}}"', from=2-2, to=3-2]
        \arrow["\zeta", from=2-3, to=3-3]
        \arrow[from=3-2, to=3-3]
    \end{tikzcd}
    \caption{Algebra and coalgebra on $\thin Z$.}
    \label{fig:initialCoherentAlg}
\end{figure}
We have that $(\thin Z, \thin \beta)$ is an initial coherent $(F+G)$-algebra, i.e., for every coherent $(F+G)$-algebra $(C,\gamma)$, there exists a unique $(F+G)$-algebra morphism $\cev_{(C,\gamma)}: (\thin Z, \thin \beta) \to (C, \gamma)$. Moreover, $(\thin Z, \thin \zeta)$ is a final thin coalgebra, meaning that for every thin coalgebra $(X, \xi)$, there exists a unique $F$-coalgebra morphism $\tbeh_{(X,\xi)}: (X,\xi) \to (\thin Z, \thin \zeta)$. In other words, $(\thin Z, \thin \zeta)$ is the subcoalgebra of all \emph{thin behaviours}, i.e., behaviours of thin coalgebras.  Furthermore, $\thin Z$ is isomorphic to the collection of \emph{normal terms}~\cite[Section~V]{ChernevCirsteaHansenKupkeThinCoalg}: each $z \in \thin Z$ has a canonical \emph{normal representative} $a \in A$ with $\interpr{a} = z$. We have two useful properties connecting $\thin \beta$ and $\thin \zeta$:
\begin{align}
    &\thin \zeta = (\thin \beta_0)^{-1}, \label{eq:betaZetaProp1} \\
    &(z_m)_{m\in\omega} \in (\thin Z)^\omega, (\bar z'_{m})_{m > 0} \in (F'\thin Z)^\omega, \forall m \in \omega (\trig_{\thin Z}(\bar z'_{m+1},z_{m+1}) = \thin \zeta(z_m)) \Longrightarrow \thin \beta_1((\bar z'_m)_{m > 0}) = z_0. \label{eq:betaZetaProp2}
\end{align}

In the present work, we are interested in the language recognition aspect of $(F+G)$-algebras. Given a coherent $(F+G)$-algebra $(C,\gamma)$ and $U \subseteq C$, the \emph{language} of the triple $(C,\gamma,U)$ is defined as $L(C,\gamma,U) \coloneqq \cev^{-1}_{(C,\gamma)}(U) \subseteq \thin Z$. Hence coherent algebras recognise languages of thin behaviours, similarly to how $\omega$-semigroups recognise languages of infinite words. We refer to such a triple $(C,\gamma,U)$ as a coherent $(F+G)$-algebra with a \emph{recognising set}.

\section{Runs and Unambiguity of F-Coalgebra Automata}
\label{sec:coalgebraicAutomata}
In this section, we present \emph{$F$-coalgebra automata} (for brevity, $F$-automata), which were studied in \cite{KupkeVenema2008CoalgebraicAutomata} as automata accepting $F$-coalgebras. We define acceptance of $F$-automata via the notion of \emph{run}, in contrast with \cite{KupkeVenema2008CoalgebraicAutomata}, which defines acceptance via parity games. Our reason for introducing runs is to be able to define \emph{unambiguous $F$-automata}. While the two definitions of acceptance (via runs and via parity games) appear to coincide, we do not show it in this paper, as we work exclusively with runs. We note that a similar definition of $F$-automaton runs and unambiguity is given in \cite{CirsteaKupke:CSL2023}, but only for polynomial functors $F$.

\begin{definition}
    An \emph{$F$-automaton} is a quadruple $\mathcal{A} = (Q, \delta, Q_I, \Acc)$ where 
    $Q$ is a finite set of \emph{states}, $\delta \colon Q \to (\Pow \circ F)(Q)$ is a \emph{transition function}.
    $Q_I \subseteq Q$ is a set of \emph{initial states}, and $\Acc \subseteq Q^\omega$ is an \emph{acceptance condition}.
\end{definition}

According to the above definition, $F$-automata are, in general, nondeterministic, i.e., every state $q \in Q$ has an arbitrary set $\delta(q)$ of transitions and there are multiple initial states $Q_I$. We do not put any restrictions on the acceptance condition; instead, we distinguish the following types of acceptance conditions.

\begin{definition}
    Let $\mathcal A = (Q, \delta, Q_I, \Acc)$ be an $F$-automaton. We call $\Acc$:
    \begin{itemize}
        \item \emph{parity} if there exists a map $\Omega: Q \to \mathbb N$ such that $(q_n)_{n \in \omega} \in \Acc$ if and only if $\limsup_{n \in \omega} \Omega(q_n)$ is even;
        \item \emph{$\omega$-regular} if $\Acc$ is an $\omega$-regular language over the alphabet $Q$;
        \item \emph{prefix-agnostic} if for all $x \in Q^\omega$, $w \in Q^*$: $wx \in \Acc$ if and only if $x \in \Acc$.
    \end{itemize}
\end{definition}

$F$-automata with a parity acceptance condition are known as \emph{parity $F$-automata} and we write them as a tuple $(Q, \delta, Q_I, \Omega)$, with $\Omega$ instead of $\Acc$. By taking the polynomial functor $F(X) = \Sigma \times X$, for some alphabet $\Sigma$, we obtain nondeterministic parity word automata.

Since parity word automata recognise $\omega$-regular languages, one can see that every parity condition is also $\omega$-regular. Conversely, every $F$-automaton with $\omega$-regular acceptance can be turned into an equivalent parity $F$-automaton via the \emph{wreath product} construction~\cite[Theorem~4.4]{KupkeVenema2008CoalgebraicAutomata}.
Parity conditions are also prefix-agnostic, but automata with the prefix-agnostic conditions turn out to be strictly more expressive, as shown below.

\begin{example}
\label{ex:automatonNonRegularLanguage}
    Consider the functor $F(X) = \Sigma \times X$, for $\Sigma = \{ a, b \}$, whose derivative is $F'(X) = \Sigma$. Define the (word) $F$-automaton $\mathcal A = (Q, \delta, Q_I, \Acc)$ with $Q \coloneqq \{ q_a, q_b \}$, $\delta(q) \coloneqq \{ (a, q_a), (b, q_b) \}$ for all $q \in Q$, $Q_I \coloneqq Q$. Let $\Acc \subseteq Q^\omega$ consist of those infinite words that contain infinitely many $q_b$'s and unboundedly many consecutive $q_a$'s, i.e., for every natural number $n$, the word contains $n$-many consecutive $q_a$'s. One readily sees that $\Acc$ is prefix-agnostic and $\mathcal A$ accepts (in the classical sense) the language $L \coloneqq \{ a_0 a_1 \dotsc \mid q_{a_0} q_{a_1} \dotsc \in \Acc\}$. However, $L$ is not $\omega$-regular. This is because every non-empty $\omega$-regular language contains an ultimately periodic word, i.e., a word of the form $wu^\omega$, while $L$ contains no such words. This example shows that automata with prefix-agnostic acceptance are more expressive than parity automata.
\end{example}

Next, we define $F$-automaton runs and unambiguity, thereby generalising the definitions in \cite{CirsteaKupke:CSL2023} from polynomial functors to arbitrary analytic functors. Below we write $\Delta_Y$ for the constant functor sending every set to the set $Y$ and every function to $\id_Y$.

\begin{definition}
    Let $\mathcal{A} = (Q, \delta, Q_I, \Acc)$ be an $F$-automaton and $(X, \xi, x_I)$ be a pointed $F$-coalgebra. A \emph{pre-run} of $\mathcal A$ on $(X, \xi, x_I)$ is a reachable pointed $(F \times \Delta_X \times \Delta_Q)$-coalgebra $(R, \rho = \langle \rho_F, \rho_X, \rho_Q \rangle, r_I)$ satisfying:
    \begin{enumerate}
        \item[(i)] $\rho_X: (R, \rho_F, r_I) \to (X, \xi, x_I)$ is a pointed $F$-coalgebra morphism;
        \item[(ii)] $(F\rho_Q \circ \rho_F)(r) \in (\delta \circ \rho_Q)(r)$ for all $r \in R$;
        \item[(iii)] $(\rho_Q(r_n))_{n\in\omega} \in \Acc$ for all $(r_n)_{n\in\omega} \in R^\omega$ with $r_0 = r_I$ and $\forall n (r_{n+1} \in \Base_F(\rho_F(r_n))$.
    \end{enumerate}
    We define a \emph{run} as a pre-run $(R,\rho,r_I)$ for which $(R,\rho)$ is a subcoalgebra of the final $(F \times \Delta_X \times \Delta_Q)$-coalgebra. A (pre-)run is \emph{accepting} if $\rho_Q(r_I) \in Q_I$. We say that $\mathcal A$ \emph{accepts} $(X, \xi, x_I)$ if there exists an accepting run of $\mathcal A$ on $(X, \xi, x_I)$.
\end{definition}

\begin{definition}
    Let $\mathcal A$ be an $F$-automaton and $(X, \xi, x_I)$ be a pointed $F$-coalgebra. We say that $\mathcal A$ is \emph{unambiguous} on $(X, \xi, x_I)$ if $\mathcal A$ has at most one accepting run on $(X, \xi, x_I)$.
\end{definition}

A pre-run of $\mathcal A$ on $(X,\xi, x_I)$ represents an execution of $\mathcal A$ on the structure of $(X,\xi,x_I)$. The pre-run yields a span
\begin{tikzcd}
    X & R & Q
    \arrow["{\rho_X}"', from=1-2, to=1-1]
    \arrow["{\rho_Q}", from=1-2, to=1-3]
\end{tikzcd}
where $R$ is equipped with an $F$-coalgebra structure $\rho_F\colon R \to F(R)$.
Property (i) says that $\rho_X\colon R \to X$ respects the coalgebra structure of $X$; 
property (ii) says that $\rho_Q \colon R \to Q$ respects the automaton transitions; and property (iii) says that $\rho_Q$ respects the automaton acceptance condition. We note that, classically, automaton runs need not respect the acceptance condition (they are called \emph{final} if they do) but in this paper, all (pre-)runs are required to be final. Runs have the additional property that behaviourally equivalent
elements are identified. This is essential for the definition of unambiguous automata, where we count the number of accepting runs of an automaton modulo behavioural equivalence. Intuitively, runs are as close as possible to being a relation between $X$ and $Q$. However, there could be elements $r_1 \neq r_2 \in R$ of a run $R$ with $\rho_X(r_1) = \rho_X(r_2)$ and $\rho_Q(r_1) = \rho_Q(r_2)$, since $r_1$ and $r_2$ can still differ on $\rho_F$.

\begin{figure}
    \centering
    \begin{subfigure}[b]{0.20\textwidth}
        \centering
        \begin{tikzcd}
            {x_1} \\
            {x_2} & {x_3}
            \arrow[from=1-1, to=2-1]
            \arrow[from=1-1, to=2-2]
            \arrow[from=2-2, to=2-2, loop, in=50, out=0, distance=7mm]
        \end{tikzcd}
        \caption{Pointed coalgebra}
        \label{fig:pointedCoalgebraExample}
    \end{subfigure}%
    \begin{subfigure}[b]{0.47\textwidth}
        \centering
        \begin{tikzcd}
            {r_1} \\
            {r_2} & {r_3} & {r_4} & {r_5} & \dotsb
            \arrow[from=1-1, to=2-1]
            \arrow[from=1-1, to=2-2]
            \arrow[from=2-2, to=2-3]
            \arrow[from=2-3, to=2-4]
            \arrow[from=2-4, to=2-5]
        \end{tikzcd}
        \caption{Pre-run}
        \label{fig:prerunExample}
    \end{subfigure}%
    \begin{subfigure}[b]{0.32\textwidth}
        \centering
        \begin{tikzcd}
            {r_1} \\
            {r_2} & {r_3} & {r_4} & {r_5}
            \arrow[from=1-1, to=2-1]
            \arrow[from=1-1, to=2-2]
            \arrow[from=2-2, to=2-3]
            \arrow[from=2-3, to=2-4]
            \arrow[curve={height=20pt}, from=2-4, to=2-2]
        \end{tikzcd}
        \caption{Run}
        \label{fig:runExample}
    \end{subfigure}
    \caption{(Pre-)runs from Example \ref{ex:preruns}}
    \label{fig:prerunAndRunExample}
\end{figure}

\begin{example}
\label{ex:preruns}
    Consider the polynomial functor $FX = X^2 + X + 1$. Figure \ref{fig:pointedCoalgebraExample} depicts a pointed $F$-coalgebra $(X, \xi, x_1)$ with $\xi(x_1) = (x_2, x_3)$, $\xi(x_2) = ()$ and $\xi(x_3) = (x_3)$. Consider the $F$-automaton $\mathcal A = (Q, \delta, Q_I, \Acc)$ with
    $Q = \{q_1, q_2, q_3, q_4 \}$, $Q_I \coloneqq \{q_1\}$, $\delta(q_1) = \{ (q_2, q_3) \}$, $\delta(q_2) = \{ () \}$, $\delta(q_3) = \{ (q_3), (q_4) \}$, $\delta(q_4) = \{ (q_3) \}$ and $\Acc$ containing all $(q_n)_{n \in \omega} \in Q^\omega$ where $q_n = q_3$ for infinitely many $n \in \omega$. Figure \ref{fig:prerunExample} depicts the $F$-structure $\rho_F$ of an accepting pre-run $(R, \rho = \langle \rho_F, \rho_X, \rho_Q \rangle, r_1)$, with $\rho_X$ and $\rho_Q$ given by:
    \begin{align*}
        \rho_X:& \qquad r_1 \mapsto x_1, \quad r_2 \mapsto x_2, \quad \{ r_3, r_4, \dotsc \} \mapsto x_3, \\
        \rho_Q:& \qquad r_1 \mapsto q_1, \quad r_2 \mapsto q_2, \quad \{ r_3, r_4, r_6, r_7, \dotsc, r_{3n}, r_{3n+1}, \dotsc \} \mapsto q_3, \quad \{ r_5, r_8, \dotsc, r_{3n+2}, \dotsc \} \mapsto q_4.
    \end{align*}
    Note that $(R, \rho, r_1)$ is not (isomorphic to) a run, because $r_3$ and $r_6$ are behaviourally equivalent. Figure \ref{fig:runExample} shows the $F$-structure of another accepting pre-run $(R', \rho' = \langle \rho_F', \rho_X', \rho_Q' \rangle, r_1)$, with $\rho'_X$ and $\rho'_Q$ given by:
    \begin{align*}
        \rho_X':& \qquad r_1 \mapsto x_1, \quad r_2 \mapsto x_2, \quad \{ r_3, r_4, r_5 \} \mapsto x_3, \\
        \rho_Q':& \qquad r_1 \mapsto q_1, \quad r_2 \mapsto q_2, \quad \{r_3, r_4\} \mapsto q_3, \quad  r_5 \mapsto q_4.
    \end{align*}
    One can check that $(R', \rho', r_1)$ is isomorphic to a run, since no two elements of $R'$ are behaviourally equivalent. Moreover, by merging $r_3$ and $r_4$ (i.e., setting $\rho_F'(r_3) = (r_5)$ and dropping $r_4$), we obtain another accepting run. Therefore $\mathcal A$ is not unambiguous on $(X,\xi,x_1)$.
\end{example}

Below we state two basic properties of (pre-)runs.

\begin{lemma}
\label{lem:preRunImpliesAccepted}
    Let $\mathcal A$ be an $F$-automaton and $(X, \xi, x_I)$ be a pointed $F$-coalgebra. If $(R, \rho, r_I)$ is an accepting pre-run of $\mathcal A$ on $(X,\xi,x_I)$, then its image in the final $(F \times \Delta_X \times \Delta_Q)$-coalgebra is an accepting run. Hence $\mathcal A$ accepts $(X, \xi, x_I)$ if and only if $\mathcal A$ has an accepting pre-run on $(X, \xi, x_I)$.
\end{lemma}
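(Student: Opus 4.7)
The plan is to show that the image of the pre-run $(R,\rho,r_I)$ under the unique coalgebra morphism $h\colon(R,\rho)\to(Z',\zeta')$ into the final $(F\times\Delta_X\times\Delta_Q)$-coalgebra is itself a pre-run; since this image is automatically a subcoalgebra of $(Z',\zeta')$, it will then be a run by definition, and preservation of acceptance by $h$ (i.e., the root going to an initial state) makes it accepting. The converse direction in the ``hence'' statement is trivial because every run is a pre-run.

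Concretely, write $H=F\times\Delta_X\times\Delta_Q$, let $R'=h[R]$, and let $\rho'=\langle\rho'_F,\rho'_X,\rho'_Q\rangle\colon R'\to HR'$ be the restriction of $\zeta'$, so that $h$ factors as a surjective coalgebra morphism $(R,\rho)\twoheadrightarrow(R',\rho')$. Since $H$ acts as the identity on the $\Delta_X$ and $\Delta_Q$ components, the equations $\rho_X=\rho'_X\circ h$ and $\rho_Q=\rho'_Q\circ h$ hold automatically. I would then verify the three pre-run properties for $(R',\rho',h(r_I))$ one by one. Property (i) follows from the pre-run property (i) for $R$, the identity $\rho_X=\rho'_X\circ h$, the fact that $h$ intertwines $\rho_F$ and $\rho'_F$, and surjectivity of $h$ to cancel $h$ from the equation $\xi\circ\rho'_X\circ h=F\rho'_X\circ\rho'_F\circ h$. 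Property (ii) follows pointwise: given $r'\in R'$, choose $r\in R$ with $h(r)=r'$ and unfold the definitions to reduce to property (ii) for $R$, using $F(\rho'_Q)\circ Fh=F\rho_Q$. The pointedness condition, and the ``accepting'' property $\rho'_Q(h(r_I))\in Q_I$, likewise reduce directly to the corresponding facts for $R$, and reachability of $(R',\rho'_F,h(r_I))$ is immediate from the image of a reachability path in $R$.

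The main obstacle, and the only nontrivial step, is property (iii), which requires lifting an arbitrary infinite path $(r'_n)_{n\in\omega}$ in $R'$ from $h(r_I)$, along $\Base_F(\rho'_F(-))$, to a path $(r_n)_{n\in\omega}$ in $R$ starting at $r_I$ with $h(r_n)=r'_n$. Here I use naturality of $\Base_F$: since $\rho'_F\circ h=Fh\circ\rho_F$, we have $\Base_F(\rho'_F(h(r)))=h[\Base_F(\rho_F(r))]$ for every $r\in R$. Setting $r_0=r_I$ (note $h(r_0)=h(r_I)=r'_0$), I inductively choose $r_{n+1}\in\Base_F(\rho_F(r_n))$ with $h(r_{n+1})=r'_{n+1}$, which is possible exactly because of the above equality. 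This uses dependent choice, which is unproblematic. Applying property (iii) for $R$ to $(r_n)_{n\in\omega}$ gives $(\rho_Q(r_n))_{n\in\omega}\in\Acc$, and since $\rho_Q(r_n)=\rho'_Q(r'_n)$, property (iii) for $R'$ follows.

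Having verified all properties, $(R',\rho',h(r_I))$ is a pre-run, and since $R'$ is a subcoalgebra of the final $H$-coalgebra, it is a run; it is accepting since $\rho'_Q(h(r_I))=\rho_Q(r_I)\in Q_I$. This yields the first claim, and the biconditional in the ``hence'' follows immediately: an accepting run is in particular an accepting pre-run, and an accepting pre-run yields an accepting run via the construction just described.
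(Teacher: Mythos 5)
Your proposal is correct and follows essentially the same route as the paper: the paper also takes the image in the final $(F\times\Delta_X\times\Delta_Q)$-coalgebra and verifies that the pre-run properties transfer along the surjection, with the same path-lifting argument (via surjectivity and naturality of $\Base_F$) for property (iii). The only difference is organizational: the paper factors the transfer of properties (i)--(iii) along coalgebra morphisms into a separate reusable lemma, whereas you verify them inline.
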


\begin{proposition}
\label{prop:acceptanceInvariantUnderMorphisms}
    Let $\mathcal A$ be an $F$-automaton and $f\colon (X, \xi, x_I) \to (X', \xi', x_I')$ be a pointed $F$-coalgebra morphism. Then $\mathcal A$ accepts $(X, \xi, x_I)$ if and only if $\mathcal A$ accepts $(X', \xi', x_I')$.
\end{proposition}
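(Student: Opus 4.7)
The plan is to prove both directions at the level of pre-runs, which by Lemma~\ref{lem:preRunImpliesAccepted} is equivalent to acceptance. The two directions are asymmetric: the forward one is a direct post-composition with $f$, while the backward one must traverse $f$ in reverse and will rely on weak-pullback preservation of the analytic functor $F$ (recalled in Section~\ref{sec:preliminaries}).

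For the forward direction, given an accepting pre-run $(R, \langle \rho_F, \rho_X, \rho_Q\rangle, r_I)$ of $\mathcal{A}$ on $(X, \xi, x_I)$, I would simply replace $\rho_X$ by $f \circ \rho_X$, keeping $R$, $\rho_F$, $\rho_Q$, and $r_I$ unchanged. Reachability and properties (ii), (iii), and the accepting condition depend only on $\rho_F$, $\rho_Q$, and $r_I$, so they are inherited; property (i) is immediate since the composition of two pointed $F$-coalgebra morphisms is again one.

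For the backward direction, given an accepting pre-run $(R', \langle \rho'_F, \rho'_X, \rho'_Q\rangle, r'_I)$ on $(X', \xi', x'_I)$, I would form the pullback $P = \{(r', x) \in R' \times X \mid \rho'_X(r') = f(x)\}$ with projections $\pi_1, \pi_2$. Because $f$ and $\rho'_X$ are both $F$-coalgebra morphisms, the equation $Ff \circ \xi \circ \pi_2 = F\rho'_X \circ \rho'_F \circ \pi_1$ holds on $P$, so weak-pullback preservation of $F$, together with the axiom of choice, yields a map $\rho_F\colon P \to FP$ satisfying $F\pi_1 \circ \rho_F = \rho'_F \circ \pi_1$ and $F\pi_2 \circ \rho_F = \xi \circ \pi_2$. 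I set $\rho_X := \pi_2$, $\rho_Q := \rho'_Q \circ \pi_1$, and $r_I := (r'_I, x_I) \in P$ (which lies in $P$ since both $\rho'_X$ and $f$ are pointed), then restrict to the reachable subcoalgebra $R \subseteq P$ from $r_I$. Property (i) is exactly the second defining equation of $\rho_F$; property (ii) holds because $F\rho_Q \circ \rho_F = F\rho'_Q \circ \rho'_F \circ \pi_1$, which by property (ii) of the original pre-run lands in $(\delta \circ \rho'_Q \circ \pi_1)(r', x) = (\delta \circ \rho_Q)(r', x)$; property (iii) follows because every infinite path $(r_n)_n$ in $R$ from $r_I$ projects, by naturality of $\Base_F$, to an infinite path $(\pi_1(r_n))_n$ in $R'$ from $r'_I$, and $\rho_Q = \rho'_Q \circ \pi_1$. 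The accepting condition at the root transfers identically.

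The crux is the backward direction; its essential ingredient is the weak-pullback preservation of $F$, which allows the simultaneous lifting of the automaton step $\rho'_F(r')$ and the coalgebra step $\xi(x)$ to a common $F$-structure on $P$ for each $(r', x) \in P$. A smaller but necessary point is to restrict the constructed coalgebra to its reachable part from $r_I$ in order to meet the definition of a pre-run.
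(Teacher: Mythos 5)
Your proposal is correct and follows essentially the same route as the paper: the forward direction post-composes $\rho_X$ with $f$, and the backward direction pulls the run back along $f$ and $\rho'_X$, lifts the coalgebra structure using weak-pullback preservation of $F$, and restricts to the reachable part. The only cosmetic difference is that you verify the pre-run properties (ii) and (iii) directly where the paper delegates them to Lemma~\ref{lem:preservationReflectionofPreRunProperties}.
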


We conclude this section with a strengthening of \cite[Theorem~4.4]{KupkeVenema2008CoalgebraicAutomata}: every automaton with $\omega$-regular acceptance can be transformed into an equivalent parity automaton in an unambiguity-preserving way.

\begin{proposition}
\label{prop:omegaRegularToParity}
    Every $F$-automaton $\mathcal A$ with $\omega$-regular acceptance can be transformed into a parity $F$-automaton $\mathcal A'$ accepting the same coalgebras. Moreover, for every pointed $F$-coalgebra $(X, \xi, x_I)$, if $\mathcal A$ is unambiguous on $(X, \xi, x_I)$, then $\mathcal A'$ is unambiguous on $(X, \xi, x_I)$.
\end{proposition}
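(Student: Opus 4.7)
The plan is to strengthen the wreath-product construction of \cite[Theorem~4.4]{KupkeVenema2008CoalgebraicAutomata} by pairing $\mathcal A = (Q, \delta, Q_I, \Acc)$ with a \emph{deterministic} parity word automaton $\mathcal B = (P, \delta_B, p_I, \Omega)$ recognising $\Acc$; such a $\mathcal B$ exists since every $\omega$-regular language is accepted by some deterministic parity word automaton. I would set $\mathcal A' \coloneqq (Q \times P, \delta', Q_I \times \{ p_I \}, \Omega')$ with $\Omega'(q, p) \coloneqq \Omega(p)$ and
\[
\delta'(q, p) \coloneqq \{ F\langle \id_Q, \mathrm{const}_{\delta_B(p, q)} \rangle(\bar s) \mid \bar s \in \delta(q) \},
\]
so that every transition of $\mathcal A'$ from $(q, p)$ mirrors one of $\mathcal A$ from $q$, tagging each successor with the unique next $\mathcal B$-state $\delta_B(p, q)$.

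To prove acceptance-equivalence, I would translate pre-runs in both directions and invoke Lemma~\ref{lem:preRunImpliesAccepted}. Given an accepting pre-run $(R, \rho, r_I)$ of $\mathcal A$ on $(X, \xi, x_I)$, the \emph{canonical lift} is the pre-run of $\mathcal A'$ whose carrier $R' \subseteq R \times P$ is the reachable set from $(r_I, p_I)$ under the transition $(r, p) \mapsto \{ (r^{\ast}, \delta_B(p, \rho_Q(r))) \mid r^{\ast} \in \Base_F(\rho_F(r)) \}$, with $\rho'_F(r, p)$ obtained from $\rho_F(r)$ by replacing each occurrence of $r^{\ast}$ by $(r^{\ast}, \delta_B(p, \rho_Q(r)))$, and with $\rho'_X(r, p) \coloneqq \rho_X(r)$, $\rho'_{Q \times P}(r, p) \coloneqq (\rho_Q(r), p)$. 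Property (ii) then follows immediately from the definition of $\delta'$; property (iii) follows because on any infinite path in $R'$ the $P$-trace is an accepting $\mathcal B$-run on the $Q$-trace, so the $Q$-trace lies in $\Acc$. Conversely, given an accepting pre-run $(R', \rho', r'_I)$ of $\mathcal A'$, post-composing $\rho'_{Q \times P}$ with $\pi_Q$ yields an accepting pre-run of $\mathcal A$ on $(X, \xi, x_I)$: along any infinite path the $P$-trace witnesses membership of the projected $Q$-trace in $\Acc$.

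For unambiguity preservation, suppose $\mathcal A$ is unambiguous on $(X, \xi, x_I)$ with unique accepting run $(R^{\ast}, \rho^{\ast}, r^{\ast}_I)$. Any accepting run of $\mathcal A'$ projects via $\pi_Q$ to an accepting pre-run of $\mathcal A$, whose image in the final $(F \times \Delta_X \times \Delta_Q)$-coalgebra equals $(R^{\ast}, \rho^{\ast}, r^{\ast}_I)$. The $P$-component of that accepting run takes value $p_I$ at the root by the choice of initial states, and condition (ii) combined with the definition of $\delta'$ forces each successor's $P$-value to be $\delta_B$ applied to the predecessor's $P$-value and its $Q$-label. By the induction principle for reachable coalgebras, the $P$-labelling is uniquely determined by $(R^{\ast}, \rho^{\ast}, r^{\ast}_I)$, $p_I$, and $\delta_B$; consequently every accepting run of $\mathcal A'$, as a subcoalgebra of the final $(F \times \Delta_X \times \Delta_{Q \times P})$-coalgebra, coincides with the final-coalgebra image of the canonical lift of $(R^{\ast}, \rho^{\ast}, r^{\ast}_I)$.

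The main difficulty I anticipate is the final step: showing that any two accepting runs of $\mathcal A'$, viewed as subcoalgebras of the final coalgebra, genuinely coincide rather than being merely related by a span. An element of $R^{\ast}$ reachable via distinct paths with distinct $P$-evolutions must split into several elements of the lift, so one has to formalise carefully, via the induction principle for reachable coalgebras, that the resulting $P$-labelling on such an accepting run is completely determined and that the lifted pre-run's image is itself a subcoalgebra of the final coalgebra.
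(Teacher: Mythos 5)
Your automaton $\mathcal A'$ is exactly the one the paper builds (pair $\mathcal A$ with a deterministic parity word automaton for $\Acc$, tag all successors of a state $(q,p)$ with the unique next state $\delta_B(p,q)$, take priorities from the word automaton), and your acceptance argument --- lifting and projecting accepting pre-runs and invoking Lemma~\ref{lem:preRunImpliesAccepted} --- is the paper's as well; the paper merely packages your ``canonical lift'' and ``projection'' as maps $K$ and $M$ between the collections of accepting pre-runs of $\mathcal A$ and of $\mathcal A'$.

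The unambiguity step is where you diverge, and where the gap sits. Your induction only pins down the $P$-labelling of a given accepting run $\mathsf S$ of $\mathcal A'$ in terms of the structure of $\mathsf S$ itself; it does not relate $\mathsf S$ to the canonical lift of the unique accepting run $(R^{\ast},\rho^{\ast},r^{\ast}_I)$ of $\mathcal A$, because the carrier of $\mathsf S$ is not $R^{\ast}$ --- there is only a coalgebra morphism from the projection of $\mathsf S$ onto $(R^{\ast},\rho^{\ast},r^{\ast}_I)$, and, as you note, elements of $R^{\ast}$ may split in the lift. So a morphism or span connecting $\mathsf S$ to the lift still has to be produced, which is precisely the step you defer. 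Moreover, your stated worry --- that two runs being ``merely related by a span'' would not suffice --- is misplaced, and seeing why dissolves the difficulty: accepting runs are by definition pointed subcoalgebras of the final $(F\times\Delta_X\times\Delta_{Q\times P})$-coalgebra, and behaviourally equivalent pointed subcoalgebras of a final coalgebra are equal (this is exactly how Lemma~\ref{lem:unambiguityViaBehEquivalence} is proved). The paper therefore never goes through the unique run of $\mathcal A$ at all: given two accepting runs of $\mathcal A'$, it projects both to accepting pre-runs of $\mathcal A$, obtains a span between these from unambiguity of $\mathcal A$ together with weak-pullback preservation, and lifts that span (apply $K$ to the apex, compose the legs with the projection) to a span between the two runs of $\mathcal A'$; behavioural equivalence then forces equality. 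If you prefer to keep your route, you must still explicitly construct the morphism $s \mapsto (h(s), \pi_P(\rho'_{Q\times P}(s)))$ from an arbitrary accepting run into the canonical lift of $(R^{\ast},\rho^{\ast},r^{\ast}_I)$, where $h$ is the map onto the image in the final $(F\times\Delta_X\times\Delta_Q)$-coalgebra --- at which point you are again concluding via behavioural equivalence.
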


\section{From Automata to Algebras}\label{sec:automatonToAlgebra}
In this section, we describe the first central construction of the paper: the automaton algebra. Given an $F$-automaton with prefix-agnostic acceptance, we construct a finite coherent $(F+G)$-algebra with a recognising set, which recognises the language consisting of the thin behaviours that are accepted by the $F$-automaton. Our construction is inspired by the construction of the \emph{thin algebra of an automaton} for binary trees~\cite[Section~6.2.1]{Skrzypczak2016}.

\subsection{The Automaton Algebra}
Given an $F$-automaton $\mathcal A$, the elements of the automaton algebra $\autalg_{\mathcal A}$ will be sets of automaton states. 
The algebra structure of $\autalg_{\mathcal A}$ is defined in order to obtain the following property:
if $z \in \thin Z$, then $\cev_{\autalg_\mathcal A}(z)$ is the set of those states $q$ such that $\mathcal A$ has a run of $(\thin Z, \thin \zeta, z)$, starting at $q$ (recall $\cev_{\autalg_ {\mathcal A}}: (\thin Z, \thin \beta) \to \autalg_{\mathcal A}$).

\begin{definition}
    Let $\mathcal A = (Q, \delta, Q_I, \Acc)$ be an $F$-automaton with prefix-agnostic $\Acc$. Define the \emph{automaton algebra} $\autalg_{\mathcal A} \coloneqq (C, [\gamma_0,\gamma_1], U)$ of $\mathcal A$ as follows.
    \begin{itemize}
        \item $C \coloneqq \Pow(Q)$;
        \item for all $\bar c \in FC$: $\gamma_0(\bar c) \coloneqq \{ q \in Q \mid \exists \bar q \in FQ (\bar q \inlift \bar c \land \bar q \in \delta(q)) \}$;
        \item for all $(\bar c'_n)_{n \in \omega} \in (F'C)^\omega$:
        \begin{align*}
            \gamma_1((\bar c'_n)_{n \in \omega}) &\coloneqq \{ q_0 \in Q \mid \exists (q_n)_{n \in \omega} \in \Acc, (\bar q'_n)_{n \in \omega} \in (F'Q)^\omega:
            \forall n \in \omega (\bar q'_n \inlift \bar c'_n), \\
            & \hspace{234px} \forall n \in \omega (\trig_Q(\bar q'_n, q_{n+1}) \in \delta(q_n)) \};
        \end{align*}
        \item $U \coloneqq \{ c \in C \mid c \cap Q_I \neq \emptyset \}$.
    \end{itemize}
\end{definition}

For simplicity, consider a polynomial functor $F$. In the definition of $\gamma_0$, $\gamma_0(\bar c)$ consists of those states $q$, for which there exists a transition $\bar q \in \delta(q)$ such that each component in the tuple $\bar q$ is an element of the corresponding component of $\bar c$. This corresponds to the fact that a thin behaviour $z$ is accepted by $\mathcal A$, starting at $q$, precisely when there exists a transition $\bar q \in \delta(q)$ such that, for all $i$, $\mathcal A$  accepts the $i$-th successor of $z$, starting at the $i$-th component of $\bar q$. Here it is essential to assume $\Acc$ is prefix-agnostic, so that for all $p \in \Base_F(\bar q)$ and $x \in Q^\omega$, we have $qpx \in \Acc$ if and only if $px \in \Acc$.

Similarly, $\gamma_1((\bar c'_n)_{n \in \omega})$ consists of states $q$ such that we can choose a context $\bar q'_n$ of states for every context $\bar c'_n$, and a sequence of states $(q_n)_{n \geq 1} \in \Acc$ to fill the consecutive holes in these contexts. Again, we use the prefix-agnostic assumption, so that for all $p \in \Base_{F'}(\bar q'_n)$ and $x \in Q^\omega$, we have $q_0q_1\dotsc q_n p x \in \Acc$ if and only if $px \in \Acc$.

For the recognising set $U$, we take those sets of states $c$ that contain at least one accepting state, so that $\cev_{\autalg_{\mathcal A}}^{-1}(U)$ contains the thin behaviours accepted by $\mathcal A$.

Theorem~\ref{thm:automatonAlgebraLanguage} below connects acceptance of an automaton $\mathcal A$ with the language 
$L(\autalg_{\mathcal A})$ of its automaton algebra. It uses the key property that the automaton algebra is coherent.

\begin{lemma}
\label{lem:autAlgebraCoherent}
    For all $F$-automata $\mathcal A$ with prefix-agnostic acceptance, the automaton algebra $\autalg_{\mathcal A}$ is coherent.
\end{lemma}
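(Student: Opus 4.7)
The plan is to unfold the coherence equation and check the two inclusions between $\gamma_1((\bar c'_n)_{n \in \omega})$ and $\gamma_0(\trig_C(\bar c'_0, d))$ pointwise, where $d := \gamma_1((\bar c'_n)_{n > 0}) \in C$. So I fix $q_0 \in Q$ and show that $q_0$ belongs to one side iff it belongs to the other. The two key tools will be the prefix-agnosticity of $\Acc$ (to add or remove $q_0$ from the front of an acceptance-satisfying sequence) and the weak pullback property of $\trig$ from Proposition~\ref{prop:trigNaturalitySquareIsWeakPullback}.

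For the inclusion $\gamma_1((\bar c'_n)_{n \in \omega}) \subseteq \gamma_0(\trig_C(\bar c'_0, d))$, suppose $(q_n)_{n \in \omega} \in \Acc$ and $(\bar q'_n)_{n \in \omega} \in (F'Q)^\omega$ witness $q_0 \in \gamma_1((\bar c'_n)_{n \in \omega})$. Prefix-agnosticity of $\Acc$ yields $(q_n)_{n \geq 1} \in \Acc$, and restricting the witnesses to indices $n \geq 1$ shows that $q_1 \in d$. Combined with $\bar q'_0 \inlift \bar c'_0$, I obtain by naturality of $\trig$ a witness $r \in F({\in})$ for $\trig_Q(\bar q'_0, q_1) \inlift \trig_C(\bar c'_0, d)$: starting from witnesses $r' \in F'({\in})$ for $\bar q'_0 \inlift \bar c'_0$ and $(q_1, d) \in {\in}$, set $r := \trig_{\in}(r', (q_1, d))$ and use naturality of $\trig$ along $\prj{1}$ and $\prj{2}$. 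Since $\trig_Q(\bar q'_0, q_1) \in \delta(q_0)$, this gives $q_0 \in \gamma_0(\trig_C(\bar c'_0, d))$.

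For the converse inclusion, suppose $q_0 \in \gamma_0(\trig_C(\bar c'_0, d))$ witnessed by some $\bar q \in \delta(q_0)$ and some $r \in F({\in}) \subseteq F(Q \times \Pow(Q))$ satisfying $F\prj{1}(r) = \bar q$ and $F\prj{2}(r) = \trig_C(\bar c'_0, d)$. The main obstacle is extracting a splitting of $\bar q$ aligned with the splitting of $\trig_C(\bar c'_0, d)$. For this I apply the weak pullback property of $\trig$ from Proposition~\ref{prop:trigNaturalitySquareIsWeakPullback} along $\prj{2}: Q \times \Pow(Q) \to \Pow(Q) = C$: from $F\prj{2}(r) = \trig_C(\bar c'_0, d)$ I obtain $(\bar s', s) \in F'(Q \times \Pow(Q)) \times (Q \times \Pow(Q))$ with $\trig(\bar s', s) = r$, $F'\prj{2}(\bar s') = \bar c'_0$ and $\prj{2}(s) = d$. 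I then argue that $(\bar s', s)$ lies in $F'({\in}) \times {\in}$ using the base identity $\Base_F(\trig(\bar s', s)) = \Base_{F'}(\bar s') \cup \{s\}$ together with $r \in F({\in})$; this forces $s = (q_1, d)$ with $q_1 \in d$ and $\bar s' \in F'({\in})$, so setting $\bar q'_0 := F'\prj{1}(\bar s')$ gives $\bar q'_0 \inlift \bar c'_0$ and $\trig_Q(\bar q'_0, q_1) = F\prj{1}(r) = \bar q \in \delta(q_0)$ by naturality of $\trig$.

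Finally I assemble the global witness. From $q_1 \in d = \gamma_1((\bar c'_n)_{n > 0})$ I extract $(q_n)_{n \geq 1} \in \Acc$ together with $(\bar q'_n)_{n \geq 1}$ satisfying $\bar q'_n \inlift \bar c'_n$ and $\trig_Q(\bar q'_n, q_{n+1}) \in \delta(q_n)$ for $n \geq 1$. Prepending $q_0$ and $\bar q'_0$ yields sequences with the $\delta$-compatibility holding at every $n \in \omega$ and, by prefix-agnosticity of $\Acc$, with $(q_n)_{n \in \omega} \in \Acc$. Hence $q_0 \in \gamma_1((\bar c'_n)_{n \in \omega})$, completing the proof.
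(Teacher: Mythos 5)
Your proof is correct and follows essentially the same route as the paper's: both directions hinge on prefix-agnosticity to add/remove $q_0$, naturality of $\trig$ to transport witnesses of $\inlift$, and the weak pullback property of $\trig$ to split the witness in the $(\supseteq)$ direction. The only (immaterial) difference is that the paper takes the weak pullback square over the relation $W = \{(q,c) \mid q \in c\}$ itself, so the split lands in $F'W \times W$ directly, whereas you take it over the ambient product $Q \times \Pow(Q)$ and then restrict into the relation via the base identity --- both are valid instances of Proposition~\ref{prop:trigNaturalitySquareIsWeakPullback}.
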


\begin{theorem}
\label{thm:automatonAlgebraLanguage}
    Let $\mathcal A$ be an $F$-automaton with prefix-agnostic acceptance and let $(X, \xi, x_I)$ be a thin pointed $F$-coalgebra. Then $\mathcal A$ accepts $(X, \xi, x_I)$ if and only if 
    $\tbeh_{(X,\xi)}(x_I) \in L(\autalg_{\mathcal A})$.
\end{theorem}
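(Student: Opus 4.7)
The plan is to characterise $\cev_{\autalg_{\mathcal A}}$ as the map sending each thin behaviour $z \in \thin Z$ to the set of automaton states from which $\mathcal A$ has a run on $z$, and then conclude. Concretely, define $\mu\colon \thin Z \to \Pow(Q)$ by $q \in \mu(z)$ iff there exists a run $(R, \rho, r_I)$ of $\mathcal A$ on $(\thin Z, \thin \zeta, z)$ with $\rho_Q(r_I) = q$. By Lemma~\ref{lem:autAlgebraCoherent}, $\autalg_{\mathcal A}$ is coherent, so there is a unique $(F+G)$-algebra morphism $\cev_{\autalg_{\mathcal A}}\colon (\thin Z, \thin \beta) \to \autalg_{\mathcal A}$; the key step is therefore to prove that $\mu$ is itself an $(F+G)$-algebra morphism, whence $\mu = \cev_{\autalg_{\mathcal A}}$ by uniqueness. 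Given this, $\tbeh_{(X,\xi)}(x_I) \in L(\autalg_{\mathcal A})$ iff $\mu(\tbeh_{(X,\xi)}(x_I)) \cap Q_I \neq \emptyset$, iff $\mathcal A$ has an accepting run on $(\thin Z, \thin \zeta, \tbeh_{(X,\xi)}(x_I))$, iff $\mathcal A$ accepts $(X, \xi, x_I)$ by Proposition~\ref{prop:acceptanceInvariantUnderMorphisms} applied to the pointed morphism $\tbeh_{(X,\xi)}$ (with Lemma~\ref{lem:preRunImpliesAccepted} letting us move freely between runs and pre-runs).

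For the $F$-part, $\mu \circ \thin \beta_0 = \gamma_0 \circ F\mu$: given $\bar z \in F\thin Z$ and $z = \thin \beta_0(\bar z)$, equation~\eqref{eq:betaZetaProp1} gives $\thin \zeta(z) = \bar z$. A run on $z$ starting at $q$ produces a transition $\bar q = (F\rho_Q \circ \rho_F)(r_I) \in \delta(q)$ together with sub-runs on each successor $z_i \in \Base_F(\bar z)$ starting at the corresponding component of $\bar q$; this witnesses $\bar q \inlift F\mu(\bar z)$, hence $q \in \gamma_0(F\mu(\bar z))$. Conversely, a transition $\bar q \in \delta(q)$ with $\bar q \inlift F\mu(\bar z)$ provides, via the definition of the relation lifting, a matching family of sub-runs that can be glued to the root transition into a run on $z$ starting at $q$.

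For the $G$-part, $\mu \circ \thin \beta_1 = \gamma_1 \circ G\mu$: given $(\bar z'_n)_{n\in\omega} \in G\thin Z$, let $z_0 = \thin \beta_1((\bar z'_n)_{n\in\omega})$ and let $(z_n)_{n \in \omega}$ be the associated spine from equation~\eqref{eq:betaZetaProp2}, satisfying $\thin \zeta(z_n) = \trig_{\thin Z}(\bar z'_{n+1}, z_{n+1})$. A run on $z_0$ starting at $q_0$ unfolds along this spine: at each level $n$, the transition $(F\rho_Q \circ \rho_F)(r_n) \in \delta(q_n)$ decomposes, using weak cartesianness of $\trig$ (Proposition~\ref{prop:trigNaturalitySquareIsWeakPullback}), into a context $\bar q'_{n+1}$ at the position of $z_{n+1}$, a state $q_{n+1}$, and sub-runs on all remaining siblings in $\Base_{F'}(\bar z'_{n+1})$. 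Prefix-agnosticity of $\Acc$ guarantees that acceptance within each sibling does not depend on the spine prefix $q_0 q_1 \dotsc q_n$, so these sub-runs witness $\bar q'_{n+1} \inlift F'\mu(\bar z'_{n+1})$, while the spine path $(q_n)_{n\in\omega}$ must itself lie in $\Acc$; this is precisely the condition for $q_0 \in \gamma_1((F'\mu(\bar z'_n))_{n\in\omega})$. The converse direction constructs a single run on $z_0$ by gluing the spine data with independently chosen sibling sub-runs, again invoking prefix-agnosticity to combine the separately verified acceptance conditions.

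The main obstacle is the $G$-part: one must carefully align the indexing of the spine decomposition in equation~\eqref{eq:betaZetaProp2} with the indexing used in $\gamma_1$, and verify both directions of the correspondence between runs on an infinite plug-in $\thin \beta_1((\bar z'_n)_{n\in\omega})$ and streams of contexts on the automaton side. Weak cartesianness of $\trig$ is what makes the decomposition of each spine transition canonical, and prefix-agnosticity of $\Acc$ is the structural feature that allows the acceptance condition to be split cleanly between the infinite spine path and the finite prefixes leading into each accepted sibling subcoalgebra.
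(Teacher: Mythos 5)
Your proposal is correct and follows essentially the same route as the paper: the paper likewise defines the map $z \mapsto \{q \mid \mathcal A \text{ has a run on } (\thin Z, \thin\zeta, z) \text{ starting at } q\}$, shows it is an $(F+G)$-algebra morphism (hence equals $\cev_{\autalg_{\mathcal A}}$ by initiality of $(\thin Z,\thin\beta)$ among coherent algebras), and concludes via Proposition~\ref{prop:acceptanceInvariantUnderMorphisms} and Lemma~\ref{lem:preRunImpliesAccepted}, with the detailed verification using exactly the ingredients you name (relation-lifting witnesses for the $F$-part, weak cartesianness of $\trig$ for the spine decomposition, prefix-agnosticity for splitting acceptance). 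The only nitpick is that the existence of the spine $(z_n)$ with $\thin\zeta(z_n) = \trig_{\thin Z}(\bar z'_{n+1}, z_{n+1})$ comes from coherence of $(\thin Z,\thin\beta)$ together with Equation~\eqref{eq:betaZetaProp1}, rather than from Equation~\eqref{eq:betaZetaProp2}, which is the converse implication.
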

\begin{proof*}{Proof (Sketch)}
  Let $\mathcal A = (Q, \delta, Q_I, \Acc)$ and $\autalg_{\mathcal{A}} = (C, \gamma = [\gamma_0,\gamma_1], U)$. Define:
    \begin{align*}
        &f : \thin Z \to C, \\
        &z \mapsto \{ q \in Q \mid \text{there exists a run of $\mathcal A$ on $(\thin Z,\thin \zeta, z)$, starting from $q$}\}.
    \end{align*}
    One can show that $f: (\thin Z,\thin \beta) \to (C, \gamma)$ is an $(F+G)$-algebra morphism. Since $(\thin Z, \thin \beta)$ is an initial coherent $(F+G)$-algebra, this implies $f = \cev_{(C, \gamma)}$. Now $\mathcal A$ accepts $(X, \xi, x_I)$ if and only if $\mathcal A$ accepts $(\thin Z, \thin \zeta, \tbeh_{(X, \xi)}(x_I))$ (by Proposition~\ref{prop:acceptanceInvariantUnderMorphisms}) if and only if $(f \circ \tbeh_{(X, \xi)})(x) \cap Q_I \neq \emptyset$ if and only if $(\cev_{(C,\gamma)} \circ \tbeh_{(X,\xi)})(x) \in U$ if and only if $\tbeh_{(X,\xi)}(x_I) \in L(\autalg_\mathcal A)$.
\end{proof*}

\begin{example}
\label{ex:automatonAlgebra}
    Let $\mathcal A$ be the automaton from Example \ref{ex:automatonNonRegularLanguage}. Its automaton algebra $\autalg_{\mathcal A} = (C, [\gamma_0,\gamma_1], U)$ has a carrier $C = \{ \emptyset, \{q_a\}, \{q_b\}, \{q_a,q_b\} \}$. The $F$-operation is given by $\gamma_0(\sigma, c) = \{ q_a, q_b \}$, if $q_\sigma \in c$, and $\gamma_0(\sigma, c) = \emptyset$, otherwise (for all $\sigma \in \Sigma$ and $c \in C$). For the $G$-operation, for every $(\sigma_n)_{n \in \omega} \in GC \,\cong\, \Sigma^\omega$, we have that $\gamma_1((\sigma_n)_{n \in \omega})$ equals $\{ q_a, q_b \}$, if $(\sigma_n)_{n\in\omega} \in L$, and $\emptyset$, otherwise. For the recognising set, we have $U = \{ \{q_a \}, \{q_b\}, \{q_a, q_b \} \}$. If we take $Z = \Sigma^\omega$ (the final coalgebra of streams over $\Sigma$), we get $L(\autalg_{\mathcal A}) = L \subseteq \thin Z = Z$.
\end{example}

\subsection{Rational Algebras}
Example \ref{ex:automatonAlgebra} showed that there exist finite coherent $(F+G)$-algebras whose language cannot be characterised by parity $F$-automata. A finite coherent $(F+G)$-algebra $(C, \gamma)$ partitions $\thin Z$ into finitely many classes $\{ \cev_{(C,\gamma)}^{-1}(c) \mid c \in C \}$. In order to retain the connection to parity $F$-automata, in Definition \ref{def:rational} we equip $(C,\gamma, U)$ with additional structure so that it also partitions into finitely many classes the set $(F'\thin Z)^+$ of finite sequences of contexts over $\thin Z$. Intuitively, a sequence of $n$ contexts is viewed as the ``nested context'' obtained by plugging the sequence together, so that the hole is at depth $n$ (whereas in our usual contexts the hole is at depth $1$). The partition of $(F'\thin Z)^+$ is to satisfy the following property: if $(\widetilde c_n)_{n\in\omega}, (\widetilde d_n)_{n\in\omega} \in ((F'\thin Z)^+)^\omega$ and for all $n \in \omega$, $\widetilde c_n$ and $\widetilde d_n$ are in the same class, then $\gamma_1(\widetilde c_0 \widetilde c_1 \dotsc) \in L(C,\gamma,U)$ if and only if $\gamma_1(\widetilde d_0 \widetilde d_1 \dotsc) \in L(C,\gamma,U)$. Note that in Example \ref{ex:automatonAlgebra} it is impossible to find a finite partition of $(F'\thin Z)^+ = \Sigma^+$ with this property. In order to guarantee the property, we define the following subclass of finite coherent $(F+G)$-algebras.

\begin{definition}
\label{def:rational}
    Let $(C, \gamma = [\gamma_0, \gamma_1])$, be a finite coherent $(F+G)$-algebra and $\Sigma \coloneqq F'C$. We call $(C, \gamma)$ \emph{rational} if there exists a finite $\omega$-semigroup $(\widetilde C, \Im(\gamma_1))$ and a map $\gamma_2: \Sigma^+ \to \widetilde C$ such that $(\gamma_2, \gamma_1): (\Sigma^+, \Sigma^\omega) \to (\widetilde C, \Im(\gamma_1))$ is an $\omega$-semigroup homomorphism.
\end{definition}

In the above definition, the map $\gamma_2: \Sigma^+ \to \widetilde C$ partitions the set of finite sequences of contexts (i.e., the nested contexts) into finitely many classes $\widetilde C$.

Note that for a functor $FX = \Sigma_0 + \Sigma_2 \times X \times X$, where $\Sigma_0$ and $\Sigma_2$ are alphabets, rational $(F+G)$-algebras essentially coincide with thin algebras~\cite{Skrzypczak2016}. Thin algebras contain two sorts: a sort for trees (in rational algebras, this is the domain $C$) and a sort for contexts (in rational algebras, this is the set $\widetilde C$). Hence rational $(F+G)$-algebras can be seen as a natural generalisation of thin algebras to analytic functors.

We will see in Section \ref{sec:mainResults} that languages of rational $(F+G)$-algebras can be characterised by parity $F$-automata. For now, we only show that parity $F$-automata give rise to rational $(F+G)$-algebras.

\begin{proposition}
\label{prop:automatonAlgebraRational}
    For all $F$-automata $\mathcal A$ with parity acceptance, the automaton algebra $\autalg_{\mathcal A}$ is rational.
\end{proposition}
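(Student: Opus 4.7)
Let $\mathcal A = (Q, \delta, Q_I, \Omega)$ with priorities bounded by $k$, write $\autalg_{\mathcal A} = (C, [\gamma_0, \gamma_1], U)$ and $\Sigma = F'C$. The plan is to adapt the classical \emph{transition $\omega$-semigroup} of a parity word automaton to one-hole contexts over $C = \Pow Q$: each finite nested context in $\Sigma^+$ will be classified by a \emph{profile} recording, for every pair of automaton states, whether some run can traverse the context from the top to the hole (respecting $\inlift$ on side branches), together with the maximum parity priority encountered along the traversal.

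\textbf{Construction.} I would first take the finite set $\widehat C := \Pow(Q \times Q \times \{0, \dotsc, k\})$ with the associative composition
\[
  P_1 \cdot P_2 := \{(q, q'', \max(m_1, m_2)) \mid \exists q'\colon (q, q', m_1) \in P_1 \wedge (q', q'', m_2) \in P_2\},
\]
and, for a single context $\bar c' \in \Sigma$, set
\[
  \gamma_2(\bar c') := \{(q, q', \Omega(q)) \mid \exists \bar q' \in F'Q\colon \bar q' \inlift \bar c' \wedge \trig_Q(\bar q', q') \in \delta(q)\},
\]
extending $\gamma_2$ to $\Sigma^+$ as a semigroup homomorphism. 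Then I take $\widetilde C := \gamma_2(\Sigma^+) \subseteq \widehat C$, which is a finite sub-semigroup since $\Sigma = F'C$ is finite. On $\widetilde C$ and $\Im(\gamma_1)$, the mixed and infinite products are defined by
\[
  P \times c := \{q \mid \exists q' \in c,\, m\colon (q, q', m) \in P\},
\]
\[
  \Pi((P_n)_{n \in \omega}) := \bigl\{q_0 \;\big|\; \exists (q_n)_{n \geq 1}, (m_n)\colon \forall n\,(q_n, q_{n+1}, m_n) \in P_n,\ \limsup_n m_n \text{ is even}\bigr\}.
\]

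\textbf{Homomorphism and closure.} The central step is to prove that for all $(\widetilde c_n)_{n \in \omega} \in ((F'C)^+)^\omega$, $\widetilde c \in \Sigma^+$ and $x \in \Sigma^\omega$,
$$\gamma_1(\widetilde c_0 \widetilde c_1 \dotsb) \;=\; \Pi((\gamma_2(\widetilde c_n))_{n\in\omega}) \qquad \text{and} \qquad \gamma_1(\widetilde c \cdot x) \;=\; \gamma_2(\widetilde c) \times \gamma_1(x).$$
These two identities simultaneously verify the homomorphism property and ensure that $\times$ and $\Pi$ genuinely land in $\Im(\gamma_1)$, making $(\widetilde C, \Im(\gamma_1))$ a bona fide $\omega$-semigroup. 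Using the block-decomposition axiom it suffices to check the infinite-product identity when each $\widetilde c_n$ is a single context $\bar c'_n$; there, both sides unfold to the statement that $q_0$ lies in the set precisely when there exist $(q_n)_{n \geq 1}$ and witnesses $\bar q'_n \in F'Q$ with $\bar q'_n \inlift \bar c'_n$, $\trig_Q(\bar q'_n, q_{n+1}) \in \delta(q_n)$, and $\limsup_n \Omega(q_n)$ even (the parity condition). The mixed-product case is analogous, using that a finite prefix of priorities does not affect $\limsup$.

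\textbf{Main obstacle.} The principal difficulty is reconciling the two views of a nested context of length $n$: as a single semigroup element carrying a single summary priority, versus as a flat sequence of $n$ atomic contexts each contributing a priority along the run. This is exactly the content of the block-decomposition axiom and rests on the elementary but essential fact that for any sequence $(m_n)$ in $\{0, \dotsc, k\}$ and any strictly increasing $0 = n_0 < n_1 < \dotsb$, one has $\limsup_n m_n = \limsup_i \max_{n_i \leq n < n_{i+1}} m_n$. Once this lemma and associativity of $\cdot$ are in place, all remaining $\omega$-semigroup axioms and the key homomorphism identities fall out by routine unfolding of $\gamma_1$ and $\gamma_2$.
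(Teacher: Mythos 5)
Your proposal is correct and follows essentially the same route as the paper: both build the transition $\omega$-semigroup on $\Pow(Q \times Q \times \Im(\Omega))$, define $\gamma_2$ on the generators $F'C$ via $\inlift$ and $\trig_Q$, take $\widetilde C = \Im(\gamma_2)$, and verify the infinite-product identity by unfolding $\gamma_1$ and using that $\limsup$ is insensitive to regrouping into blocks. The only (immaterial) difference is that you record the source priority $\Omega(q)$ where the paper records $\max\{\Omega(q), \Omega(q_1)\}$; both yield the same $\limsup$ along any infinite run, so the homomorphism check goes through either way.
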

\begin{proof*}{Proof (Sketch)}
    The construction generalises \cite[Section~6.2.1]{Skrzypczak2016}.
    Let $\mathcal A = (Q, \delta, Q_I, \Omega)$ be a parity $F$-automaton, and $\autalg_{\mathcal A} = (C, \gamma = [\gamma_0, \gamma_1], U)$. Define a two-sorted algebra $(\widehat C, C)$ by:
    \begin{align*}
        \widehat C &\coloneqq \mathcal P(Q \times Q \times \Im(\Omega)), \\
        \widehat c_1 \cdot \widehat c_2 &\coloneqq \{ (q, q_2, max\{m_1, m_2\}) \mid \exists q_1 \in Q: (q, q_1, m_1) \in \widehat c_1 \land (q_1, q_2, m_2) \in \widehat c_2 \}, \\
        \widehat c \times c &\coloneqq \{ q \mid \exists q_1 \in c, m \in \omega: (q, q_1, m) \in \widehat c \, \}, \\
        \Pi((\widehat c_n)_{n\in\omega}) &\coloneqq \{ q_0 \mid \exists (q_n)_{n\in\omega} \in Q^\omega, (m_n)_{n\in\omega} \in \mathbb{N}^\omega: 
        \forall n \in \omega ((q_n, q_{n+1}, m_n) \in \widehat c_n) \: \land 
        \limsup_{n\in\omega} m_n \text{ is even} \}.
    \end{align*}
    We define the map $\gamma_2: (F'C)^+ \to \widehat C$ by specifying its restriction to the set of generators $F'C$ of the freely generated semigroup $(F'C)^+$. For $\bar c' \in F'C$, we set:
    \begin{equation*}
        \gamma_2(\bar c') \coloneqq \{ (q, q_1, max\{\Omega(q), \Omega(q_1)\}) \mid \exists \bar q' \in F'Q (\bar q' \inlift \bar c' \land \trig_Q(\bar q', q_1) \in \delta(q) \}.
    \end{equation*}
    One can show that $(\Im(\gamma_2), \Im(\gamma_1))$ is an $\omega$-semigroup and $(\gamma_2, \gamma_1)$ is a homomorphism.
\end{proof*}

\section{From Algebras to Automata}\label{sec:algebraToAutomaton}
In this section, we show how to construct from a finite coherent algebra its \emph{algebraic automaton}. 
The context decomposition transformation from Definition~\ref{def:decomp} is instrumental in defining the transition structure of this automaton. The key result here is that the algebraic automaton is unambiguous on thin coalgebras. We proceed as follows: we introduce the algebraic automaton, develop the key technical notion of \emph{marking} and use it to show that, when restricting to thin $F$-coalgebras, the algebraic automaton is unambiguous and accepts the same language as the starting algebra.

\subsection{The Algebraic Automaton}
Given a finite coherent $(F+G)$-algebra $(C, [\gamma_0,\gamma_1], U)$ with a recognising set, we aim to construct an equivalent unambiguous automaton.
We draw inspiration from the construction in \cite[Section~7.2.1]{Skrzypczak2016} for binary trees. The idea is that each state $q$ in the algebraic automaton encodes an element $c \in C$ in such a way that the algebraic automaton accepts, starting at state $q$, those pointed coalgebras $(\thin Z, \thin \zeta,z)$ for which $\cev_{(C,\gamma)}(z) = c$. A run of the algebraic automaton labels behaviours $z \in \thin Z$ with algebra elements $c \in C$. The transitions of the automaton are to ensure that if $z$ is labelled with $c \in C$ and $\thin \zeta(z) \in F\thin Z$ is labelled with $\bar c \in FC$, then $\gamma_0(\bar c) = c$.
The acceptance condition is to ensure that for every infinite path $(z, z_1, z_2 \dotsc)$, 
if $z$ is labelled with $c$, and for all $n \geq 1$, the context in  $F'\thin Z$ consisting of the siblings of $z_n$ is labelled with $\bar c'_n \in F'C$, then $\gamma_1((\bar c'_n)_{n \geq 1}) = c$.
In order to realise the latter requirement, a state $q$ of the automaton must encode both a label $c \in C$ for a behaviour and a context of labels $\bar c' \in F'C$ for the context of siblings of that behaviour.
Since the root of a pointed $F$-coalgebra does not have any siblings, we need additional states that only encode a label in $C$ -- these states occur only in the root of the run.

Recall that the notion ``context of siblings'' can be expressed formally using the context decomposition operator $\decomp: F \Rightarrow F(F' \times \Id)$ from Definition \ref{def:decomp}.

\begin{definition}
\label{def:algebraicAutomaton}
    Let $\mathsf{C} = (C, [\gamma_0, \gamma_1], U)$ be a finite coherent $(F+G)$-algebra with a recognising set. Define the \emph{algebraic automaton} $\algaut_{\mathsf{C}} \coloneqq (Q, \delta, Q_I, \Acc)$ as follows:
    \begin{itemize}
        \item $Q \coloneqq C + F'C \times C$;
        \item $Q_I \coloneqq \inj{1}[U]$;
        \item $\delta(\inj{1}(c)) \coloneqq \{ (F\inj{2} \circ \decomp_C)(\bar c)) \mid \gamma_0(\bar c) = c \}$, for $c \in C$,
        \quad
        $\delta(\inj{2}(\bar c', c)) \coloneqq \delta(\inj{1}(c))$, for $(\bar c', c) \in F'C \times C$;
        \item $\Acc \coloneqq \{ \inj{1}(c_0) \cdot (\inj{2}(\bar c'_n, c_n))_{n > 0} \mid \forall m (c_m = \gamma_1((\bar c'_n)_{n > m})) \}$.
    \end{itemize}
\end{definition}

In the above definition of $\delta$, transitions from an automaton state labelled with $c \in C$ cover all possible decompositions of all $\bar c$ such that $\gamma_0(\bar c) = c$. The algebraic automaton is defined such that it accepts the same thin behaviours as the corresponding coherent algebra. Furthermore, it has precisely one run on each thin coalgebra, thus it is unambiguous. The rest of the section is dedicated to proving these statements. 

\subsection{Markings}
In order to relate pre-runs of the algebraic automaton with the corresponding coherent algebra, we introduce the notion of \emph{marking}, which generalises consistent markings on binary trees~\cite[Section~7.1]{Skrzypczak2016}.

\begin{definition}
    Let $(C, \gamma = [\gamma_0, \gamma_1])$ be a coherent $(F+G)$-algebra and let $(X, \xi)$ be an $F$-coalgebra. A~\emph{marking} of $(X,\xi)$ with $(C,\gamma)$ is a map $\mu: X \to C$ satisfying:
    \begin{enumerate}
        \item[(i)] $\mu: (X, \xi) \to (C, \gamma_0)$ is an $F$-coalgebra-to-algebra morphism;
        \item[(ii)] for all $(x_n)_{n\in\omega} \in X^\omega$, $(\bar x_n')_{n > 0} \in GX$ with $\forall n(\trig_X(\bar x'_{n+1}, x_{n+1}) = \xi(x_n))$:  $\gamma_1(G\mu((\bar x_n')_{n>0})) = \mu(x_0)$.
    \end{enumerate}
\end{definition}

Roughly speaking, property (i) of markings is the algebraic counterpart to property (ii) of pre-runs of the algebraic automaton, while property (ii) of markings is the algebraic counterpart to property (iii) of pre-runs. So, intuitively, pre-runs of the algebraic automaton compute a marking. The precise connection between markings and pre-runs of the algebraic automaton is given in the following statement.

\begin{proposition}
\label{prop:markingsAndPreruns}
    Let $\mathsf C = (C, \gamma = [\gamma_0,\gamma_1], U)$ be a finite coherent $(F+G)$-algebra with a recognising set, and let $(X, \xi, x_I)$ be a pointed $F$-coalgebra.
    \begin{enumerate}
        \item[(i)] If $(R, \rho = \langle \rho_F, \rho_X, \rho_Q \rangle, r_I)$ is a pre-run of $\algaut_{\mathsf{C}}$ on $(X, \xi, x_I)$, then $[\id, \prj{2}] \circ \rho_Q: R \to C$ is a marking of $(R, \rho_F)$ with $(C, \gamma)$ (see Figure \ref{fig:runToMarkings}).
        \item[(ii)] If $\mu: X \to C$ is a marking of $(X,\xi)$ with $(C,\gamma)$, then there exists a pre-run $(R,\rho = \langle \rho_F, \rho_X, \rho_Q \rangle, r_I)$ of $\algaut_{\mathsf{C}}$ on $(X,\xi, x_I)$ with $\rho_Q(r_I) \in \inj{1}[C]$ and $[\id,\prj{2}] \circ \rho_Q = \mu \circ \rho_X$ (see Figure \ref{fig:markingsToRun}).
    \end{enumerate}
    \begin{figure}[ht]
        \centering
        \begin{subfigure}[b]{0.5\textwidth}
            \centering
            \begin{tikzcd}
                R & {C + (F'C \times C)} & C \\
                FR && FC
                \arrow["{{\rho_Q}}", from=1-1, to=1-2]
                \arrow["{\rho_F}", from=1-1, to=2-1]
                \arrow["{{[\id, \prj{2}]}}", from=1-2, to=1-3]
                \arrow["{F([\id,\prj2] \circ \rho_Q)}"', from=2-1, to=2-3]
                \arrow["{\gamma_0}"', from=2-3, to=1-3]
            \end{tikzcd}
            \caption{Property (i)}
            \label{fig:runToMarkings}
        \end{subfigure}%
        \begin{subfigure}[b]{0.5\textwidth}
            \centering
            \begin{tikzcd}
                FR & R & {C+(F'C\times C)} \\
                FX & X & C
                \arrow["{F\rho_X}"', from=1-1, to=2-1]
                \arrow["{\rho_F}"', from=1-2, to=1-1]
                \arrow["{\rho_Q}", from=1-2, to=1-3]
                \arrow["{\rho_X}", from=1-2, to=2-2]
                \arrow["{[\id,\prj2]}", from=1-3, to=2-3]
                \arrow["\xi", from=2-2, to=2-1]
                \arrow["\mu"', from=2-2, to=2-3]
            \end{tikzcd}
            \caption{Property (ii)}
            \label{fig:markingsToRun}
        \end{subfigure}
        \caption{Diagrams for Proposition \ref{prop:markingsAndPreruns}.}
        \label{fig:markingsAndRuns}
    \end{figure}
\end{proposition}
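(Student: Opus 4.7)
\begin{proof*}{Proof plan}
The plan is to handle the two directions separately, with the common thread being the interplay between $\decomp$, $\trig$, and the algebra structure, as controlled by Lemma~\ref{lem:decompProperties}.

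For (i), set $\mu_R := [\id, \prj{2}] \circ \rho_Q$ and observe that the two marking axioms correspond, respectively, to pre-run properties (ii) and (iii). The first marking axiom follows by applying $F[\id,\prj{2}]$ to both sides of the transition condition $F\rho_Q(\rho_F(r)) = F\inj{2}(\decomp_C(\bar c_r))$ (where $\bar c_r$ witnesses the choice of transition in $\delta(\rho_Q(r))$), and simplifying via $[\id,\prj{2}] \circ \inj{2} = \prj{2}$ and Lemma~\ref{lem:decompProperties}(i). For the second axiom, applied to sequences $(r_n)_{n\in\omega}$ and $(\bar r'_n)_{n > 0}$ with $\trig_R(\bar r'_{n+1}, r_{n+1}) = \rho_F(r_n)$, the acceptance condition extracts labels $c_n \in C$ and contexts $\bar c'_n \in F'C$ satisfying $c_m = \gamma_1((\bar c'_n)_{n > m})$ for every $m$. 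The equality $c_m = \mu_R(r_m)$ is immediate; the nontrivial task is to show $\bar c'_n = F'\mu_R(\bar r'_n)$. For this, naturality of $\trig$ turns the transition condition into $\trig_Q(F'\rho_Q(\bar r'_n), \rho_Q(r_n)) = F\inj{2}(\decomp_C(\bar c_{r_{n-1}}))$, and since the right-hand side has base in $\inj{2}[F'C \times C]$ and $F'$ preserves monos (because analytic functors preserve weak pullbacks), $F'\rho_Q(\bar r'_n)$ factors uniquely as $F'\inj{2}(\bar s'_n)$. Injectivity of $F\inj{2}$ then pulls $\trig$ back under $\inj{2}$, and Lemma~\ref{lem:decompProperties}(ii) applied to the resulting identity $\decomp_C(\bar c_{r_{n-1}}) = \trig_{F'C\times C}(\bar s'_n, (\bar c'_n, c_n))$ gives $F'\prj{2}(\bar s'_n) = \bar c'_n$, which is equivalent to $F'\mu_R(\bar r'_n) = \bar c'_n$.

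For (ii), the pre-run is constructed explicitly on $S := X + X \times F'X$: with $\iota(\bar x', x) := \inj{2}(x, \bar x')$, set $\rho_X := [\id, \prj{1}]$, $\rho_Q(\inj{1}(x)) := \inj{1}(\mu(x))$, $\rho_Q(\inj{2}(x, \bar x')) := \inj{2}(F'\mu(\bar x'), \mu(x))$, and $\rho_F := F\iota \circ \decomp_X \circ \xi \circ \rho_X$. Take $(R, \rho, r_I)$ to be the reachable sub-coalgebra from $r_I := \inj{1}(x_I)$. The compatibility conditions $\rho_Q(r_I) \in \inj{1}[C]$ and $[\id,\prj{2}] \circ \rho_Q = \mu \circ \rho_X$ are immediate. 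Pre-run property (i) reduces to $\rho_X \circ \iota = \prj{2}$ together with Lemma~\ref{lem:decompProperties}(i); property (ii) follows from naturality of $\decomp$ (yielding $F(F'\mu \times \mu) \circ \decomp_X = \decomp_C \circ F\mu$) combined with marking axiom (i); and property (iii) is obtained by unfolding an infinite path via Lemma~\ref{lem:decompProperties}(iii) into $r_n = \inj{2}(x_n, \bar x'_n)$ with $\trig_X(\bar x'_{n+1}, x_{n+1}) = \xi(x_n)$ for all $n \geq 0$, and then applying marking axiom (ii) to each tail to obtain $\mu(x_m) = \gamma_1((F'\mu(\bar x'_n))_{n > m})$, which is precisely what membership of $(\rho_Q(r_n))$ in $\Acc$ demands.

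The main obstacle is the identification $\bar c'_n = F'\mu_R(\bar r'_n)$ in (i). The acceptance condition of $\algaut_{\mathsf C}$ records only a single $\bar c'_n \in F'C$ as a decoration on each non-root state label, whereas $\bar r'_n$ lives in $F'R$ and carries strictly more information about the run coalgebra; bridging the gap requires simultaneously exploiting the ``abstract'' characterisation of $\decomp$ via Lemma~\ref{lem:decompProperties}(ii) and the factoring of $F'\rho_Q$-values through $\inj{2}$ that is afforded by preservation of monomorphisms.
\end{proof*}
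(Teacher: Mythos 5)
Your proposal is correct and follows essentially the same route as the paper's proof: part (i) proves the first marking axiom by pushing the transition condition through $F[\id,\prj{2}]$ and Lemma \ref{lem:decompProperties}(i), and the second by combining the $\Acc$-membership with Lemma \ref{lem:decompProperties}(ii) applied after factoring $F'\rho_Q(\bar r'_n)$ through $F'\inj{2}$ (the paper packages this factoring as a map $\hat\rho_Q$ on the states labelled in $\inj{2}[F'C\times C]$), while part (ii) builds the pre-run explicitly from $\decomp_X\circ\xi$ on a carrier whose reachable part is isomorphic to the paper's $X\times Q$, with the three pre-run properties verified exactly as in the paper via Lemma \ref{lem:decompProperties}(i), naturality of $\decomp$, and Lemma \ref{lem:decompProperties}(iii) plus marking axiom (ii). The only step you leave implicit in (i) is that marking axiom (ii) concerns sequences starting at an arbitrary $r_0$, whereas pre-run property (iii) only covers paths from $r_I$, so one must first prepend a finite path from $r_I$ to $r_0$ (available by reachability of pre-runs) and then extract the required equation from the $\forall m$ clause in the definition of $\Acc$.
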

\begin{proof*}{Proof (Sketch)}
    \textbf{(i).} It can be verified that $[\id, \prj2] \circ \rho_Q: R \to C$ satisfies the properties of markings, using Lemma \ref{lem:decompProperties}. The proof of property (i) of markings uses property (ii) of the pre-run $(R, \rho, r_I)$, while for property (ii) of markings we use property (iii) of pre-runs.

    \textbf{(ii).} We define a pointed $(F \times \Delta_X \times \Delta_Q)$-coalgebra $\mathsf{R} \coloneqq (R, \langle \rho_F,\rho_X,\rho_Q \rangle, r_I)$ with $R \coloneqq X \times Q$, $\rho_X \coloneqq \prj{1}$ and $\rho_Q \coloneqq \prj{2}$. The marking $\mu$ is used to define $r_I \coloneqq (x_I, \inj{1} \circ \mu(x_I))$ and:
    \begin{align*}
        \rho_F &\coloneqq X \times Q \xrightarrow{\prj{1}} X \xrightarrow{\xi} FX \xrightarrow{\decomp_X} F(F'X \times X) \xrightarrow{F(F'\mu \times \langle \id, \mu \rangle)} F(F'C \times (X \times C)) \\
        &\hspace{154px} \xrightarrow{\cong} F(X \times (F'C \times C)) \xrightarrow{F(\id \times \inj{2})} F(X \times Q).
    \end{align*}
    It can be verified that the reachable subcoalgebra of $\mathsf{R}$ is a pre-run, using Lemma \ref{lem:decompProperties} and properties of the marking $\mu$.
\end{proof*}

The benefit of working with markings instead of (pre-)runs is that markings are defined solely in terms of the algebra, as opposed to in terms of the algebraic automaton. We will see in Lemma \ref{lem:F+GcoalgebraOnX} that by equipping $(X,\xi)$ with a suitable $(F+G)$-coalgebra structure, markings turn into $(F+G)$-coalgebra-to-algebra morphisms. This will allow us to find existence and uniqueness properties of markings that follow from the recursive structure of thin behaviours. Consequently, Proposition \ref{prop:markingsAndPreruns} will allow us to draw conclusions about pre-runs of the algebraic automaton.

\paragraph{Properties of Markings} The first property of markings is that every thin coalgebra can be marked. Concretely, for all thin coalgebras $(X,\xi)$ and all coherent algebras $(C,\gamma)$, we show that $\mu \coloneqq \cev_{(C,\gamma)} \circ \tbeh_{(X,\xi)}$ is a marking of $(X,\xi)$ with $(C,\gamma)$. Our strategy is to show that $\cev_{(C,\gamma)}$ is a marking and that markings are preserved under precomposition with $F$-coalgebra morphisms.

\begin{lemma}
\label{lem:cevIsMarking}
    If $(C,\gamma)$ is a coherent $(F+G)$-algebra, then $\cev_{(C,\gamma)}: (\thin Z, \thin \beta) \to (C,\gamma)$ is a marking of $(\thin Z, \thin \zeta)$ with $(C, \gamma)$.
\end{lemma}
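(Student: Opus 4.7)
The plan is to verify the two defining properties of a marking directly, exploiting the fact that $\cev_{(C,\gamma)}$ is an $(F+G)$-algebra morphism $(\thin Z, \thin \beta) \to (C, \gamma)$, so that it commutes with both $\thin \beta_0$ and $\thin \beta_1$, combined with the two properties (\ref{eq:betaZetaProp1}) and (\ref{eq:betaZetaProp2}) linking $\thin \beta$ to $\thin \zeta$. No sophisticated machinery is needed; the statement essentially says that the algebraic description of $\thin \beta$ in terms of $\gamma$ transports correctly through the coalgebraic description given by $\thin \zeta$.

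For property (i) of markings, I need to show $\cev_{(C,\gamma)} = \gamma_0 \circ F\cev_{(C,\gamma)} \circ \thin \zeta$. The $F$-component of the morphism equation for $\cev_{(C,\gamma)}$ gives $\gamma_0 \circ F\cev_{(C,\gamma)} = \cev_{(C,\gamma)} \circ \thin \beta_0$, so the right-hand side becomes $\cev_{(C,\gamma)} \circ \thin \beta_0 \circ \thin \zeta$. Using (\ref{eq:betaZetaProp1}), namely $\thin \zeta = (\thin \beta_0)^{-1}$, we obtain $\thin \beta_0 \circ \thin \zeta = \id_{\thin Z}$, which yields the desired equality.

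For property (ii), take $(z_n)_{n\in\omega} \in (\thin Z)^\omega$ and $(\bar z'_n)_{n > 0} \in (F'\thin Z)^\omega$ with $\trig_{\thin Z}(\bar z'_{n+1}, z_{n+1}) = \thin \zeta(z_n)$ for all $n$. Property (\ref{eq:betaZetaProp2}) then gives $\thin \beta_1((\bar z'_n)_{n > 0}) = z_0$. Applying $\cev_{(C,\gamma)}$ and using the $G$-component of the morphism equation, $\cev_{(C,\gamma)} \circ \thin \beta_1 = \gamma_1 \circ G\cev_{(C,\gamma)}$, one obtains
\[
\cev_{(C,\gamma)}(z_0) \;=\; \cev_{(C,\gamma)}\bigl(\thin \beta_1((\bar z'_n)_{n > 0})\bigr) \;=\; \gamma_1\bigl(G\cev_{(C,\gamma)}((\bar z'_n)_{n > 0})\bigr),
\]
which is precisely what is required.

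There is no real obstacle: both properties reduce to one-line computations once the morphism property of $\cev_{(C,\gamma)}$ is combined with (\ref{eq:betaZetaProp1}) and (\ref{eq:betaZetaProp2}). The only point to be careful about is to recognise that the hypothesis $\trig_{\thin Z}(\bar z'_{n+1}, z_{n+1}) = \thin \zeta(z_n)$ in property (ii) is exactly the antecedent of (\ref{eq:betaZetaProp2}), so that the latter can be applied verbatim to obtain $z_0 = \thin \beta_1((\bar z'_n)_{n>0})$ before applying $\cev_{(C,\gamma)}$.
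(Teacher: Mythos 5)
Your proof is correct and follows essentially the same route as the paper: both properties are obtained from the $(F+G)$-algebra morphism equations for $\cev_{(C,\gamma)}$ together with Equations (\ref{eq:betaZetaProp1}) and (\ref{eq:betaZetaProp2}). The only cosmetic difference is in property (i), where you use $\thin\beta_0 \circ \thin\zeta = \id$ directly, whereas the paper inserts $\thin\zeta \circ \thin\beta_0 = \id$ and then cancels the epimorphism $\thin\beta_0$; both are immediate from $\thin\zeta = (\thin\beta_0)^{-1}$.
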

\begin{proof} \qedoverwrite
    To see that $\cev_{(C, \gamma)}$ satisfies condition (i) of markings, i.e., $\cev_{(C, \gamma)}: (\thin Z, \thin \zeta) \to (C, \gamma_0)$ is an $F$-coalgebra-to-algebra morphism, consider the diagram to the right. We have:

    \noindent\begin{minipage}{0.75\textwidth}
        \begin{equation*}
            \cev_{(C, \gamma)} \circ \thin \beta_0 = \gamma_0 \circ F\cev_{(C, \gamma)} = \gamma_0 \circ F\cev_{(C, \gamma)} \circ \thin \zeta \circ \thin \beta_0,
        \end{equation*}
        where the first equality uses that $\cev_{(C, \gamma)}$ is an $(F+G)$-algebra morphism and the second equality uses Equation \eqref{eq:betaZetaProp1}. Now since $\thin \beta_0$ is epic, we conclude $\cev_{(C, \gamma)} = \gamma_0 \circ F\cev_{(C, \gamma)} \circ \thin \zeta$, i.e., $\cev_{(C, \gamma)}$ is an $F$-coalgebra-to-algebra morphism.
    \end{minipage}%
    \begin{minipage}{0.25\textwidth}
        \vspace{-20px}
        \begin{equation*}
            \begin{tikzcd}
                {F(\thin Z)} & {F(C)} \\
                {\thin Z} & C
                \arrow["{F\cev}", from=1-1, to=1-2]
                \arrow["{\thin \beta_0}", from=1-1, to=2-1]
                \arrow["{\gamma_0}", from=1-2, to=2-2]
                \arrow["{\thin \zeta}", curve={height=-6pt}, from=2-1, to=1-1]
                \arrow["\cev"', from=2-1, to=2-2]
            \end{tikzcd}
        \end{equation*}
    \end{minipage}

    To see that $\cev_{(C, \gamma)}$ satisfies condition (ii) of markings, let $(z_n)_{n \in \omega} \in (\thin Z)^\omega$ and $(\bar z_n')_{n > 0} \in G\thin Z$ satisfy $\trig_{\thin Z}(\bar z_{n+1}', z_{n+1}) = \thin \zeta(z_n)$, for all $n \in \omega$. It follows from Equation \eqref{eq:betaZetaProp2} that $\thin \beta_1((\bar z_n')_{n > 0}) = z_0$. Hence:
    \begin{equation*}
        \cev_{(C, \gamma)}(z_0) = \cev_{(C, \gamma)}(\thin \beta_1((\bar z_n')_{n > 0})) = (\gamma_1 \circ G\cev_{(C,\gamma)}) ((\bar z_n')_{n > 0}). \qedhere
    \end{equation*}
\end{proof}

\begin{lemma}
\label{lem:markingsPreservedByPrecomp}
    If $\mu: (X, \xi) \to (C, \gamma)$ is a marking and $f: (Y,\upsilon) \to (X,\xi)$ is an $F$-coalgebra morphism, then $\mu \circ f$ is a marking.
\end{lemma}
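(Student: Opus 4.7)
The plan is to verify conditions (i) and (ii) of markings for $\mu \circ f$ by directly transporting them along $f$, using that $f$ is an $F$-coalgebra morphism (so $Ff \circ \upsilon = \xi \circ f$) together with the naturality of $\trig$.

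For condition (i), I would run a short diagram chase: starting from $\mu$ being an $F$-coalgebra-to-algebra morphism, I get $\mu = \gamma_0 \circ F\mu \circ \xi$, and then pre-composing with $f$ and using $\xi \circ f = Ff \circ \upsilon$ yields
\begin{equation*}
\mu \circ f \;=\; \gamma_0 \circ F\mu \circ \xi \circ f \;=\; \gamma_0 \circ F\mu \circ Ff \circ \upsilon \;=\; \gamma_0 \circ F(\mu \circ f) \circ \upsilon,
\end{equation*}
which is exactly condition (i) for $\mu \circ f$.

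For condition (ii), suppose I am given $(y_n)_{n \in \omega} \in Y^\omega$ and $(\bar y'_n)_{n > 0} \in GY$ satisfying $\trig_Y(\bar y'_{n+1}, y_{n+1}) = \upsilon(y_n)$ for all $n$. I set $x_n \coloneqq f(y_n)$ and $\bar x'_n \coloneqq F'f(\bar y'_n)$. The key step is to show that $(x_n)_{n \in \omega}$ and $(\bar x'_n)_{n > 0}$ satisfy the analogous compatibility condition in $(X, \xi)$. This is where I use naturality of $\trig \colon F' \times \Id \Rightarrow F$ applied to $f$:
\begin{equation*}
\trig_X(\bar x'_{n+1}, x_{n+1}) \;=\; \trig_X(F'f(\bar y'_{n+1}), f(y_{n+1})) \;=\; Ff(\trig_Y(\bar y'_{n+1}, y_{n+1})) \;=\; Ff(\upsilon(y_n)) \;=\; \xi(x_n),
\end{equation*}
where the last equality uses that $f$ is a coalgebra morphism. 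Now condition (ii) applied to $\mu$ gives $\gamma_1(G\mu((\bar x'_n)_{n > 0})) = \mu(x_0)$, and since $G\mu \circ GF'f = G(\mu \circ f)$ and $\mu(x_0) = (\mu \circ f)(y_0)$, this reads $\gamma_1(G(\mu \circ f)((\bar y'_n)_{n > 0})) = (\mu \circ f)(y_0)$, which is condition (ii) for $\mu \circ f$.

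There is no real obstacle here; the whole argument is a routine naturality and composition check, with naturality of $\trig$ being the only slightly delicate ingredient. The lemma is the "functoriality" counterpart to Lemma~\ref{lem:cevIsMarking} and will be used to conclude that $\cev_{(C,\gamma)} \circ \tbeh_{(X,\xi)}$ is a marking of every thin coalgebra $(X,\xi)$.
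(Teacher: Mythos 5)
Your proof is correct and follows essentially the same route as the paper's: condition (i) by the same diagram chase with $\xi \circ f = Ff \circ \upsilon$, and condition (ii) by transporting the witnessing sequences along $f$ via naturality of $\trig$ and then invoking condition (ii) for $\mu$. No issues.
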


\begin{proposition}[Existence of Markings]
\label{prop:existenceMarkings}
    For every thin $F$-coalgebra $(X, \xi)$ and every coherent $(F+G)$-algebra $(C, \gamma)$, there exists a marking $\mu$ of $(X, \xi)$ with $(C, \gamma)$ given by $\mu = \cev_{(C,\gamma)} \circ \tbeh_{(X,\xi)}$.
    \begin{equation*}
        \begin{tikzcd}
            X & {\thin Z} & C \\
            FX & {F\thin Z} & FC
            \arrow["{\tbeh_{(X,\xi)}}", from=1-1, to=1-2]
            \arrow["\xi"', from=1-1, to=2-1]
            \arrow["{\cev_{(C,\gamma)}}", from=1-2, to=1-3]
            \arrow["{F\tbeh_{(X,\xi)}}"', from=2-1, to=2-2]
            \arrow["{\thin \zeta}"', from=2-2, to=1-2]
            \arrow["{F\cev_{(C,\gamma)}}"', from=2-2, to=2-3]
            \arrow["{\gamma_0}"', from=2-3, to=1-3]
        \end{tikzcd}
    \end{equation*}
\end{proposition}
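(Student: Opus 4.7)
The plan is simply to compose the two preceding lemmas. Since $(X,\xi)$ is a thin $F$-coalgebra, finality of $(\thin Z, \thin \zeta)$ among thin $F$-coalgebras yields a (unique) $F$-coalgebra morphism $\tbeh_{(X,\xi)} \colon (X,\xi) \to (\thin Z, \thin \zeta)$. By Lemma~\ref{lem:cevIsMarking}, the map $\cev_{(C,\gamma)} \colon \thin Z \to C$ is a marking of $(\thin Z, \thin \zeta)$ with $(C,\gamma)$. Then Lemma~\ref{lem:markingsPreservedByPrecomp}, applied to the $F$-coalgebra morphism $\tbeh_{(X,\xi)}$, lets us precompose and conclude that $\mu \coloneqq \cev_{(C,\gamma)} \circ \tbeh_{(X,\xi)}$ is a marking of $(X,\xi)$ with $(C,\gamma)$. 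The commuting diagram in the statement just records the two constituent squares: the outer $F$-square for $\tbeh_{(X,\xi)}$ and the inner coalgebra-to-algebra triangle (unfolded as an $F$-square) for $\cev_{(C,\gamma)}$.

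There is no genuine obstacle here, since all the real work has been done in Lemmas~\ref{lem:cevIsMarking} and~\ref{lem:markingsPreservedByPrecomp}. The only ingredient that must be invoked implicitly is that thinness of $(X,\xi)$ is precisely what guarantees the existence of $\tbeh_{(X,\xi)}$; for non-thin coalgebras the construction would break at this step, which is why the hypothesis appears in the statement. So the proof is a one-line composition, and the diagram merely witnesses that the two squares glue together along $F\tbeh_{(X,\xi)}$ and $\thin \zeta$.
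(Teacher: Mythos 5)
Your proof is correct and is essentially identical to the paper's: both invoke Lemma~\ref{lem:cevIsMarking} to see that $\cev_{(C,\gamma)}$ is a marking of $(\thin Z, \thin\zeta)$ and then Lemma~\ref{lem:markingsPreservedByPrecomp} to precompose with $\tbeh_{(X,\xi)}$. Your added remark that thinness of $(X,\xi)$ is what guarantees the existence of $\tbeh_{(X,\xi)}$ is a correct and useful observation, though the paper leaves it implicit.
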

\begin{proof}
    By Lemma \ref{lem:cevIsMarking}, $\cev_{(C,\gamma)}: \thin Z \to C$ is a marking of $(\thin Z, \thin \zeta)$ with $(C, \gamma)$. By Lemma \ref{lem:markingsPreservedByPrecomp}, $\cev_{(C, \gamma)} \circ \tbeh_{(X, \xi)}$ is a marking of $(X,\xi)$ with $(C, \gamma)$.
\end{proof}

The second central property of markings is uniqueness: there do not exist two distinct markings of a given thin coalgebra with a given coherent algebra (Proposition \ref{prop:uniquenessOfMarkings}). The key insight behind the proof is that every thin coalgebra $(X, \xi)$ can be transformed into a recursive $(F+G)$-coalgebra such that markings of $(X,\xi)$ become $(F+G)$-coalgebra-to-algebra morphisms (Lemma \ref{lem:F+GcoalgebraOnX}). Uniqueness of markings will then follow from the fact that coalgebra-to-algebra morphisms with a recursive domain coincide.

The recursive $(F+G)$-coalgebra structure on $X$ is inherited from a canonical recursive $(F+G)$-coalgebra structure $\eta$ on $\thin Z$. Intuitively, $\eta$ decomposes a normal term $z \in \thin Z$ into its normal subterms.

\begin{definition}
\label{def:etaMap}
    Let $\iota: \thin Z \to A$ be the map sending each thin behaviour to its unique normal representative. Define an $(F+G)$-coalgebra structure $\eta$ on $\thin Z$ by $\eta \coloneqq (F+G)\interpr- \circ \alpha^{-1} \circ \iota$.
    \begin{equation*}
        \begin{tikzcd}
            A & {\thin Z} \\
            {(F+G)A} & {(F+G)\thin Z}
            \arrow["{\interpr-}"', curve={height=12pt}, from=1-1, to=1-2]
            \arrow["{\alpha^{-1}}"', from=1-1, to=2-1]
            \arrow["\iota"', curve={height=12pt}, from=1-2, to=1-1]
            \arrow["\eta", from=1-2, to=2-2]
            \arrow["{(F+G)\interpr-}"', curve={height=14pt}, from=2-1, to=2-2]
        \end{tikzcd}
    \end{equation*}
\end{definition}

Next, we show that we can define an $(F+G)$-coalgebra structure on any thin coalgebra that turns markings into $(F+G)$-coalgebra-to-algebra morphisms.

\begin{lemma}
\label{lem:F+GcoalgebraOnX}
    Let $(X, \xi)$ be a thin $F$-coalgebra and $(C, \gamma)$ be a coherent $(F+G)$-algebra. There exists an $(F+G)$-coalgebra structure $\upsilon$ on $X$ such that:
    \begin{enumerate}
        \item[(i)] $\tbeh \coloneqq \tbeh_{(X,\xi)}$ is an $(F+G)$-coalgebra morphism from $(X, \upsilon)$ to $(\thin Z, \eta)$, and
        \item[(ii)] every marking $\mu: X \to C$ is an $(F+G)$-coalgebra-to-algebra morphism $(X,\upsilon) \to (C,\gamma)$.
    \end{enumerate}
    \[\begin{tikzcd}
        & FX & {F\thin Z} \\
        C & X & {\thin Z} \\
        {(F+G)C} & {(F+G)X} & {(F+G)\thin Z}
        \arrow["F\tbeh", from=1-2, to=1-3]
        \arrow["\xi"', from=2-2, to=1-2]
        \arrow["\mu"', from=2-2, to=2-1]
        \arrow["\tbeh", from=2-2, to=2-3]
        \arrow["\upsilon", dashed, from=2-2, to=3-2]
        \arrow["{\thin \zeta}"', from=2-3, to=1-3]
        \arrow["{\eta}", from=2-3, to=3-3]
        \arrow["\gamma", from=3-1, to=2-1]
        \arrow["{(F+G)\mu}", from=3-2, to=3-1]
        \arrow["{(F+G)\tbeh}"', from=3-2, to=3-3]
    \end{tikzcd}\]
\end{lemma}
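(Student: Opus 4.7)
My plan is to define $\upsilon\colon X \to (F+G)X$ by case analysis on whether $\eta(\tbeh(x))$ lies in $F\thin Z$ or in $G\thin Z$. By Definition~\ref{def:etaMap}, the former holds exactly when the unique normal representative $\iota(z)$ of $z \coloneqq \tbeh(x)$ has the form $\alpha_0(\bar a)$ for some $\bar a \in FA$, in which case $\eta(z) = \inj{1}(F\interpr{-}(\bar a))$. Since $z = \thin \beta_0(F\interpr{-}(\bar a))$, Equation~\eqref{eq:betaZetaProp1} rewrites this as $\eta(z) = \inj{1}(\thin \zeta(z))$. I take $\upsilon(x) \coloneqq \inj{1}(\xi(x))$ here; the $F$-coalgebra morphism property of $\tbeh$ then gives $(F+G)\tbeh \circ \upsilon(x) = \inj{1}(F\tbeh(\xi(x))) = \inj{1}(\thin \zeta(z)) = \eta(z)$.

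For the $G$-case, write $\eta(z) = \inj{2}((\bar c_n')_{n > 0})$ and set $z_n \coloneqq \thin \beta_1((\bar c_m')_{m > n})$, so $z_0 = z$. Coherence of $(\thin Z, \thin \beta)$ together with Equation~\eqref{eq:betaZetaProp1} yields $\thin \zeta(z_n) = \trig_{\thin Z}(\bar c_{n+1}', z_{n+1})$ for every $n \in \omega$. I then recursively build sequences $(x_n)_{n \in \omega}$ in $X$ with $x_0 \coloneqq x$ and $(\bar x_n')_{n > 0}$ in $F'X$ satisfying
\[
\trig_X(\bar x_{n+1}', x_{n+1}) = \xi(x_n), \quad F'\tbeh(\bar x_{n+1}') = \bar c_{n+1}', \quad \tbeh(x_{n+1}) = z_{n+1}.
\]
At the $n$-th step, the coalgebra morphism equation $F\tbeh(\xi(x_n)) = \thin \zeta(z_n) = \trig_{\thin Z}(\bar c_{n+1}', z_{n+1})$ lets me invoke the weak pullback property of $\trig$ (Proposition~\ref{prop:trigNaturalitySquareIsWeakPullback}) to obtain the required pair $(\bar x_{n+1}', x_{n+1})$. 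Setting $\upsilon(x) \coloneqq \inj{2}((\bar x_n')_{n > 0})$, I obtain $(F+G)\tbeh \circ \upsilon(x) = \inj{2}((F'\tbeh(\bar x_n'))_{n > 0}) = \inj{2}((\bar c_n')_{n > 0}) = \eta(z)$, completing property~(i).

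Property~(ii) then follows in both cases directly from the marking axioms. In the $F$-case, marking condition~(i) reads $\mu(x) = \gamma_0(F\mu(\xi(x)))$, which is exactly $\gamma \circ (F+G)\mu \circ \upsilon(x)$. In the $G$-case, the sequences $(x_n)_{n\in\omega}$ and $(\bar x_n')_{n > 0}$ just constructed satisfy precisely the hypothesis $\trig_X(\bar x_{n+1}', x_{n+1}) = \xi(x_n)$ of marking condition~(ii), so $\gamma_1(G\mu((\bar x_n')_{n > 0})) = \mu(x_0) = \mu(x)$. The main obstacle I anticipate is the iterative construction in the $G$-case: stitching countably many weak-pullback witnesses into a single coherent stream uses dependent choice, and one must verify carefully that each chosen successor $x_{n+1}$ simultaneously realises the $\trig$ decomposition of $\xi(x_n)$ in $X$ and projects to the intended point $z_{n+1}$ of $\thin Z$. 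Once $\upsilon$ is in hand, the remaining verifications unfold mechanically from the definitions of $\eta$ and of markings.
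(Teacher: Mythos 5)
Your proposal is correct and follows the same overall architecture as the paper's proof: the same case split on whether $\eta(\tbeh(x))$ lands in the $F$- or the $G$-component, the same recursive construction of $(x_n)_{n\in\omega}$ and $(\bar x'_n)_{n>0}$ with the invariants $\tbeh(x_n) = z_n$, $F'\tbeh(\bar x'_{n+1}) = \bar z'_{n+1}$ and $\trig_X(\bar x'_{n+1}, x_{n+1}) = \xi(x_n)$, and the same mechanical verification of (i) and (ii) from the marking axioms. The one genuine difference lies in how you obtain the pair $(\bar x'_{n+1}, x_{n+1})$ at each inductive step. The paper first chooses $x_{n+1} \in \Base_F(\xi(x_n))$ with $\tbeh(x_{n+1}) = z_{n+1}$, then chooses \emph{some} context $\bar x'_{n+1}$ with $\trig_X(\bar x'_{n+1}, x_{n+1}) = \xi(x_n)$, and is then left with the obligation to show $F'\tbeh(\bar x'_{n+1}) = \bar z'_{n+1}$; since the same element of $F\thin Z$ can in general admit several context--element decompositions, this requires the additional fact that $z_{n+1} \notin \Base_{F'}(\bar z'_{n+1})$ (Lemma \ref{lem:etaProperties}(iv), which rests on the major-rank machinery of \cite{ChernevCirsteaHansenKupkeThinCoalg}). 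You instead apply the weak-pullback property of $\trig$ (Proposition \ref{prop:trigNaturalitySquareIsWeakPullback}) to the equation $\trig_{\thin Z}(\bar z'_{n+1}, z_{n+1}) = F\tbeh(\xi(x_n))$, which delivers the element and the context simultaneously with both projection conditions already guaranteed. This is exactly the technique the paper itself uses in the proof of Theorem \ref{thm:automatonAlgebraLanguage}, and here it buys you a shorter argument that bypasses Lemma \ref{lem:etaProperties}(iv) entirely; the trade-off is only that the choice of $(\bar x'_{n+1}, x_{n+1})$ is made wholesale by the weak pullback rather than exhibited concretely via the base. Both versions rely on dependent choice to assemble the stream, as you note.
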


\begin{proposition}[Uniqueness of Markings]
\label{prop:uniquenessOfMarkings}
    For every thin $F$-coalgebra $(X, \xi)$ and every coherent $(F+G)$-algebra $(C, \gamma)$, there is at most one marking of $(X, \xi)$ with $(C, \gamma)$.
\end{proposition}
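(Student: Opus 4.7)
The plan is to reduce the uniqueness of markings to the uniqueness of $(F+G)$-coalgebra-to-algebra morphisms out of a recursive coalgebra. Suppose $\mu_1, \mu_2 \colon X \to C$ are both markings of $(X, \xi)$ with $(C, \gamma)$. By Lemma~\ref{lem:F+GcoalgebraOnX}(ii), both $\mu_1$ and $\mu_2$ are $(F+G)$-coalgebra-to-algebra morphisms $(X, \upsilon) \to (C, \gamma)$ for the $(F+G)$-coalgebra structure $\upsilon$ supplied by the lemma. Hence, if I can show that $(X, \upsilon)$ is a recursive $(F+G)$-coalgebra, the defining universal property of recursive coalgebras immediately yields $\mu_1 = \mu_2$.

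To establish recursiveness, I would first observe that the canonical coalgebra $(\thin Z, \eta)$ from Definition~\ref{def:etaMap} is recursive. Indeed, $\eta$ is built from $\alpha^{-1}$, and the Lambek coalgebra $(A, \alpha^{-1})$ of the initial $(F+G)$-algebra is a standard example of a recursive coalgebra. The normal-term section $\iota \colon \thin Z \to A$ is a one-sided inverse of $\interpr{-}$ and, via the definition of $\eta$, identifies coalgebra-to-algebra morphisms out of $(\thin Z, \eta)$ with $(F+G)$-algebra morphisms out of $(A, \alpha)$; hence for every $(F+G)$-algebra $(D, \delta)$ there is a unique coalgebra-to-algebra morphism $(\thin Z, \eta) \to (D, \delta)$. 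Transferring this fact to $(X, \upsilon)$ uses Lemma~\ref{lem:F+GcoalgebraOnX}(i): the map $\tbeh_{(X, \xi)} \colon (X, \upsilon) \to (\thin Z, \eta)$ is an $(F+G)$-coalgebra morphism, so postcomposing the unique morphism out of $(\thin Z, \eta)$ with $\tbeh_{(X, \xi)}$ gives at least one coalgebra-to-algebra morphism out of $(X, \upsilon)$ into any $(F+G)$-algebra.

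The main obstacle is proving \emph{uniqueness} (and not merely existence) of such morphisms out of $(X, \upsilon)$, equivalently showing that $(X, \upsilon)$ itself is recursive rather than just possessing a morphism into a recursive coalgebra. This is where the thinness hypothesis plays the central role: the $(F+G)$-structure $\upsilon$ encodes each state of $X$ either by its $F$-successors, or by a stream in $G$ that packages up an entire infinite path, and such a packaging exists precisely because $X$ is thin. The resulting $(F+G)$-coalgebra is therefore well-founded in the sense that each state is decomposed into strictly ``smaller'' pieces (finite subterms, or a single $G$-step that absorbs an infinite branch), and well-foundedness implies recursiveness in $\Set$ by the standard induction principle for well-founded coalgebras. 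Concretely, I expect to show that any two coalgebra-to-algebra morphisms $(X, \upsilon) \to (C, \gamma)$ agree by well-founded induction along $\upsilon$, from which the uniqueness of markings follows as stated.
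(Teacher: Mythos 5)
Your opening move coincides with the paper's: by Lemma~\ref{lem:F+GcoalgebraOnX}(ii) both markings become $(F+G)$-coalgebra-to-algebra morphisms out of $(X,\upsilon)$, so it suffices to show that $(X,\upsilon)$ is recursive. The gap sits in the step you yourself label ``the main obstacle''. You assert that merely possessing a coalgebra homomorphism into a recursive coalgebra does not yield recursiveness, and you therefore propose to prove well-foundedness of $(X,\upsilon)$ directly, on the grounds that $\upsilon$ ``decomposes each state into strictly smaller pieces'' because $X$ is thin. That is precisely the claim requiring proof, and you never specify the well-founded relation: the ``pieces'' produced by $\upsilon$ are again states of $X$ (on the $F$-branch, $\upsilon(x)$ is just $\inj{1}(\xi(x))$), and nothing intrinsic to $X$ makes them smaller --- the genuine descent is measured by the rank of the associated normal terms in $A$ (cf.\ Lemma~\ref{lem:etaProperties}(iv) and \cite{ChernevCirsteaHansenKupkeThinCoalg}), so making your induction precise would amount to pulling well-foundedness back along a map into $A$. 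Ending with ``I expect to show \dots by well-founded induction'' leaves the argument open exactly where the substance lies.

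The paper closes this step without any bespoke induction: it cites the closure property that any coalgebra admitting a homomorphism into a recursive coalgebra is itself recursive (\cite[Corollary~8.2]{AdamekEtAlWellFoundedRecursiveCoalgebras}; for such set functors this is the reflection of well-foundedness along homomorphisms combined with the general recursion theorem), applied to $\iota \circ \tbeh_{(X,\xi)} \colon (X,\upsilon) \to (A,\alpha^{-1})$, which is an $(F+G)$-coalgebra morphism by Lemma~\ref{lem:F+GcoalgebraOnX}(i) and Lemma~\ref{lem:etaProperties}(i), together with recursiveness of $(A,\alpha^{-1})$. You already have every morphism needed for this; invoking that closure property both resolves your ``obstacle'' and repairs the earlier soft spot where you claim that $\iota$ ``identifies'' coalgebra-to-algebra morphisms out of $(\thin Z,\eta)$ with algebra morphisms out of $(A,\alpha)$ --- the direction of that correspondence needed for uniqueness is not established in your sketch either.
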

\begin{proof}
    Let $\mu_1$ and $\mu_2$ be two markings of $(X,\xi)$ with $(C, \gamma)$.
    By appealing to Lemma \ref{lem:F+GcoalgebraOnX}, we obtain a coalgebra structure $\upsilon: X \to (F+G)X$. Consider the diagram:
    \[\begin{tikzcd}
        C && X && {\thin Z} & A \\
        {(F+G)C} && {(F+G)X} && {(F+G)\thin Z} & {(F+G)A}
        \arrow["{\mu_1}"', shift right, from=1-3, to=1-1]
        \arrow["{\mu_2}", shift left, from=1-3, to=1-1]
        \arrow["{\tbeh_{(X,\xi)}}", from=1-3, to=1-5]
        \arrow["\upsilon", from=1-3, to=2-3]
        \arrow["\iota", from=1-5, to=1-6]
        \arrow["\eta", from=1-5, to=2-5]
        \arrow["{\alpha^{-1}}", from=1-6, to=2-6]
        \arrow["\gamma", from=2-1, to=1-1]
        \arrow["{(F+G)\mu_1}"', shift right, from=2-3, to=2-1]
        \arrow["{(F+G)\mu_2}", shift left, from=2-3, to=2-1]
        \arrow["{(F+G)\tbeh_{(X,\xi)}}"', from=2-3, to=2-5]
        \arrow["{(F+G)\iota}"', from=2-5, to=2-6]
    \end{tikzcd}\]
    We know $\mu_1$, $\mu_2$ are $(F+G)$-coalgebra-to-algebra morphisms and that $\tbeh_{(X,\xi)}$, $\iota$ are $(F+G)$-coalgebra morphisms. Observe that, since $(A,\alpha)$ is an initial $(F+G)$-algebra, the coalgebra $(A,\alpha^{-1})$ is recursive~\cite[Corollary~8.2]{AdamekEtAlWellFoundedRecursiveCoalgebras}. Moreover, any coalgebra mapping into a recursive coalgebra is also recursive~\cite[Corollary~8.2]{AdamekEtAlWellFoundedRecursiveCoalgebras}, hence $(X, \upsilon)$ is recursive. Now $\mu_1$ and $\mu_2$ are two coalgebra-to-algebra morphisms with a recursive coalgebra as their domain, therefore $\mu_1 = \mu_2$.
\end{proof}

In the proof of Proposition \ref{prop:uniquenessOfMarkings}, note the instrumental role of the inductive structure of thin behaviours. It is what allowed us to obtain a recursive $(F+G)$-coalgebra structure on $X$.

\subsection{Acceptance and Unambiguity of the Algebraic Automaton}

\begin{theorem}
\label{thm:algebraicAutomatonLanguage}
    Let $\mathsf C = (C, \gamma = [\gamma_0, \gamma_1], U)$ be a coherent $(F+G)$-algebra with a recognising set. For every thin pointed $F$-coalgebra $(X, \xi, x_I)$, the behaviour of $x_I$ is in the language of $\mathsf C$ if and only if the algebraic automaton $\algaut_{\mathsf{C}}$ accepts $(X, \xi, x_I)$.
\end{theorem}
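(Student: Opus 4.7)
My approach is to combine Proposition~\ref{prop:markingsAndPreruns} (converting between pre-runs of the algebraic automaton and markings) with Propositions~\ref{prop:existenceMarkings} and~\ref{prop:uniquenessOfMarkings} (existence and uniqueness of markings of thin coalgebras). Set $\mu := \cev_{(C,\gamma)} \circ \tbeh_{(X,\xi)} : X \to C$, which by Proposition~\ref{prop:existenceMarkings} is the unique marking of $(X,\xi)$ with $(C,\gamma)$. Note that $\tbeh_{(X,\xi)}(x_I) \in L(\mathsf{C})$ if and only if $\mu(x_I) \in U$, so it suffices to relate acceptance by $\algaut_{\mathsf{C}}$ to the condition $\mu(x_I) \in U$.

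For the forward direction, assume $\mu(x_I) \in U$. I would apply Proposition~\ref{prop:markingsAndPreruns}(ii) to $\mu$, obtaining a pre-run $(R,\rho = \langle \rho_F,\rho_X,\rho_Q\rangle,r_I)$ of $\algaut_{\mathsf{C}}$ on $(X,\xi,x_I)$ with $\rho_Q(r_I) \in \inj{1}[C]$ and $[\id,\prj{2}] \circ \rho_Q = \mu \circ \rho_X$. Evaluating at $r_I$ gives $[\id,\prj{2}](\rho_Q(r_I)) = \mu(x_I) \in U$; combined with $\rho_Q(r_I) \in \inj{1}[C]$, this forces $\rho_Q(r_I) \in \inj{1}[U] = Q_I$, so the pre-run is accepting and $\algaut_{\mathsf{C}}$ accepts $(X,\xi,x_I)$.

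For the backward direction, suppose $\algaut_{\mathsf{C}}$ has an accepting pre-run $(R,\rho,r_I)$ on $(X,\xi,x_I)$. I would extract two markings of $(R,\rho_F)$ with $(C,\gamma)$: first, $\nu := [\id,\prj{2}] \circ \rho_Q$ by Proposition~\ref{prop:markingsAndPreruns}(i); second, $\mu \circ \rho_X$ by Lemma~\ref{lem:markingsPreservedByPrecomp} applied to the $F$-coalgebra morphism $\rho_X : (R,\rho_F) \to (X,\xi)$. Provided $(R,\rho_F)$ is thin, Proposition~\ref{prop:uniquenessOfMarkings} forces $\nu = \mu \circ \rho_X$; evaluating at $r_I$ and using acceptance $\rho_Q(r_I) \in Q_I = \inj{1}[U]$ then yields $\mu(x_I) = \mu(\rho_X(r_I)) = \nu(r_I) \in U$, as required.

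The main obstacle is therefore showing that $(R,\rho_F)$ is thin. I would argue that, for analytic $F$, any $F$-coalgebra morphism into a thin coalgebra has thin domain: writing $\rho_F(r) = (i,[\phi]_{H_i})$, the morphism identity $\xi(\rho_X(r)) = (i,[\rho_X \circ \phi]_{H_i})$ shows that the outgoing multigraph edges at $r \in R$ and at $\rho_X(r) \in X$ are canonically indexed by the same positions $U_i$. Iterating this edge-level correspondence gives a bijection between infinite paths in $R$ from $r$ and infinite paths in $X$ from $\rho_X(r)$, so the countability granted by thinness of $(X,\xi)$ transfers to $(R,\rho_F)$. This is the only step that genuinely relies on the analytic structure of $F$; the rest of the argument is a mechanical diagram chase through the machinery of markings.
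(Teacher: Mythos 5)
Your proof is correct and takes essentially the same route as the paper's: the forward direction is identical, and your backward direction (producing the two markings $[\id,\prj{2}]\circ\rho_Q$ and $\mu\circ\rho_X$ of $(R,\rho_F)$ and identifying them by uniqueness) is just an unfolding of the paper's step equating $[\id,\prj{2}]\circ\rho_Q$ with $\cev_{(C,\gamma)}\circ\tbeh_{(R,\rho_F)} = \cev_{(C,\gamma)}\circ\tbeh_{(X,\xi)}\circ\rho_X$. The one point you make explicit that the paper leaves implicit is that $(R,\rho_F)$ must itself be thin for the existence/uniqueness of markings to apply to it; your path-lifting argument for why a coalgebra morphism into a thin coalgebra has thin domain is sound (injectivity of the induced map on infinite paths already suffices).
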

\begin{proof}
    Suppose $(\cev_{(C, \gamma)} \circ \tbeh_{(X,\xi)})(x_I) \in U$. By Proposition \ref{prop:existenceMarkings}, $\mu \coloneqq \cev_{(C, \gamma)} \circ \tbeh_{(X,\xi)}$ is a marking of $(X,\xi)$ with $(C,\gamma)$. By Proposition \ref{prop:markingsAndPreruns} (ii), there exists a pre-run $(R,\rho = \langle \rho_F, \rho_X, \rho_Q \rangle,r_I)$ of $\algaut_{\mathsf{C}}$ on $(X,\xi,x_I)$ with $\rho_Q(r_I) \in \inj{1}[C]$ and $[\id,\prj{2}] \circ \rho_Q = \mu \circ \rho_X$.

        Hence $ [\id, \prj{2}] \circ \rho_Q = \mu \circ \rho_X = \cev_{(C, \gamma)} \circ \tbeh_{(X,\xi)} \circ \rho_X$.
        We conclude that $\rho_Q(r_I) \in (\inj{1} \circ \cev_{(C, \gamma)} \circ \tbeh_{(X, \xi)})(x_I) \in \inj{1}[U]$, i.e., $(R, \rho, r_I)$ is an accepting pre-run. By Lemma \ref{lem:preRunImpliesAccepted}, $(X, \xi, x_I)$ is accepted by $\algaut_{\mathsf C}$.
        
        \begin{equation*}
            \begin{tikzcd}
        	R & X && \thin Z \\
        	Q &&& C
        	\arrow["{\rho_X}", from=1-1, to=1-2]
        	\arrow["{\rho_Q}"', from=1-1, to=2-1]
        	\arrow["{\tbeh_{(X, \xi)}}", from=1-2, to=1-4]
        	\arrow["\mu"', from=1-2, to=2-4]
        	\arrow["{\cev_{(C, \gamma)}}", from=1-4, to=2-4]
        	\arrow["{[\id,\prj{2}]}"', from=2-1, to=2-4]
            \end{tikzcd}
        \end{equation*}

    Conversely, suppose there exists an accepting run $(R, \rho = \langle \rho_F, \rho_X, \rho_Q \rangle ,r_I)$ of $\algaut_{\mathsf{C}}$ on $(X, \xi, x_I)$. By Proposition \ref{prop:markingsAndPreruns} (i), the map $\mu = [\id, \prj{2}] \circ \rho_Q$ is a marking of $(R, \rho_F)$ with $(C, \gamma)$. By Propositions \ref{prop:existenceMarkings} and \ref{prop:uniquenessOfMarkings}, we have $\mu = \cev_{(C, \gamma)} \circ \tbeh_{(X, \xi)}$. Hence:

    \vspace{-10px}
    \begin{minipage}{0.6\textwidth}
        \begin{multline*}
            (\cev_{(C, \gamma)} \circ \tbeh_{(X, \xi)})(x_I) = (\cev_{(C, \gamma)} \circ \tbeh_{(X, \xi)} \circ \rho_X)(r_I) = \\
            (\cev_{(C, \gamma)} \circ \tbeh_{(R, \rho_F)})(r_I) = 
            \mu(r_I) = ([\id, \prj{2}] \circ \rho_Q)(r_I).
        \end{multline*}
    \end{minipage}
    \begin{minipage}{0.4\textwidth}
        \begin{equation*}
            \begin{tikzcd}
                R && C \\
                \\
                X && \thin Z
                \arrow["\mu", from=1-1, to=1-3]
                \arrow["{\rho_X}"', from=1-1, to=3-1]
                \arrow["{\tbeh_{(R, \rho_F)}}"{description}, from=1-1, to=3-3]
                \arrow["{\tbeh_{(X, \xi)}}"', from=3-1, to=3-3]
                \arrow["{\cev_{(C, \gamma)}}"', from=3-3, to=1-3]
            \end{tikzcd}
        \end{equation*}
    \end{minipage}
    But since $(R, \rho, r_I)$ is accepting, $\rho_Q(r_I) \in \inj{1}(U)$, so $(\cev_{(C, \gamma)} \circ \tbeh_{(X, \xi)})(x_I) = ([\id, \prj{2}] \circ \rho_Q)(r_I) \in U$.
\end{proof}

In order to prove unambiguity of the algebraic automaton, we first show that in every pre-run $\rho_Q$ is uniquely determined by $\rho_F$ and $\rho_X$.

\begin{lemma}
\label{lem:preRunUniquelyDetermined}
    Let $\mathsf C$ be a coherent $(F+G)$-algebra with a recognising set, and $(X, \xi, x_I)$ be a thin pointed $F$-coalgebra.
    If $(R, \langle \rho_F, \rho_X, \rho_Q \rangle, r_I)$ and $(R, \rho_F, \rho_X, \rho_Q' \rangle, r_I)$ are pre-runs of $\algaut_{\mathsf C}$ on $(X, \xi, x_I)$, then $\rho_Q = \rho_Q'$.
\end{lemma}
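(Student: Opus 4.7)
The plan is to reduce the claim to the uniqueness of an associated marking and then use the rigid definition of the transitions of $\algaut_{\mathsf C}$. By Proposition~\ref{prop:markingsAndPreruns}(i), the maps $\mu \coloneqq [\id, \prj{2}] \circ \rho_Q$ and $\mu' \coloneqq [\id, \prj{2}] \circ \rho_Q'$ are both markings of $(R, \rho_F)$ with $(C, \gamma)$. The argument then splits into (a) showing $\mu = \mu'$, and (b) showing that, once the common marking is fixed, $\rho_Q$ is pointwise determined by $\rho_F$ and $\mu$.

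For (a), I would argue that both $\mu$ and $\mu'$ coincide with $\cev_{(C, \gamma)} \circ \tbeh_{(X, \xi)} \circ \rho_X$. Although $(R, \rho_F)$ itself is not necessarily thin, the composite $\tbeh_{(X, \xi)} \circ \rho_X$ provides a coalgebra morphism $(R, \rho_F) \to (\thin Z, \thin \zeta)$. Following the construction of Lemma~\ref{lem:F+GcoalgebraOnX} with this morphism in place of $\tbeh_{(R, \rho_F)}$, one can endow $R$ with an $(F+G)$-coalgebra structure $\upsilon_R$ such that every marking of $(R, \rho_F)$ with $(C, \gamma)$ becomes an $(F+G)$-coalgebra-to-algebra morphism $(R, \upsilon_R) \to (C, \gamma)$. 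The composite $\iota \circ \tbeh_{(X, \xi)} \circ \rho_X \colon (R, \upsilon_R) \to (A, \alpha^{-1})$ exhibits $(R, \upsilon_R)$ as mapping into the recursive coalgebra $(A, \alpha^{-1})$, so it is itself recursive, and the uniqueness of coalgebra-to-algebra morphisms from a recursive coalgebra yields $\mu = \mu'$.

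For (b), pre-run condition (ii) yields, for each $r \in R$, some $\bar c_r \in FC$ with $\gamma_0(\bar c_r) = \mu(r)$ and $(F\rho_Q \circ \rho_F)(r) = (F\inj{2} \circ \decomp_C)(\bar c_r)$. Applying $F[\id, \prj{2}]$ and invoking Lemma~\ref{lem:decompProperties}(i) identifies $\bar c_r = F\mu(\rho_F(r))$, so $F\rho_Q \circ \rho_F = F\inj{2} \circ \decomp_C \circ F\mu \circ \rho_F$, and likewise for $\rho_Q'$ with $\mu'$; combined with $\mu = \mu'$ this forces $F\rho_Q \circ \rho_F = F\rho_Q' \circ \rho_F$. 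Now for any $r \in \Base_F(\rho_F(r'))$, pick a representative $\phi \colon U_i \to R$ of $\rho_F(r')$ and a position $u_0 \in U_i$ with $\phi(u_0) = r$; any symmetry $\sigma \in H_i$ witnessing the orbit equality $[\rho_Q \circ \phi]_{H_i} = [\inj{2} \circ \psi]_{H_i}$ (where $\psi$ comes from $\decomp_C(F\mu(\rho_F(r')))$) must, by the marking constraint, stabilize $\mu \circ \phi$, and a direct calculation with the $H_i$-action on $F'$ then shows that the $F'C$-component of $\rho_Q(r)$ equals $(i, [u_0, (\mu \circ \phi)|_{U_i \setminus \{u_0\}}]_{H_i})$, independently of the choice of $\sigma$. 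The same holds for $\rho_Q'$, so $\rho_Q(r) = \rho_Q'(r)$. If $r = r_I$ is not itself reached via any $\rho_F$-step but admits an infinite path, then acceptance condition (iii) forces $\rho_Q(r_I) \in \inj{1}[C]$ (since every element of $\Acc$ begins with $\inj{1}$), yielding $\rho_Q(r_I) = \inj{1}(\mu(r_I)) = \rho_Q'(r_I)$.

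The main obstacle is step (a): Proposition~\ref{prop:uniquenessOfMarkings} is proved only for thin coalgebras, but $(R, \rho_F)$ need not itself be thin, even though it maps into the thin $(X, \xi)$. The remedy is to adapt Lemma~\ref{lem:F+GcoalgebraOnX} to $(R, \rho_F)$ by using $\tbeh_{(X, \xi)} \circ \rho_X$ as a surrogate for $\tbeh_{(R, \rho_F)}$; the recursion-uniqueness argument of Proposition~\ref{prop:uniquenessOfMarkings} then goes through unchanged, after which step (b) is routine bookkeeping with the analytic symmetries.
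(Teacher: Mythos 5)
Your proof is correct and follows essentially the same route as the paper's: reduce the problem to uniqueness of markings via Proposition~\ref{prop:markingsAndPreruns}(i) together with Propositions~\ref{prop:existenceMarkings} and~\ref{prop:uniquenessOfMarkings}, and then pin down $\rho_Q$ pointwise by induction on the reachability relation of $(R,\rho_F,r_I)$. The only deviations are cosmetic: the paper applies the marking propositions directly to $(R,\rho_F)$ (which is indeed thin, since the coalgebra morphism $\rho_X$ into the thin $(X,\xi)$ preserves the branching structure exactly, so your recursiveness detour via an adapted Lemma~\ref{lem:F+GcoalgebraOnX} is sound but not needed), your explicit $H_i$-symmetry computation in step (b) is precisely what Lemma~\ref{lem:decompProperties}(ii) packages abstractly and what the paper invokes instead, and the corner case of a root admitting no infinite path is left open in both your argument and the paper's (it is harmless in the only application, where the runs are accepting and hence the root state lies in $\inj{1}[C]$).
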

\begin{proof*}{Proof (Sketch)}
    Let $\mathsf C = (C, \gamma, U)$. By Proposition \ref{prop:markingsAndPreruns} (i), $[\id,\prj{2}] \circ \rho_Q: R \to C$ is a marking of $(R, \rho_F)$ with $(C, \gamma)$. By Propositions \ref{prop:existenceMarkings} and \ref{prop:uniquenessOfMarkings}, $[\id,\prj{2}] \circ \rho_Q = \cev_{(C, \gamma)} \circ \tbeh_{(X, \xi)}$. Similarly, $[\id,\prj{2}] \circ \rho_Q' = \cev_{(C, \gamma)} \circ \tbeh_{(X, \xi)}$, so $[\id,\prj{2}] \circ \rho_Q = [\id,\prj{2}] \circ \rho_Q'$. Now it can be shown by induction on the successor relation of $(R,\rho_F, r_I)$ that $\rho_Q = \rho_Q'$.
\end{proof*}

\begin{theorem}
\label{thm:algebraicAutomatonUnambiguity}
    Let $\mathsf C$ be a coherent $(F+G)$-algebra with a recognising set, and $(X,\xi,x_I)$ be a thin pointed $F$-coalgebra. The algebraic automaton $\algaut_{\mathrm{C}}$ is unambiguous on $(X,\xi,x_I)$.
\end{theorem}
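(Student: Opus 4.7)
\begin{proof*}{Proof (Sketch)}
    The plan is to prove unambiguity by showing that any two accepting runs $\mathsf R_i = (R_i, \rho^i, r^i_I)$, $i \in \{1, 2\}$, of $\algaut_{\mathsf C}$ on $(X, \xi, x_I)$ coincide. Since runs are, by definition, reachable pointed subcoalgebras of the final $(F \times \Delta_X \times \Delta_Q)$-coalgebra $Z^*$, it suffices to construct a common pre-run $\mathsf R$ equipped with pointed $(F \times \Delta_X \times \Delta_Q)$-coalgebra morphisms to each $\mathsf R_i$: the images of $\mathsf R$, $\mathsf R_1$ and $\mathsf R_2$ in $Z^*$ then all coincide (being reachable closures of the common root image), which forces $\mathsf R_1 = \mathsf R_2$ as subsets of $Z^*$.

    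I would take the carrier
    \[
        R \coloneqq \{(r_1, r_2) \in R_1 \times R_2 \mid \rho^1_X(r_1) = \rho^2_X(r_2) \land \rho^1_Q(r_1) = \rho^2_Q(r_2)\},
    \]
    with projections $\pi_i \colon R \to R_i$, root $(r^1_I, r^2_I)$, and $\rho_X \coloneqq \rho^1_X \circ \pi_1$, $\rho_Q \coloneqq \rho^1_Q \circ \pi_1$. That $(r^1_I, r^2_I) \in R$ uses markings: by Proposition~\ref{prop:markingsAndPreruns}(i) each $[\id, \prj{2}] \circ \rho^i_Q$ is a marking of $(R_i, \rho^i_F)$, and Propositions~\ref{prop:existenceMarkings}, \ref{prop:uniquenessOfMarkings} identify it with $\mu \circ \rho^i_X$ for $\mu \coloneqq \cev_{(C,\gamma)} \circ \tbeh_{(X,\xi)}$, so acceptance ($\rho^i_Q(r^i_I) \in \inj{1}[U]$) pins both roots to $\inj{1}(\mu(x_I))$. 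To define $\rho_F(r_1, r_2) \in FR$, note that the same marking identity together with the transition rule of $\algaut_{\mathsf C}$ and Lemma~\ref{lem:decompProperties} gives $F\rho^i_Q \circ \rho^i_F = F\inj{2} \circ \decomp_C \circ F\mu \circ \xi \circ \rho^i_X$, an expression depending on $r_i$ only through $\rho^i_X(r_i)$. Consequently $\rho^1_F(r_1)$ and $\rho^2_F(r_2)$ have matching images in both $FX$ (because each $\rho^i_X$ is an $F$-coalgebra morphism) and $FQ$, and weak-pullback preservation by the analytic $F$ delivers some $\bar r \in FR$ with $F\pi_i(\bar r) = \rho^i_F(r_i)$; choose $\rho_F(r_1, r_2) \coloneqq \bar r$ using a choice function over $R$.

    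It remains to verify, for the reachable subcoalgebra of $(R, \langle \rho_F, \rho_X, \rho_Q \rangle, (r^1_I, r^2_I))$, that it is a pre-run of $\algaut_{\mathsf C}$ on $(X, \xi, x_I)$ and that each $\pi_i$ is a pointed coalgebra morphism to $\mathsf R_i$: properties (i) and (ii) of pre-runs transfer from $\mathsf R_i$ via $\pi_i$ by construction, and property (iii) holds since every infinite $\rho_F$-path in $R$ projects along $\pi_i$ to an infinite $\rho^i_F$-path in $R_i$ carrying identical $\rho_Q$-labels (as $\rho_Q = \rho^i_Q \circ \pi_i$), so membership in $\Acc$ is inherited from the acceptance of $\mathsf R_i$. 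Lemma~\ref{lem:preRunImpliesAccepted} then gives that the image of this pre-run in $Z^*$ is an accepting run containing the common root and contained in both $R_1$ and $R_2$; reachability in $Z^*$ forces it to equal each $R_i$, whence $\mathsf R_1 = \mathsf R_2$. The main obstacle is the non-canonical nature of the choice of $\bar r$ in defining $\rho_F$ (weak pullbacks yield existence only), but this is harmless: the equality $R_1 = R_2$ is driven by $\pi_i$-compatibility on the projections $F\pi_i(\bar r) = \rho^i_F(r_i)$, which holds for every admissible choice of $\bar r$.
\end{proof*}
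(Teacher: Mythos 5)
Your overall strategy --- exhibit a span of pointed $(F \times \Delta_X \times \Delta_Q)$-coalgebras between the two accepting runs and conclude equality from their being reachable subcoalgebras of the final coalgebra --- is the same as the paper's, and your use of the marking identity $[\id,\prj{2}] \circ \rho^i_Q = \cev_{(C,\gamma)} \circ \tbeh_{(X,\xi)} \circ \rho^i_X$ to pin down the roots is correct. The computation $F\rho^i_Q \circ \rho^i_F = F\inj{2} \circ \decomp_C \circ F\mu \circ \xi \circ \rho^i_X$ does indeed follow from property (ii) of pre-runs, the definition of $\delta$, Lemma \ref{lem:decompProperties}(i) and the marking identity, so the transitions of the two runs have matching images in $FX$ and in $FQ$.

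The gap is in the step that ``delivers some $\bar r \in FR$''. Your $R$ is the pullback of $\langle\rho^1_X,\rho^1_Q\rangle$ and $\langle\rho^2_X,\rho^2_Q\rangle$ over $X \times Q$, so weak-pullback preservation lets you lift $\rho^1_F(r_1)$ and $\rho^2_F(r_2)$ to $FR$ only if they have equal images in $F(X \times Q)$. What you have established is equality of the images in $FX$ and in $FQ$ \emph{separately}, and for an analytic functor with non-trivial symmetries this is strictly weaker, because the canonical map $F(X\times Q) \to FX \times FQ$ is not injective: for $F$ the unordered-pair functor, $\{(x_1,q_1),(x_2,q_2)\}$ and $\{(x_1,q_2),(x_2,q_1)\}$ have the same two projections but are distinct, and only one of them need lie in the image of $FR$. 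So the existence of $\bar r$ does not follow from what you have shown. The required equality in $F(X\times Q)$ is in fact true, but establishing it needs an extra argument showing that within a single transition the $Q$-label at each successor position is canonically determined, position-wise, by the $X$-label via $\decomp$ and the marking --- which is essentially the content of the paper's Lemma \ref{lem:preRunUniquelyDetermined} and its induction along the reachability relation. The paper avoids the problem by taking the weak pullback over $X$ alone (legitimate, since $\rho'_X$ and $\rho''_X$ form a cospan of $F$-coalgebra morphisms into $(X,\xi)$), transporting both $Q$-labellings onto that span, and then invoking Lemma \ref{lem:preRunUniquelyDetermined} to show the two labellings coincide. Note that your inference is valid for polynomial $F$, where $F(X\times Q)\to FX\times FQ$ is injective, so the argument works in the classical tree setting but not at the level of generality the theorem claims.
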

\begin{proof}
    Let $(R', \rho' = \langle \rho_F', \rho_X', \rho_Q' \rangle, r_I')$ and $(R'', \rho'' = \langle \rho_F'', \rho_X'', \rho_Q'' \rangle, r_I'')$ be two accepting runs of $\algaut_{\mathsf{C}}$ on $(X, \xi, x_I)$. We prove equality between these runs by exhibiting a span between them.

    \vspace{-10px}
    \noindent\begin{minipage}{0.53\textwidth}
        Since $\rho_X': (R', \rho_F', r_I') \to (X,\xi, x_I)$ and $\rho_X'': (R'', \rho_F'', r_I'') \to (X,\xi, x_I)$ are pointed $F$-coalgebra morphisms and $F$ preserves weak pullbacks, there exists a reachable pointed $F$-coalgebra $(R, \rho_F, r_I)$ with $F$-coalgebra morphisms $f': (R, \rho_F, r_I) \to (R', \rho_F', r_I')$ and $f'': (R, \rho_F, r_I) \to (R'', \rho_F'', r_I'')$ such that $\rho_X' \circ f' = \rho_X'' \circ f''$.
    \end{minipage}
    \begin{minipage}{0.47\textwidth}
        \vspace{-5px}
        \begin{equation*}
            \begin{tikzcd}
        	& {(R',\rho_F',r_I')} \\
        	{(R,\rho_F,r_I)} && {(X,\xi,x_I)} \\
        	& {(R'',\rho_F'',r_I'')}
        	\arrow["{\rho_X'}", from=1-2, to=2-3]
        	\arrow["{f'}", from=2-1, to=1-2]
        	\arrow["{f''}"', from=2-1, to=3-2]
        	\arrow["{\rho_X''}"', from=3-2, to=2-3]
            \end{tikzcd}
        \end{equation*}
    \end{minipage}
    
    Define $\rho_X \coloneqq \rho_X' \circ f'$.  It can be shown that $(R,\langle \rho_F,\rho_X, \rho_Q' \circ f' \rangle, r_I)$ and $(R,\langle \rho_F,\rho_X, \rho_Q'' \circ f'' \rangle, r_I)$ are pre-runs,
    so according to Lemma \ref{lem:preRunUniquelyDetermined}, $\rho_Q' \circ f' = \rho_Q'' \circ f''$. Hence, by setting $\rho_Q \coloneqq \rho_Q' \circ f'$, we have that $f': (R,\langle \rho_F,\rho_X, \rho_Q \rangle, r_I) \to (R',\rho',r_I')$ and $f'': (R,\langle \rho_F,\rho_X, \rho_Q \rangle, r_I) \to (R'',\rho'',r_I'')$ are $(F \times \Delta_X \times \Delta_Q)$-morphisms.
    Consequently, $((R,\rho,r_I), f',f'')$ is a span in the category of pointed $(F \times \Delta_X \times \Delta_Q)$-coalgebras. This means that $(R', \rho', r_I')$ and $(R'', \rho'', r_I'')$ are behaviourally equivalent pointed subcoalgebras of the final coalgebra, hence they are equal.
\end{proof}

\section{Combining the Two Constructions}\label{sec:mainResults}
Here we derive the main results of the paper, by employing the automaton algebra and the algebraic automaton constructions. We begin by showing that, when restricted to thin coalgebras, every parity $F$-automaton has an equivalent unambiguous parity $F$-automaton. We will make use of a property that we hinted at earlier: that rational algebras induce automata with $\omega$-regular acceptance.

\begin{lemma}
\label{lem:algebraicAutomatonOfRationalIsOmegaRegular}
    Let $\mathsf C = (C, \gamma, U)$ be a rational $(F+G)$-algebra with a recognising set. Then the acceptance condition of the algebraic automaton $\algaut_{\mathsf{C}}$ is $\omega$-regular.
\end{lemma}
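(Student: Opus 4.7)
The plan is to express $\Acc$ as a Boolean combination of $\omega$-regular languages over the alphabet $Q$, exploiting the $\omega$-semigroup structure that rationality provides. The foundational building block will be the following: for every $c \in C$, the language
\[L_c \coloneqq \gamma_1^{-1}(c) \subseteq (F'C)^\omega\]
is $\omega$-regular. This is immediate from the algebraic characterisation of $\omega$-regular languages recalled in Section~\ref{sec:preliminaries}: rationality supplies the $\omega$-semigroup homomorphism $(\gamma_2, \gamma_1) : ((F'C)^+, (F'C)^\omega) \to (\widetilde C, \Im(\gamma_1))$ into a finite $\omega$-semigroup, and $L_c$ is the preimage of the (finite) recognising set $\{c\} \cap \Im(\gamma_1)$.

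Next, I would introduce the evident projections $p_C \colon Q \to C$ and $p_{F'C} \colon \inj{2}[F'C \times C] \to F'C$, together with the shape condition $S \coloneqq \{w \in Q^\omega \mid w_0 \in \inj{1}[C] \text{ and } w_n \in \inj{2}[F'C \times C] \text{ for all } n \geq 1\}$, which is clearly $\omega$-regular (a simple safety condition). Unwinding Definition~\ref{def:algebraicAutomaton} I would observe that
\[\Acc = S \,\cap\, \{w \in Q^\omega \mid \forall m \geq 0,\ p_C(w_m) = \gamma_1((p_{F'C}(w_n))_{n > m})\}.\]
The second conjunct is a priori an infinite intersection of $\omega$-regular conditions (one per $m$), which is not in general $\omega$-regular, so the essential move is to complement. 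For each pair $c, c' \in C$ with $c \neq c'$ I would set
\[B_{c,c'} \coloneqq \{w \in Q^\omega \mid p_C(w_0) = c,\ w_n \in \inj{2}[F'C \times C] \text{ for } n \geq 1,\ (p_{F'C}(w_n))_{n \geq 1} \in L_{c'}\},\]
which is $\omega$-regular as an intersection of a regular condition on $w_0$, a safety condition on the remaining letters, and the preimage of the $\omega$-regular language $L_{c'}$ under the letter-wise $F'C$-projection.

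The complement, inside $S$, of the $\forall m$ conjunct then rewrites cleanly as $\bigcup_{c \neq c'} Q^* \cdot B_{c,c'}$, because ``$w$ has some suffix in $B_{c,c'}$'' says exactly that some position $m$ witnesses $p_C(w_m) = c$ together with $(p_{F'C}(w_n))_{n > m} \in L_{c'}$. Left-concatenation with the regular language $Q^*$ preserves $\omega$-regularity, a finite union preserves it, and intersection with $S$ followed by complementation recovers $\Acc$. The principal obstacle I foresee is spotting this $\forall/\exists$ swap via complementation, which turns an opaque infinite intersection into a single ``eventually'' condition on a fixed $\omega$-regular property; once this rewriting is in place, only standard closure properties of $\omega$-regular languages together with the $\omega$-semigroup characterisation of the $L_c$'s are required.
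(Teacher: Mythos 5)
Your proof is correct, and it reaches the conclusion by a genuinely different route from the paper, although both arguments share the same foundational step: rationality yields, via the $\omega$-semigroup characterisation, that each $L_c = \gamma_1^{-1}(c) \subseteq (F'C)^\omega$ is $\omega$-regular (the paper calls this $L(c)$). Where you diverge is in handling the universal quantification over positions $m$ in the definition of $\Acc$. The paper passes to monadic second-order logic: it takes an MSO formula $\phi_c$ for $L(c)$, relativises it to positions $> n$ to obtain a formula $\phi_c(n)$ for the shifted language $Q^n \cdot (\inj{2})^\omega(L(c) \times C^\omega)$, wraps everything in a first-order universal quantifier $\forall n$, and finally intersects with the shape language; $\omega$-regularity then follows from the logic--automata correspondence. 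You instead stay entirely within closure properties of $\omega$-regular languages: you complement the $\forall m$ condition, observe that its negation (inside the shape language $S$) is exactly $\bigcup_{c \neq c'} Q^* \cdot B_{c,c'}$ --- a finite union of left-concatenations of $Q^*$ with languages that are $\omega$-regular by the $L_{c'}$ result and inverse letter-to-letter projection --- and then recover $\Acc$ by complementation and intersection with $S$. Your bookkeeping checks out: the projections $p_C$ and $p_{F'C}$ are well-defined where you apply them thanks to the shape constraints, the finiteness of $C$ makes the union over pairs $c \neq c'$ finite, and the suffix/position correspondence in the rewriting of the negated condition is exact. The trade-off is that the paper's MSO route is shorter to state but imports the Büchi--Elgot-style equivalence of MSO and $\omega$-regularity (which is not among the preliminaries), whereas yours uses only Boolean closure, closure under left-concatenation with regular languages, and inverse literal substitutions --- all elementary, at the cost of the slightly delicate complementation step you rightly identify as the crux.
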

\begin{proof*}{Proof (Sketch)}
    It follows from the $\omega$-semigroup structure on $(C,\gamma)$ that, for every $c \in C$, the language $L(c) \coloneqq \{ (\bar c'_n)_{n\in\omega} \in (F'C)^\omega \mid \gamma_1((\bar c'_n)_{n\in\omega}) = c \}$ is $\omega$-regular. This can be used to show that the acceptance condition $\Acc = \{ \inj{1}(c_0) \cdot (\inj{2}(\bar c'_n, c_n))_{n > 0} \mid \forall m (c_m = \gamma_1((\bar c'_n)_{n > m})) \}$ is $\omega$-regular.
\end{proof*}

\begin{theorem}
\label{thm:unambiguousParityAutomaton}
    For every parity $F$-automaton $\mathcal A$, there exists a parity automaton $\mathcal A'$ such that:
    \begin{enumerate}
        \item[(i)] $\mathcal A$ and $\mathcal A'$ accept the same thin $F$-coalgebras, and
        \item[(ii)] $\mathcal A'$ is unambiguous on thin $F$-coalgebras.
    \end{enumerate}
\end{theorem}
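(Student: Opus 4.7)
The plan is to obtain $\mathcal A'$ by chaining together the two constructions developed in Sections~\ref{sec:automatonToAlgebra} and~\ref{sec:algebraToAutomaton}, and then passing from an $\omega$-regular acceptance condition to a parity one via Proposition~\ref{prop:omegaRegularToParity}. Concretely, starting from a parity $F$-automaton $\mathcal A$, I would first form its automaton algebra $\autalg_{\mathcal A}$. Since parity conditions are prefix-agnostic, Lemma~\ref{lem:autAlgebraCoherent} yields that $\autalg_{\mathcal A}$ is a finite coherent $(F+G)$-algebra with a recognising set, and Proposition~\ref{prop:automatonAlgebraRational} strengthens this to rationality.

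Next, I would apply the algebraic automaton construction to $\autalg_{\mathcal A}$, obtaining $\algaut_{\autalg_{\mathcal A}}$. By Theorems~\ref{thm:automatonAlgebraLanguage} and~\ref{thm:algebraicAutomatonLanguage}, for every thin pointed $F$-coalgebra $(X,\xi,x_I)$ we have
\begin{equation*}
\mathcal A \text{ accepts } (X,\xi,x_I) \iff \tbeh_{(X,\xi)}(x_I) \in L(\autalg_{\mathcal A}) \iff \algaut_{\autalg_{\mathcal A}} \text{ accepts } (X,\xi,x_I),
\end{equation*}
so $\mathcal A$ and $\algaut_{\autalg_{\mathcal A}}$ agree on all thin $F$-coalgebras. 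Theorem~\ref{thm:algebraicAutomatonUnambiguity} moreover guarantees that $\algaut_{\autalg_{\mathcal A}}$ is unambiguous on thin coalgebras. At this point, the only defect is that $\algaut_{\autalg_{\mathcal A}}$ need not be a parity automaton; however, because $\autalg_{\mathcal A}$ is rational, Lemma~\ref{lem:algebraicAutomatonOfRationalIsOmegaRegular} tells us that its acceptance condition is $\omega$-regular.

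Finally, I would invoke Proposition~\ref{prop:omegaRegularToParity} to convert $\algaut_{\autalg_{\mathcal A}}$ into an equivalent parity $F$-automaton $\mathcal A'$ in an unambiguity-preserving manner. Since $\algaut_{\autalg_{\mathcal A}}$ is unambiguous on every thin pointed $F$-coalgebra, the resulting $\mathcal A'$ is also unambiguous on every thin pointed $F$-coalgebra, giving~(ii); and since acceptance is preserved by the conversion, claim~(i) follows from the equivalence already established between $\mathcal A$ and $\algaut_{\autalg_{\mathcal A}}$ on thin coalgebras.

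Essentially, this proof is a bookkeeping assembly of earlier results, so there is no real technical obstacle; the only subtlety worth pointing out is that rationality of $\autalg_{\mathcal A}$ (needed for $\omega$-regularity of the intermediate automaton) genuinely uses the parity assumption on $\mathcal A$ — the construction would break for a general prefix-agnostic input automaton, which is consistent with Example~\ref{ex:automatonNonRegularLanguage}. All other steps are direct applications of cited results.
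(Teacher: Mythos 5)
Your proposal is correct and follows essentially the same route as the paper's own proof: form the automaton algebra (coherent and rational since parity conditions are prefix-agnostic), pass to the algebraic automaton (equivalent on thin coalgebras by Theorems~\ref{thm:automatonAlgebraLanguage} and~\ref{thm:algebraicAutomatonLanguage}, unambiguous by Theorem~\ref{thm:algebraicAutomatonUnambiguity}, with $\omega$-regular acceptance by Lemma~\ref{lem:algebraicAutomatonOfRationalIsOmegaRegular}), and finish with the unambiguity-preserving parity conversion of Proposition~\ref{prop:omegaRegularToParity}. Your closing remark on where the parity assumption is genuinely needed is a correct and worthwhile observation.
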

\begin{proof}
    Since parity conditions are prefix-agnostic, by Theorem \ref{thm:automatonAlgebraLanguage}, the automaton algebra $\autalg_{\mathcal A}$ accepts exactly those thin behaviours accepted by $\mathcal A$. By Proposition \ref{prop:automatonAlgebraRational}, $\autalg_{\mathcal A}$ is rational, so, by Lemma \ref{lem:algebraicAutomatonOfRationalIsOmegaRegular}, its algebraic automaton $\mathcal B \coloneqq \algaut_{\autalg_{\mathcal A}}$ has an $\omega$-regular acceptance condition. By Theorem \ref{thm:algebraicAutomatonLanguage}, $\mathcal B$ accepts the same thin coalgebras as $\mathcal A$, while by Theorem \ref{thm:algebraicAutomatonUnambiguity}, $\mathcal B$ is unambiguous on thin coalgebras. Finally, applying Proposition \ref{prop:omegaRegularToParity} to $\mathcal B$ gives us the desired automaton $\mathcal A'$.
\end{proof}

As our second main result, we give an automaton-theoretic characterisation of languages of finite coherent $(F+G)$-algebras. Concretely, we show that coherent algebras are as expressive as automata with prefix-agnostic acceptance (restricted to thin coalgebras). The key observation is that the acceptance condition of the algebraic automaton can be adjusted to a prefix-agnostic condition.

\begin{lemma}
\label{lem:algebraicAutomatonPrefixAgnostic}
    Let $\mathsf C = (C, \gamma, U)$ be a finite coherent $(F+G)$-algebra with a recognising set. There exists an $F$-automaton with a prefix-agnostic acceptance condition whose runs coincide with the runs of $\algaut_{\mathsf C}$.
\end{lemma}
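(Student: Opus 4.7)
I will keep the state space $Q = C + F'C \times C$, transition function $\delta$, and initial-state set $Q_I$ of $\algaut_{\mathsf{C}}$ unchanged, and replace only the acceptance condition by
\[
\Acc' := \bigl\{(q_n)_{n \in \omega} \in Q^\omega \;\big|\; \exists M \in \omega\ \forall n \geq M : q_n = \inj{2}(\bar c'_n, c_n) \text{ and } c_n = \gamma_1((\bar c'_k)_{k > n})\bigr\}.
\]
Prefix-agnosticism is immediate from the ``$\exists M$'' quantifier: prepending a word $w$ shifts a witness $M$ to $M + |w|$, and dropping a prefix of length $k$ shifts it to $\max(0, M - k)$. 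Let $\mathcal A' := (Q, \delta, Q_I, \Acc')$.

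\textbf{Strategy.} Since $Q$, $\delta$, and $Q_I$ coincide with those of $\algaut_{\mathsf{C}}$, the clauses (i) and (ii) in the definition of pre-run carry over verbatim. It thus suffices to show that, on any triple $(R, \langle \rho_F, \rho_X, \rho_Q \rangle, r_I)$ satisfying (i) and (ii), the condition (iii) with $\Acc$ is equivalent to (iii) with $\Acc'$. The crux is a backward-propagation lemma along each infinite path from $r_I$, powered by coherence of $(C, \gamma)$.

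\textbf{Key computation.} Consider any infinite path $(r_n)_{n \in \omega}$ from $r_I$ in a pre-run. The definition of $\delta$ forces $\rho_Q(r_n) \in \inj{2}[F'C \times C]$ for every $n \geq 1$, so I write $\rho_Q(r_n) = \inj{2}(\bar c'_n, c_n)$ for $n \geq 1$ and let $c_0$ denote the $C$-component of $\rho_Q(r_I)$ in either summand. Applying Lemma \ref{lem:decompProperties}(iii) together with the form of $\delta$ yields the local identity $\gamma_0(\trig_C(\bar c'_{n+1}, c_{n+1})) = c_n$ for all $n \geq 0$. Coherence $\gamma_1 = \gamma_0 \circ \trig_C \circ \langle \id, \gamma_1 \rangle \circ \langle \head, \tail \rangle$ then gives
\[
c_{n+1} = \gamma_1((\bar c'_k)_{k > n+1}) \implies \gamma_1((\bar c'_k)_{k > n}) = \gamma_0(\trig_C(\bar c'_{n+1}, c_{n+1})) = c_n,
\]
so coherence of the sequence propagates backward from infinity. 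Combined with the ``$\exists M$'' shape of $\Acc'$, this shows that any such path lies in $\Acc'$ iff $c_n = \gamma_1((\bar c'_k)_{k > n})$ holds for all $n \geq 1$, and when $\rho_Q(r_I) = \inj{1}(c_0)$ one further application of the local identity extends this to $n = 0$, yielding exactly the condition defining $\Acc$.

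\textbf{Main obstacle.} The delicate point is the case $\rho_Q(r_I) = \inj{2}(\bar c'_0, c_0)$: the condition $\Acc$ vacuously fails on every such infinite path (it demands a $\inj{1}$-root), while the prefix-agnostic $\Acc'$ cannot syntactically distinguish the root from any non-root state. I expect to handle this either by tightening the transition function of $\mathcal A'$ on $\inj{2}$-states so that no ``$\inj{2}$-rooted'' infinite runs arise (e.g.\ by making $\delta$ on $\inj{2}(\bar c', c)$ forbid the original context $\bar c'$ from being realised in a way inconsistent with the backward-propagation, thus forcing only the $\inj{1}$-mode of pre-runs to occur), or by arguing that the reachable structure of any such ``extra'' $\Acc'$-pre-run is isomorphic as an $(F \times \Delta_X \times \Delta_Q)$-coalgebra to an honest $\algaut_{\mathsf{C}}$-pre-run through the relabelling $\inj{2}(\bar c'_0, c_0) \mapsto \inj{1}(c_0)$, which preserves $\delta$ by $\delta(\inj{1}(c)) = \delta(\inj{2}(\bar c', c))$. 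Either way, the backward-propagation argument above is the substantive content.
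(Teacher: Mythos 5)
Your construction and the substantive argument are the same as the paper's: the paper also keeps $Q$, $\delta$, $Q_I$ unchanged and replaces $\Acc$ by an ``eventually the $\inj{2}$-labels satisfy the $\gamma_1$-consistency condition'' acceptance set, and it closes the gap between ``eventually'' and ``everywhere'' by exactly your backward propagation: Lemma~\ref{lem:decompProperties}(iii) plus the shape of $\delta$ give $\gamma_0(\trig_C(\bar c'_{n+1},c_{n+1}))=c_n$, and coherence then pushes $c_k=\gamma_1((\bar c'_n)_{n>k})$ down from the witness index to $0$. So the key computation is correct and is the paper's.

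The one step you leave open --- the ``main obstacle'' of runs rooted at an $\inj{2}$-state --- is where you should not follow either of your two proposed fixes. The paper resolves it simply by invoking $\rho_Q(r_I)\in Q_I=\inj{1}[U]$: since $Q_I$ lies entirely in the $\inj{1}$-summand, any \emph{accepting} run is $\inj{1}$-rooted, and accepting runs are all that Theorem~\ref{thm:coherentAlgebrasLanguageChar} needs from this lemma. (You are right that for non-accepting runs with an infinite path from an $\inj{2}$-root, literal coincidence of run sets fails --- the paper's own proof quietly assumes the root is initial, so read ``runs'' as ``accepting runs'' there.) Your first workaround, tightening $\delta$ on $\inj{2}$-states, would destroy the premise your whole strategy rests on, namely that clauses (i) and (ii) of pre-runs transfer verbatim because the transition structure is unchanged; your second, relabelling the root $\inj{2}(\bar c'_0,c_0)\mapsto\inj{1}(c_0)$, yields at best a bijection between the extra runs of $\mathcal A'$ and certain runs of $\algaut_{\mathsf C}$, not the claimed equality. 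The clean repair is one sentence: restrict attention to runs whose root label is in $\inj{1}[C]$ (in particular all accepting runs), where your $n=0$ step applies and the two acceptance conditions agree.
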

\begin{proof*}{Proof (Sketch)} \qedoverwrite
    Let $\algaut_{\mathsf C} = (Q, \delta, Q_I, \Acc)$. Using coherence of $\mathsf C$, the following automaton can be shown to satisfy the desired conditions: $\mathcal A' \coloneqq (Q, \delta, Q_I, \Acc')$ with:
    \begin{multline*}
        \Acc' \coloneqq \{ (q_n)_{n \in \omega} \in Q^\omega \mid \exists m \in \omega, (\bar c'_n)_{n \geq m} \in F'C, (c_n)_{n \geq m} \in C^\omega: \\
        (q_n)_{n \geq m} = (\inj{2}(\bar c_n', c_n))_{n \geq m}  \land
        \forall k \geq m (c_m = \gamma_1((\bar c'_n)_{n \geq k})) \}. \qedhere
    \end{multline*}
\end{proof*}

\begin{theorem}
\label{thm:coherentAlgebrasLanguageChar}
    Restricted to thin $F$-coalgebras, finite coherent $(F+G)$-algebras recognise exactly the languages accepted by $F$-automata with a prefix-agnostic acceptance condition. More precisely, a language $L$ of thin $F$-behaviours equals $L(\mathsf{C})$, for some finite coherent $(F+G)$-algebra $\mathsf{C}$, if and only if $L$ consists of those thin $F$-behaviours accepted by $\mathcal A$, for some $F$-automaton $\mathcal A$ with prefix-agnostic acceptance. 
\end{theorem}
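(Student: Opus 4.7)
The plan is to prove each direction of the biconditional by directly assembling the constructions from Sections~\ref{sec:automatonToAlgebra} and~\ref{sec:algebraToAutomaton}; the theorem essentially packages together the machinery that has already been developed, so no genuinely new argument is required.

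For the ``prefix-agnostic automaton $\Rightarrow$ coherent algebra'' direction, I start from an $F$-automaton $\mathcal A$ with prefix-agnostic acceptance whose language of thin behaviours is $L$. I take $\mathsf C \coloneqq \autalg_{\mathcal A}$, which is a finite coherent $(F+G)$-algebra with a recognising set by construction and by Lemma~\ref{lem:autAlgebraCoherent} (finiteness uses that $Q$ is finite, so $C = \Pow(Q)$ is finite). Theorem~\ref{thm:automatonAlgebraLanguage} then says that for every thin pointed $F$-coalgebra $(X,\xi,x_I)$, $\mathcal A$ accepts $(X,\xi,x_I)$ iff $\tbeh_{(X,\xi)}(x_I) \in L(\autalg_{\mathcal A})$. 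Hence $L = L(\mathsf C)$, as required.

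For the ``coherent algebra $\Rightarrow$ prefix-agnostic automaton'' direction, I start from a finite coherent $(F+G)$-algebra $\mathsf C$ with recognising set and consider its algebraic automaton $\algaut_{\mathsf C}$. By Theorem~\ref{thm:algebraicAutomatonLanguage}, for every thin pointed $F$-coalgebra $(X,\xi,x_I)$, $\algaut_{\mathsf C}$ accepts $(X,\xi,x_I)$ iff $\tbeh_{(X,\xi)}(x_I) \in L(\mathsf C)$. The acceptance condition of $\algaut_{\mathsf C}$ is not in general prefix-agnostic, but Lemma~\ref{lem:algebraicAutomatonPrefixAgnostic} provides an $F$-automaton $\mathcal A'$ with prefix-agnostic acceptance whose runs coincide with those of $\algaut_{\mathsf C}$. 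In particular $\mathcal A'$ accepts the same thin pointed $F$-coalgebras as $\algaut_{\mathsf C}$, so the set of thin behaviours accepted by $\mathcal A'$ is exactly $L(\mathsf C)$.

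There is no real obstacle here: the two directions are immediate consequences of, respectively, Theorem~\ref{thm:automatonAlgebraLanguage} (together with Lemma~\ref{lem:autAlgebraCoherent}) and the combination of Theorem~\ref{thm:algebraicAutomatonLanguage} with Lemma~\ref{lem:algebraicAutomatonPrefixAgnostic}. The only subtlety worth making explicit in the write-up is that both ``language'' notions, $L(\mathsf C) \subseteq \thin Z$ for algebras and ``set of thin behaviours accepted by $\mathcal A$'' for automata, quotient acceptance by behavioural equivalence in a compatible way; this follows from Proposition~\ref{prop:acceptanceInvariantUnderMorphisms}, ensuring that acceptance of a pointed coalgebra depends only on the thin behaviour of its root, and hence the language of thin behaviours of an automaton is well-defined. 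With this observation the proof reduces to a one-line invocation of the cited results in each direction.
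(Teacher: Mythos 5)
Your proposal is correct and follows essentially the same route as the paper: the forward direction via the automaton algebra together with Theorem~\ref{thm:automatonAlgebraLanguage} (and Lemma~\ref{lem:autAlgebraCoherent} for coherence/finiteness), and the converse via the algebraic automaton, Theorem~\ref{thm:algebraicAutomatonLanguage}, and Lemma~\ref{lem:algebraicAutomatonPrefixAgnostic}. The extra remark on well-definedness of the language of thin behaviours via Proposition~\ref{prop:acceptanceInvariantUnderMorphisms} is a reasonable clarification that the paper leaves implicit.
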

\begin{proof}
    Given a finite coherent $(F+G)$-algebra $\mathsf C$ with a recognising set, we have by Theorem \ref{thm:algebraicAutomatonLanguage} that the language of $\mathsf C$ consists precisely of the thin coalgebras accepted by its algebraic automaton. By Lemma~\ref{lem:algebraicAutomatonPrefixAgnostic}, there exists an equivalent prefix-agnostic automaton. Conversely, for every $F$-automaton with prefix-agnostic acceptance, by Theorem \ref{thm:automatonAlgebraLanguage}, its automaton algebra recognises precisely those thin coalgebras accepted by the automaton.
\end{proof}

\section{Conclusion}\label{sec:conclusion}
In this paper, we saw how to connect $F$-automata with prefix-agnostic acceptance to finite coherent $(F+G)$-algebras in order to transform an arbitrary $F$-automaton into an unambiguous one. We gave two constructions: the automaton algebra and the algebraic automaton constructions, both of which generalise the corresponding classical constructions for thin trees~\cite{Skrzypczak2016}. In order to prove unambiguity of the algebraic automaton (Theorem \ref{thm:algebraicAutomatonUnambiguity}), we linked algebraic automaton pre-runs to markings, which are $F$-coalgebra-to-algebra morphisms with an extra condition. We used the inductive structure of thin behaviours~\cite{ChernevCirsteaHansenKupkeThinCoalg} to show existence and uniqueness of markings, which implied existence and uniqueness of runs.
We concluded from the two constructions that finite $(F+G)$-algebras recognise the same languages of thin $F$-behaviours as $F$-automata with prefix-agnostic acceptance (Theorem \ref{thm:coherentAlgebrasLanguageChar}).

In applications, one usually considers parity $F$-automata. Hence we showed that the unambiguous automaton obtained from a parity $F$-automaton is itself a parity $F$-automaton (Theorem \ref{thm:unambiguousParityAutomaton}). To this end, we identified rational $(F+G)$-algebras as a subclass of coherent algebras that correspond to parity $F$-automata. We observed that rational algebras generalise thin algebras~\cite{Skrzypczak2016} to analytic functors.

In addition to providing a useful generalisation (beyond trees and polynomial functors) of an existing construction, our use of the context decomposition operator in the definition of the algebraic automaton, and our key insight that markings (as defined in \cite{Skrzypczak2016}) correspond to $(F+G)$-coalgebra-to-algebra morphisms shed new light on the original construction in loc.~cit.~and on the reasons it delivers unambiguity.

A natural direction for future work is to incorporate our unambiguous parity
automaton construction into model-checking algorithms,
such as the one proposed in \cite{CirsteaKupke:CSL2023}.
In this context, the size of the resulting automaton is crucial. In principle, 
our constructions yield at least
an exponential blow-up but simple optimisations such as removing unreachable states could considerably improve the automaton size.

We defined recognition by coherent algebra only for languages of thin $F$-behaviours, but some of our constructions can be extended to all $F$-behaviours. In particular, by considering $(F+G)$-algebra morphisms from $(Z,\beta)$ to a finite coherent $(F+G)$-algebra, we obtain a notion of recognition for arbitrary $F$-behaviours. We can extend Theorem \ref{thm:automatonAlgebraLanguage} to show that the automaton algebra recognises the same language over all $F$-behaviours.
In contrast, the properties of the algebraic automaton make essential use of thinness and, without it, neither Theorem \ref{thm:algebraicAutomatonLanguage} nor Theorem \ref{thm:algebraicAutomatonUnambiguity} seem to hold. 
We leave further investigations into coherent algebra recognition of non-thin behaviours as future work.

Finally, our characterisation of languages of finite coherent $(F+G)$-algebras, together with Example~\ref{ex:automatonNonRegularLanguage}, show that the expressivity of these algebras lies beyond regular languages. We ``corrected'' this by equipping the algebras with additional structure, thus obtaining rational algebras. Yet, rationality only played a role in ensuring that the algebraic automaton has a parity condition. This suggests that coherent algebras could also be specialised with alternative additional structure in order to study different classes of languages, while maintaining the correctness of the unambiguity construction.




\appendix
\section*{Appendix}
\section{Detailed Proofs from Section \ref{sec:preliminaries}}

\begin{proof*}{Proof of Proposition \ref{prop:trigNaturalitySquareIsWeakPullback}}
    Let $f: X \to Y$. We are to show that the following diagram is a weak pullback.
    \begin{equation*}
        \begin{tikzcd}
            {F'X \times X} & FX \\
            {F'Y \times Y} & FY
            \arrow["{\trig_X}", from=1-1, to=1-2]
            \arrow["{F'f\times f}"', from=1-1, to=2-1]
            \arrow["Ff", from=1-2, to=2-2]
            \arrow["{\trig_Y}"', from=2-1, to=2-2]
        \end{tikzcd}
    \end{equation*}
    It suffices to show that, for all $\bar x \in FX$, $\bar y' \in F'Y$ and $y \in Y$ with $\trig_Y(\bar y', y) = Ff(\bar x)$, there exist $\bar x' \in F'X$ and $x \in X$ such that $F'f(\bar x') = \bar y'$, $f(x) = y$ and $\trig_X(\bar x', x) = \bar x$. Let $\bar x = (i, [\phi]_{H_i})$ for some $i \in I, \phi: U_i \to X$ and $\bar y' = (j, [u, \psi]_{H_j})$ for some $j \in I$, $u \in U_j$ and $\psi: U_j \setminus \{ u \} \to Y$. Since:
    \begin{equation*}
        (j, [\psi \cup \{ \langle u, y \rangle \}]_{H_j}) = \trig_Y(\bar y', y) = Ff(\bar x) = Ff(i, [\phi]_{H_i}) = (i, [f \circ \phi]_{H_i}),
    \end{equation*}
    we have $i = j$ and, without loss of generality, $y = f(\phi(u))$ and $\psi(v) = f(\phi(v))$ for all $v \in U_i \setminus \{ u \}$ (otherwise, we could take different representatives $\phi$ and $\psi$). Now take $\bar x' \coloneqq (i, [u, \chi]_{H_i})$ with $\chi = \phi \setminus \{ \langle u, \phi(u) \rangle \}$ and $x \coloneqq \phi(u)$. We have $F'f(\bar x') = \bar y'$, $f(x) = y$, $f(x) = y$ and $\trig(\bar x', x) = \bar x$, as desired.
\end{proof*}

\begin{proof*}{Proof of Lemma \ref{lem:decompProperties}}
    (i). Suppose $(i, [\phi]_{H_i}) \in FX$ and $\decomp_X(i, [\phi]_{H_i}) = (i, [\psi]_{H_i})$, where $\psi$ is as in Definition \ref{def:decomp}. Then
    $(F\prj{2} \circ \decomp_X)(i, [\phi]_{H_i}) = F\prj{2}(i, [\psi]_{H_i}) = (i, [\prj{2} \circ \psi]_{H_i}) = (i, [\phi]_{H_i})$.

    (ii). Let $Y \coloneqq F'X \times X$ and suppose $(\bar y', y) \in F'Y \times Y$, with $\bar y' = (i, [u, \chi]_{H_i})$ for $u \in U_i$ and $\chi: U_i \to Y$, satisfies $\trig_Y(\bar y', y) \in \decomp[FX]$. This means that
    $(i, [\chi \cup \{ \langle u, y \rangle \}]_{H_i}) = \trig(\bar y', y) = \decomp(j, [\phi]_{H_j}) = (j, [\psi]_{H_j})$
    for some $(j, [\phi]_{H_j}) \in FX$ and $\psi$ as in Definition \ref{def:decomp}. Hence $i = j$ and $[\chi \cup \{ \langle u, y \rangle \}]_{H_i} = [\psi]_{H_i})$. Without loss of generality, we can assume that $(u, \chi)$ is chosen in such a way that $\chi \cup \{ \langle u, y \rangle \} = \psi$. It follows that $\bar y' = \psi \setminus \{ u, \psi(u) \rangle \}$ and $y = \psi(u)$. Now:
    \begin{equation*}
        F'\prj{2}(\bar y') = (i, [u, \prj{2} \circ \chi]_{H_i}) = (i, [u, \phi \setminus \{ \langle u, \phi(u) \rangle ]_{H_i}) =
        \prj{1}((i, [u, \phi \setminus \{ \langle u, \phi(u) \rangle \}]_{H_i}), \phi(u)) = \prj{1}(y).
    \end{equation*}

    (iii). Suppose $(i, \phi) \in FX$, $\decomp_X(i, [\phi]_{H_i}) = (i, [\psi]_{H_i})$, with $\psi$ as in Definition \ref{def:decomp}, and $(\bar x', x) \in \Base_F(i, [\psi]_{H_i})$. There exists a $u \in U_i$ such that $\bar x' = (i, [u, \phi \setminus \{ \langle u, \phi(u) \rangle \}]_{H_i})$ and $x = \phi(u)$, hence
    $\trig_X(\bar x', x) = (i, [(\phi \setminus \{ \langle u, \phi(u) \rangle \}) \cup \{ \langle u, \phi(u) \rangle \}]_{H_i}) = (i, \phi)$.
\end{proof*}

\section{Detailed Proofs from Section \ref{sec:coalgebraicAutomata}}

The following lemma is used in the proof of Lemma \ref{lem:preRunImpliesAccepted} and Proposition \ref{prop:acceptanceInvariantUnderMorphisms}.
\begin{lemma}
\label{lem:preservationReflectionofPreRunProperties}
    Let $\mathcal A = (Q, \delta, Q_I, \Acc)$ be an $F$-automaton, $(X, \xi, x_I)$ be a pointed $F$-coalgebra, $\sfR = (R, \rho = \langle \rho_F, \rho_X, \rho_Q \rangle, r_I)$, $\sfR' = (R', \rho' = \langle \rho_F', \rho_X', \rho_Q' \rangle, r_I')$ be pointed $(F\times \Delta_X \times \Delta_Q)$-coalgebras, and $f: (R, \rho_F, r_I) \to (R', \rho_F', r_I')$ be a pointed $F$-coalgebra morphism.
    \begin{enumerate}
        \item[(a)] Suppose $\rho_X = \rho_X' \circ f$. If $\sfR'$ satisfies property (i) of pre-runs then so does $\sfR$.  If $f$ is an epi then the converse holds.
        \item[(b)] Suppose $\rho_Q = \rho_Q' \circ f$. 
         If $\sfR'$ satisfies property (ii) of pre-runs then so does $\sfR$; and similarly for property (iii) of pre-runs. 
         If $f$ is an epi then both converse statements hold.
    \end{enumerate}
\end{lemma}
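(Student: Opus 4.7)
Every claim in the lemma is a direct consequence of $f$ being a pointed $F$-coalgebra morphism (so $Ff \circ \rho_F = \rho_F' \circ f$ and $f(r_I) = r_I'$) together with the appropriate hypothesis $\rho_X = \rho_X' \circ f$ or $\rho_Q = \rho_Q' \circ f$. In each forward direction I would perform a diagram chase, and in each converse direction (where $f$ is assumed epi) I would exploit that epis in $\Set$ are surjective, which both cancels $f$ on the right and allows data in $R'$ to be lifted to data in $R$.

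\textbf{Parts (a) and (b)(ii).} These amount to pure equational calculations. For (a), assuming $\rho_X = \rho_X' \circ f$, the computation
\[ \xi \circ \rho_X = \xi \circ \rho_X' \circ f = F\rho_X' \circ \rho_F' \circ f = F\rho_X' \circ Ff \circ \rho_F = F\rho_X \circ \rho_F \]
transfers the naturality square of property (i) from $\sfR'$ to $\sfR$, while $\rho_X(r_I) = \rho_X'(r_I') = x_I$ gives the root condition. The same chain, read with $f$ cancelled on the right and justified by epicness of $f$, yields the converse. For (b)(ii), one computes $F\rho_Q \circ \rho_F = F\rho_Q' \circ Ff \circ \rho_F = F\rho_Q' \circ \rho_F' \circ f$, so the transition condition at $r \in R$ is exactly the transition condition for $\sfR'$ at $f(r)$. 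Conversely, if $f$ is epi then every $r' \in R'$ arises as $f(r)$, so (ii) for $\sfR$ pointwise delivers (ii) for $\sfR'$.

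\textbf{Part (b)(iii): the main obstacle.} Here one must transport infinite $\rho_F$-paths through $f$ in both directions. The key tool is naturality of $\Base_F \colon F \Rightarrow \Pow$, which gives
\[ \Base_F(\rho_F'(f(r))) = \Base_F(Ff(\rho_F(r))) = f\bigl[\Base_F(\rho_F(r))\bigr]. \]
In the forward direction, this shows that any $\rho_F$-path $(r_n)$ starting at $r_I$ is sent by $f$ to a $\rho_F'$-path $(f(r_n))$ starting at $r_I'$; then (iii) for $\sfR'$ combined with $\rho_Q = \rho_Q' \circ f$ gives $(\rho_Q(r_n))_{n\in\omega} \in \Acc$. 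In the converse direction, assuming $f$ epi (hence surjective), I would inductively lift a given $\rho_F'$-path $(r_n')$ starting at $r_I'$: set $r_0 \coloneqq r_I$, and given $r_n$ with $f(r_n) = r_n'$, the displayed identity lets me choose a preimage $r_{n+1} \in \Base_F(\rho_F(r_n))$ of $r_{n+1}'$. Then (iii) for $\sfR$ applied to $(r_n)$ yields $(\rho_Q'(r_n'))_{n\in\omega} = (\rho_Q(r_n))_{n\in\omega} \in \Acc$. This lifting step, and the appeal to naturality of $\Base_F$ behind it, is the only nontrivial ingredient; the rest is routine diagram manipulation.
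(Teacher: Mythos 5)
Your proposal is correct and follows essentially the same route as the paper: equational diagram chases for part (a) and the transition condition in (b), using epicness to cancel or lift along $f$ for the converses, and transporting/inductively lifting infinite $\rho_F$-paths for the acceptance condition. The only cosmetic difference is that you make the identity $\Base_F(\rho_F'(f(r))) = f[\Base_F(\rho_F(r))]$ explicit via naturality of $\Base_F$, where the paper leaves this implicit in the phrase ``because $f$ is epic and an $F$-coalgebra morphism''.
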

\begin{proof}
    \textit{Part (a), first claim.} Assume that $(R', \rho', r_I')$ satisfies property (i) of pre-runs and show that so does $(R, \rho, r_I)$.
    This amounts to showing commutativity of $(1)$ and $(3)$ in the diagrams below.
    The small triangles commute by the assumption $\rho_X = \rho_X' \circ f$;
    $(2)$ and $(4)$ commute by property (i) for $(R',\rho',r'_I)$.
    In the diagram on the left, 
    the outer paths commutes because $f$ preserves the coalgebra root. Hence $(1)$ also commutes. In the diagram on the right, the outer paths commute by the assumption and functoriality of $F$. The top crescent commutes, since $f$ is an $F$-coalgebra morphism. It follows that $(3)$ commutes.
    \begin{equation*}
    \SelectTips {cm}{}
    \xymatrix{
            R \ar[rr]^-{f} \ar[dr]^-{\rho_X } && {R'} \ar[dl]_-{\rho'_X } \\
            & X \\
            & 1 \ar[u]_-{x_I} \ar@/^1pc/[uul]^-{r_I} \ar@/_1pc/[uur]_-{r'_I} \ar@{}[uul]|{(1)} \ar@{}[uur]|{(2)}
    }
        \qquad\qquad\qquad
        \SelectTips {cm}{}
    \xymatrix{  
            FR \ar@/^1.5pc/[rrrr]^-{Ff} \ar@/_1pc/[ddrr]_-{F\rho_X} & R \ar[l]^-{\rho_F} \ar[rr]^-{f} \ar[dr]^-{\rho_X} && R' \ar[dl]_-{\rho'_X}  \ar[r]_-{\rho'_F}& FR' \ar@/^1pc/[ddll]^-{F\rho'_X} \\
            && X \ar[d]^-{\xi} &&\\
            && FX \ar@{}[uull]|{(3)} \ar@{}[uurr]|{(4)} &&
    }    
    \end{equation*}
    \textit{Part (a), converse claim.} Assume $f$ is epic and $(R, \rho, r_I)$ satisfies property (i) of pre-runs, i.e., in the above diagrams, $(1)$ and $(3)$ commute. Show that $(2)$ and $(4)$ commute.
    Commutativity of $(2)$ follows by an easy diagram chase. 
    It suffices to show commutativity of $(4)$ precomposed with the epimorphism $f$. 
    This again follows by an easy diagram chase, using commutativity of the other parts of the diagram.

\noindent\begin{minipage}{0.6\textwidth}
    \textit{Part (b), property (ii).} Consider the diagram to the right. 
    The inner and outer triangles commute by the assumption $\rho_Q = \rho_Q' \circ f$ and functoriality, and the top crescent commutes since $f$ is an $F$-coalgebra morphism.
    Hence for all $r \in R$:\\
    $(F\rho_Q \circ \rho_F)(r) \in (\delta \circ \rho_Q)(r) \iff (F\rho_Q' \circ \rho_F' \circ f)(r) \in (\delta \circ \rho_Q' \circ f)(r)$.\\
    It follows that if $(R', \rho', r_I')$ satisfies property (ii) of pre-runs, then so does $(R, \rho, r_I)$. 
    For the converse, we use the same equivalence and the fact that if $f$ is epic, then every $r' \in R'$ is of the form $f(r)$ for some $r \in R$.
    \end{minipage}
    \begin{minipage}{0.46\textwidth}
    \begin{equation*}
    \hspace{-2em}
     \scalebox{.9}{       \SelectTips {cm}{}
    \xymatrix@R=1.5em@C=2em{  
            FR \ar@/^1.5pc/[rrrr]^-{Ff} \ar@/_1pc/[dddrr]_-{F\rho_Q} & R \ar[l]^-{\rho_F} \ar[rr]^-{f} \ar[dr]^-{\rho_Q} && R' \ar[dl]_-{\rho'_Q}  \ar[r]_-{\rho'_F}& FR' \ar@/^1pc/[dddll]^-{F\rho'_Q} \\
            && Q \ar[d]^-{\delta} &&\\
            && \Pow FQ &&\\
            && FQ
    }}
    \end{equation*}
    \end{minipage}

    \textit{Part (b), property (iii).} Suppose $(R', \rho', r_I')$ satisfies property (iii) of pre-runs and let $(r_n)_{n\in\omega} \in R^\omega$ be such that $r_0 = r_I$ and $\forall n(r_{n+1} \in \Base_F(\rho_F(r_n))$. Since $f$ is a pointed coalgebra morphism, $(f(r_n))_{n\in\omega}$ satisfies $f(r_0) = r_I'$ and $\forall n(f(r_{n+1}) \in \Base_F(\rho_F'(f(r_n))))$. Hence $(\rho_Q(r_n))_{n\in\omega} = (\rho_Q'(f(r_n)))_{n\in\omega} \in \Acc$.

        Conversely, suppose $(R, \rho, r_I)$ satisfies property (iii) of pre-runs. Let $(r_n')_{n\in\omega} \in (R')^\omega$ be such that $r_0' = r_I'$ and $\forall n(r_{n+1}' \in \Base_F(\rho_F'(r_n')))$. Define a sequence $(r_n)_{n \in \omega} \in R^\omega$ inductively as follows: $r_0 = r_I$, for each $n \in \omega$, take $r_{n+1}$ with $f(r_{n+1}) = r_{n+1}'$ and $r_{n+1} \in \Base_F(\rho(r_n))$. The latter is possible, because $f$ is epic and an $F$-coalgebra morphism. Now
            $(\rho_Q'(r_n'))_{n\in\omega} = (\rho_Q'(f(r_n)))_{n\in\omega} = (\rho_Q(r_n))_{n\in\omega} \in \Acc$.
\end{proof}

\begin{proof*}{Proof of Lemma \ref{lem:preRunImpliesAccepted}}
    Let $(R, \rho, r_I)$ be an accepting pre-run of $\mathcal A$ on $(X, \xi, x_I)$ and $f: (R, \rho, r_I) \twoheadrightarrow (R', \rho', r_I')$ be the $(F \times \Delta_X \times \Delta_Q)$-coalgebra morphism mapping $(R, \rho, r_I)$ onto its image $(R', \rho', r_I')$ in the final coalgebra. By reachability of $(R, \rho, r_I)$, it follows that $(R', \rho', r_I')$ is also reachable. Properties (ii) and (iii) hold in $(R', \rho', r_I')$ by Lemma \ref{lem:preservationReflectionofPreRunProperties}, so $(R', \rho', r_I')$ is a run. Finally, $(R, \rho, r_I)$ is accepting, so $(R, \rho', r_I')$ is also accepting.
\end{proof*}

\begin{proof*}{Proof of Proposition \ref{prop:acceptanceInvariantUnderMorphisms}}
    $(\Longrightarrow)$ Let $\mathsf R = (R, \rho = \langle \rho_F, \rho_X, \rho_Q \rangle, r_I)$ be an accepting run of $\mathcal A$ on $(X, \xi, x_I)$. Consider $\mathsf{R}' \coloneqq (R, \langle \rho_F, f \circ \rho_X, \rho_Q \rangle, r_I)$. Since $f$ and $\rho_X$ are pointed $F$-coalgebra morphisms, it follows that $\mathsf{R}'$ satisfies property (i) of pre-runs. Properties (ii), (iii) and acceptance are automatically satisfied, because $\mathsf R$ satisfies them. Therefore $\mathsf R'$ is an accepting pre-run on $(X', \xi', x_I')$. By Lemma \ref{lem:preRunImpliesAccepted}, we conclude that $\mathcal A$ accepts $(X', \xi', x_I')$.

    $(\Longleftarrow)$ Let $(R', \rho' = \langle \rho_F', \rho_X', \rho_Q' \rangle, r_I')$ be an accepting run of $\mathcal A$ on $(X', \xi', x_I')$. Let $(R \subseteq X \times R', \rho_X: R \to X, g: R \to R')$ be the pullback of $f: X \to X'$ and $\rho'_X: R' \to X'$. Since $F$ preserves weak pullbacks, there exists an $F$-coalgebra structure $\rho_F: R \to FR$ such that $\rho_X: (R, \rho_F) \to (X, \xi)$ and $g: (R, \rho_F) \to (R', \rho_F')$ are $F$-coalgebra morphisms. Since $f(x_I) = x_I' = \rho_X'(r_I')$, there exists $r_I = (x_I, r_I') \in R$ and we get the following commuting diagram of pointed $F$-coalgebras.
    \begin{equation*}
        \begin{tikzcd}
    	{(R,\rho_F,r_I)} & {(R',\rho_F',r_I')} \\
    	{(X,\xi,x_I)} & {(X',\xi',x_I')}
    	\arrow["g", from=1-1, to=1-2]
    	\arrow["{\rho_X}"', from=1-1, to=2-1]
    	\arrow["{\rho_X'}", from=1-2, to=2-2]
    	\arrow["f"', from=2-1, to=2-2]
        \end{tikzcd}
    \end{equation*}
    Consider $\mathsf R \coloneqq (R, \langle \rho_F, \rho_X, \rho_Q' \circ g \rangle, r_I)$. We have that the reachable part of $\mathsf R$ is a pre-run (by Lemma \ref{lem:preservationReflectionofPreRunProperties}) and, in addition, it is accepting. By Lemma \ref{lem:preRunImpliesAccepted}, $\mathcal A$ accepts $(X, \xi, x_I)$.
\end{proof*}

The next lemma gives an equivalent characterisation of unambiguity and will be used in the proof of Proposition \ref{prop:omegaRegularToParity}.

\begin{lemma}
\label{lem:unambiguityViaBehEquivalence}
    An $F$-automaton $\mathcal A$ is unambiguous on $\mathsf X = (X,\xi,x_I)$ if and only if every two accepting pre-runs of $\mathcal A$ on $\mathsf X$ are behaviourally equivalent.
\end{lemma}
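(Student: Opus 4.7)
\begin{proof*}{Proof Proposal}
The plan is to prove both directions by reducing to Lemma~\ref{lem:preRunImpliesAccepted}, which tells us that each accepting pre-run has a canonical accepting run as its image in the final $(F \times \Delta_X \times \Delta_Q)$-coalgebra, and to the fact that two reachable pointed subcoalgebras of a final coalgebra coincide exactly when their roots are behaviourally equivalent.

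For the forward direction, assume $\mathcal A$ is unambiguous on $\mathsf X$, and let $\mathsf R_1, \mathsf R_2$ be two accepting pre-runs of $\mathcal A$ on $\mathsf X$. By Lemma~\ref{lem:preRunImpliesAccepted}, each $\mathsf R_i$ admits a pointed $(F \times \Delta_X \times \Delta_Q)$-coalgebra morphism $g_i \colon \mathsf R_i \to \mathsf R_i'$ onto its image $\mathsf R_i'$ in the final coalgebra, and each $\mathsf R_i'$ is an accepting run. Unambiguity forces $\mathsf R_1' = \mathsf R_2'$, and hence the span $(\mathsf R_1', g_1, g_2)$ (after rewriting $g_2$ to land in $\mathsf R_1'$) witnesses that $\mathsf R_1$ and $\mathsf R_2$ are behaviourally equivalent.

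For the converse, assume every two accepting pre-runs are behaviourally equivalent, and let $\mathsf R_1, \mathsf R_2$ be two accepting runs of $\mathcal A$ on $\mathsf X$. Since runs are pre-runs, by assumption $\mathsf R_1$ and $\mathsf R_2$ are behaviourally equivalent. Behavioural equivalence of pointed coalgebras yields equality of their images in the final coalgebra; but each $\mathsf R_i$ is already (isomorphic to) a reachable pointed subcoalgebra of the final coalgebra, so $\mathsf R_i$ coincides with its own image. Therefore $\mathsf R_1 = \mathsf R_2$, proving unambiguity.

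The only subtle point is the appeal, in the second direction, to the fact that a reachable pointed subcoalgebra of the final coalgebra is uniquely determined by its root, which is standard from reachability and the universal property of the final coalgebra. No new construction is required; the lemma is essentially a restatement of unambiguity once pre-runs are collapsed onto their canonical run representatives.
\end{proof*}
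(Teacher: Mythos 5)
Your proof is correct and follows essentially the same route as the paper: the forward direction collapses pre-runs onto their images via Lemma~\ref{lem:preRunImpliesAccepted} and invokes unambiguity, and the converse uses that behaviourally equivalent reachable subcoalgebras of the final coalgebra are equal. (Minor note: the witness $(g_1,g_2)$ into the common image is a cospan rather than a span, but that is exactly what the paper's definition of behavioural equivalence asks for.)
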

\begin{proof}
    $(\Rightarrow)$ Suppose $\mathcal A$ is unambiguous on $\mathsf X$ and let $\mathsf R$, $\mathsf R'$ be two accepting pre-runs of $\mathcal A$ on $\mathsf X$. By Lemma \ref{lem:preRunImpliesAccepted}, the images of $\mathsf R$ and $\mathsf R'$ in the final coalgebra are accepting runs. Since $\mathcal A$ is unambiguous, these runs coincide. Therefore $\mathsf R$ and $\mathsf R'$ are behaviourally equivalent.

    $(\Leftarrow)$ Let $\mathsf R$, $\mathsf R'$ be two accepting runs of $\mathcal A$ on $\mathsf X$. Since $\mathsf R$ and $\mathsf R'$ are behaviourally equivalent subcoalgebras of the final coalgebra, they are equal.
\end{proof}

\begin{proof*}{Proof of Proposition \ref{prop:omegaRegularToParity}}
    Let $\mathcal A = (Q, \delta, Q_I, \Acc)$ be an $F$-automaton with an $\omega$-regular acceptance condition. It follows that there exists a deterministic parity word automaton $\mathcal A^w = (Q^w, \delta^w, q_I^w, \Omega^w)$ over the alphabet $Q$ that recognises the language $\Acc$. We define a parity $F$-automaton $\mathcal A' \coloneqq (Q', \delta', Q_I', \Omega')$:
    \begin{align*}
        &Q' \coloneqq Q \times Q^w, \qquad  Q_I' \coloneqq Q_I \times \{ q_I^w \}, \qquad \Omega'(q, q^w) \coloneqq \Omega^w(q^w), \\
        &\delta'(q, q^w) \coloneqq \{ F\langle\id, \con_Q^{Q^w}(\delta^w(q^w)(q))\rangle(\bar q) \mid \bar q \in \delta(q) \}, \quad\text{where} \quad \con_X^{Y}(y) \coloneqq (\lambda x. y) : X \to Y.
    \end{align*}
    For every pointed $F$-coalgebra $\mathsf X = (X, \xi, x_I)$, we show that $\mathcal A$ accepts $\mathsf X$ if and only if $\mathcal A'$ accepts $\mathsf X$, and that if $\mathcal A$ is unambiguous on $\mathsf X$, then $\mathcal A'$ is unambiguous on $\mathsf X$. Let $\PRun$, resp. $\PRun'$, denote the category where objects are accepting pre-runs of $\mathcal A$, resp. $\mathcal A'$, on $\mathsf{X}$ and arrows are pointed coalgebra morphisms.
    We define a mapping $K: \Ob(\PRun) \to \Ob(\PRun')$. Given $\mathsf{R} = (R, \langle \rho_F, \rho_X, \rho_Q \rangle, r_I) \in \Ob(\PRun)$, we define $K(\mathsf{R})$ to be the reachable part of $(R \times Q^w, \langle \rho_F', \rho_X', \rho_Q' \rangle, (r_I, q_I^w))$, where $\rho_F'(r, q^w) \coloneqq (F\langle\id, \con_R^{Q^w}(\delta^w(q^w)(\rho_Q(r)))\rangle \circ \rho_F)(r)$, $\rho_X' \coloneqq \rho_X \circ \prj{1}$ and $\rho_Q' \coloneqq \rho_Q \times \id$.
    Moreover, we define a mapping $M: \Ob(\PRun') \to \Ob(\PRun)$. Given an object $\mathsf{R}' = (R', \langle \rho_Q', \rho_X', \rho_Q' \rangle, r_I') \in \PRun'$, define $M(\mathsf{R}') \coloneqq (R', \langle \rho_F', \rho_X', \prj{1} \circ \rho_Q' \rangle, r_I')$.
    Finally, for all $\mathsf R \in \Ob(\PRun)$, $\mathsf R' \in \Ob(\PRun')$ and $f: \mathsf R \to M(\mathsf R') \in \Arr(\PRun)$, define a map $N(f): K(\mathsf R) \to \mathsf R' \in \Arr(\PRun')$ by $N(f) = f \circ \prj{1}$. Through lengthy but straightforward verifications, one shows that $K$, $M$ and $N$ are well-defined.
    
    We are now ready to show that $\mathcal A$ accepts $\mathsf X$ if and only if $\mathcal A'$ accepts $\mathsf{X}$, and that, assuming $\mathcal A$ is unambiguous on $\mathsf{X}$, we have that $\mathcal A'$ is unambiguous on $\mathsf X$. From the existence of $K$ and $M$, we know that $\PRun$ is non-empty if and only if $\PRun'$ is non-empty, i.e., $\mathcal A$ has an accepting pre-run on $\mathsf X$ if and only if $\mathcal A'$ has an accepting pre-run on $\mathsf X$. By Lemma \ref{lem:preRunImpliesAccepted}, this implies that $\mathcal A$ accepts $\mathsf X$ if and only if $\mathcal A'$ accepts $\mathsf X$. For unambiguity of $\mathcal A'$ on $\mathsf X$, by Lemma \ref{lem:unambiguityViaBehEquivalence}, it suffices to show that all $(\mathsf R^1)', (\mathsf R^2)' \in \PRun'$ are behaviourally equivalent. 
     Since $M((\mathsf R^1)')$ and $M((\mathsf R^2)')$ are accepting pre-runs and $\mathcal A$ is assumed unambiguous, by Lemma \ref{lem:unambiguityViaBehEquivalence} they are behaviourally equivalent. Since $F \times \Delta_X \times \Delta_Q$ preserves weak pullbacks, there exists a span $(f_1, f_2)$ as shown in the diagram below on the left. Now the (pushout of the) span $(N(f_1), N(f_2))$ below right witnesses that $(\mathsf R^1)'$ and $(\mathsf R^2)'$ are behaviourally equivalent.
    \begin{equation*}
        \begin{tikzcd}
            {M((\mathsf{R}^1)')} & \mathsf R & {M((\mathsf{R}^2)')} && {(\mathsf{R}^1)'} &{K(\mathsf R)}& {(\mathsf{R}^2)'}
            \arrow["{f_1}"', from=1-2, to=1-1]
            \arrow["{f_2}", from=1-2, to=1-3]
            \arrow["{N(f_1)}"', from=1-6, to=1-5]
            \arrow["{N(f_2)}", from=1-6, to=1-7]
        \end{tikzcd}
    \end{equation*}
\end{proof*}

\section{Detailed Proofs from Section \ref{sec:automatonToAlgebra}}

\begin{proof*}{Proof of Lemma \ref{lem:autAlgebraCoherent}}
    Let $\mathcal A = (Q, \delta, Q_I, \Acc)$ be an $F$-automaton, $\Acc$ be prefix-agnostic and $\autalg_{\mathcal A} = (C, [\gamma_0, \gamma_1], U)$ be the automaton algebra of $\mathcal A$.
    Suppose $(\bar c'_n)_{n \in \omega} \in (F'C)^\omega$. We show that
    $\gamma_1((\bar c'_n)_{n \in \omega}) = \gamma_0(\trig_C(\bar c'_0, \gamma_1((\bar c'_n)_{n > 0})))$
    by proving the two inclusions separately.

    $(\subseteq)$ Suppose $q \in \gamma_1((\bar c'_n)_{n \in \omega})$. By definition of $\gamma_1$, this means that there exist $(q_n)_{n \in \omega} \in \Acc$, $(\bar q'_n)_{n \in \omega} \in (F'Q)^\omega$ with $q_0 = q$, $\bar q'_n \inlift \bar c'_n$ and $\trig_Q(\bar q'_n, q_{n+1}) \in \delta(q_n)$ for all $n \in \omega$. By the definition of $\gamma_0$, it suffices to show that $\trig_Q(\bar q'_0, q_1) \inlift \trig_C(\bar c'_0, \gamma_1((\bar c'_n)_{n>0}))$. Let $W \coloneqq \{ (q, c) \in Q \times C \mid q \in c \}$. By definition of relation liftings, we need to find $\bar w \in FW$ with $F\prj{1}(\bar w) = \trig_Q(\bar q'_0, q_1)$ and $F\prj{2}(\bar w) = \trig_C(\bar c'_0, \gamma_1((\bar c'_n)_{n>0}))$. We will construct a suitable $\bar w$ of the form $\bar w = \trig_W(\bar w', w)$ for $\bar w' \in F'W$ and $w \in W$.
    
    Since $\Acc$ is prefix-agnostic, we have $(q_n)_{n > 0} \in \Acc$, therefore $q_1 \in \gamma_1((\bar c'_n)_{n > 0})$. Take $w \coloneqq (q_1, \gamma_1((\bar c'_n)_{n > 0})) \in W$. Moreover, by assumption, $\bar q'_0 \in \bar c'_0$, so there exists some $\bar w' \in F'W$ with $F'\prj{1}(\bar w') = \bar q'_0$ and $F\prj{2}(\bar w') = \bar c'_0$. We show that $\bar w \coloneqq (\bar w', w)$ satisfies the necessary condition:
    \begin{gather*}
        F\prj{1}(\bar w) = (F\prj{1} \circ \trig_W)(\bar w', w) = \trig_Q(F\prj{1}(\bar w'), \prj{1}(w)) = \trig_Q(\bar q'_0, q_1), \\
        F\prj{2}(\bar w) = (F\prj{2} \circ \trig_W)(\bar w', w) = \trig_C(F\prj{2}(\bar w'), \prj{2}(w)) = \trig_C(\bar c'_0, \gamma_1((\bar c'_n)_{n > 0})).
    \end{gather*}

    $(\supseteq)$ Suppose $q \in \gamma_0(\trig_C(\bar c'_0, \gamma_1(\bar c'_n)_{n > 0}))$. By definition of $\gamma_0$, there exists $\bar q \in FQ$ such that $\bar q \in \delta(q)$ and $\bar q \inlift \trig_C(\bar c'_0, \gamma_1((\bar c'_n)_{n > 0})) \eqqcolon \bar c$. Consider again $W$ as defined above. Since $\bar q \in \bar c$, there exists $\bar w \in FW$ with $F\prj{1}(\bar w) = \bar q$ and $F\prj{2}(\bar w) = \bar c$. By Proposition \ref{prop:trigNaturalitySquareIsWeakPullback}, the following diagram is a weak pullback:
    \begin{equation*}
        \begin{tikzcd}
            {F'W \times W} & FW \\
            {F'C \times C} & FC
            \arrow["{\trig_W}", from=1-1, to=1-2]
            \arrow["{F'\prj{2}\times\prj{2}}"', from=1-1, to=2-1]
            \arrow["{F\prj{2}}", from=1-2, to=2-2]
            \arrow["{\trig_C}"', from=2-1, to=2-2]
        \end{tikzcd}
    \end{equation*}
    Since $\trig_C(\bar c'_0, \gamma_1((\bar c'_n)_{n>0})) = \bar c = F\prj{2}(\bar w)$, there exists $(\bar w', w) \in F'W \times W$ with $F'\prj{2}(\bar w') = \bar c'_0$, $\prj{2}(w) = \gamma_1((\bar c'_n)_{n>0})$ and $\trig_W(\bar c', c) = \bar c$. Let $\bar q'_0 \coloneqq F'\prj{1}(\bar w')$ and $q_1 \coloneqq \prj{1}(w)$. Hence, we have $\bar q'_0 \inlift \bar c'_0$ and $q_1 \in \gamma_1((\bar c'_{n>0}))$. The latter means that there exist $(\bar q_n)_{n > 0} \in \Acc$ and $(\bar q'_n)_{n>0} \in (F'Q)^\omega$ with $\bar q'_n \inlift \bar c'_n$ and $\trig_Q(\bar q'_n, q_{n+1}) \in \delta(q_n)$ for all $n > 0$. Since $\Acc$ is prefix-agnostic, we have $(q_n)_{n \in \omega} \in \Acc$, so it suffices to show $\trig_Q(\bar q'_0, q_1) \in \delta(q_0)$:
        $\trig_Q(\bar q'_0, q_1) = \trig_Q(F'\prj{1}(\bar w'), \prj{1}(w)) = (F\prj{1} \circ \trig_W)(\bar w', w) = F\prj{1}(\bar w) = \bar q \in \delta(q_0)$.
\end{proof*}

\begin{proof*}{Proof of Theorem \ref{thm:automatonAlgebraLanguage}}
    Let $\mathcal A = (Q, \delta, Q_I, \Acc)$ and $\autalg_{\mathcal{A}} = (C, \gamma = [\gamma_0,\gamma_1], U)$. Define:
    \begin{align*}
        f : \thin Z \to C, \qquad
        z \mapsto \{ q \in Q \mid \text{there exists a run of $\mathcal A$ on $(\thin Z,\thin \zeta, z)$, starting from $q$}\}.
    \end{align*}
    We claim that $f: (\thin Z,\thin \beta) \to (C, \gamma)$ is an $(F+G)$-algebra morphism. Since $(\thin Z, \thin \beta)$ is an initial coherent $(F+G)$-algebra, this means $f = \cev_{(C, \gamma)}$. Now $\mathcal A$ accepts $(X, \xi, x_I)$ if and only if $\mathcal A$ accepts $(\thin Z, \thin \zeta, \tbeh_{(X, \xi)}(x_I))$ if and only if $(f \circ \tbeh_{(X, \xi)})(x) \cap Q_I \neq \emptyset$ if and only if $(\cev_{(C,\gamma)} \circ \tbeh_{(X,\xi)})(x) \in U$.

    In the remainder of the proof, we show that $f$ is indeed an $(F+G)$-algebra morphism. Firstly, we prove that $(f \circ \thin\beta_0)(\bar z) = (\gamma_0 \circ Ff)(\bar z)$, for an arbitrary $\bar z \in F\thin Z$, by considering the two inclusions separately.

    $(\subseteq)$ Suppose $q \in (f \circ \thin\beta_0)(\bar z)$, i.e., there exists a run of $\mathcal A$ on $(\thin Z, \thin \zeta, \thin\beta_0(z))$, starting from $q$. Let this run be denoted by $(R, \rho = \langle \rho_F, \rho_X, \rho_Q \rangle, r_I)$, where $X$ stands for $\thin Z$. We show that $\bar q \inlift \bar c$ for $\bar q \coloneqq (F\rho_Q \circ \rho_F)(r_I)$ and $\bar c \coloneqq Ff(\bar z)$.
    Define the relation $W \coloneqq \{ (q, c) \in Q \times C \mid q \in c \}$ and the map $g: R \to W$ with $g(r) \coloneqq (\rho_Q(r), (f \circ \rho_X)(r))$.
    This map is well-defined, because, for each $r \in R$, since $\Acc$ is prefix-agnostic, the reachable part of $(R, \rho, r)$ is a run of $\mathcal A$ on $(Z, \thin \zeta, \rho_X(r))$, starting from $\rho_Q(r)$. Now $\bar w \coloneqq (Fg \circ \rho_F)(r_I) \in FW$ witnesses $\bar q \inlift \bar c$, because $F\prj{1}(\bar w) = (F\rho_Q \circ \rho_F)(r_I) = \bar q$ and $F\prj{2}(\bar w) = (Ff \circ F\rho_X \circ \rho_F)(r_I) = (Ff \circ \thin \zeta \circ \rho_X)(r_I) = (Ff \circ \thin \zeta \circ \thin\beta_0)(\bar z) = Ff(\bar z) = \bar c$.
    Moreover, $\bar q \in (\delta \circ \rho_Q)(r_I) = (\delta \circ \thin\beta_0)(\bar z)$. Therefore $q \in \gamma_0(Ff(\bar z))$.

    $(\supseteq)$ Suppose $q \in (\gamma_0 \circ Ff)(\bar z)$. This means that there exists $\bar q \inlift Ff(\bar z)$ such that $\bar q \in \delta(q)$. Let $\bar w \in FW$ be a witness of $\bar q \inlift Ff(\bar z)$, i.e., $F\prj{1}(\bar w) = \bar q$ and $F\prj{2}(\bar w) = Ff(\bar z)$. For each $z \in \thin Z$ and $w \in W$ with $\prj{2}(w) = f(z)$, there exists a run $\mathsf{R}_{z, w}$ of $\mathcal A$ on $(Z, \thin z, z)$ starting at $\prj{1}(w)$. Let $\mathsf R = (R, \rho = \langle \rho_F, \rho_X, \rho_Q \rangle)$ be the disjoint union of all runs $R_{z, w}$, $h_Z: R \to \thin Z$ be the map sending each $r \in R_{z, w}$ to $z$ and $h_W: R \to W$ be the map sending each $r \in R_{z, w}$ to $w$. Note that $h_Z = \rho_X$ and $\prj{1} \circ h_W = \rho_Q$. The following left-hand side square is a weak pullback:
    \begin{equation*}
        \begin{tikzcd}
            R & W && FR & FW \\
            {\thin Z} & C && {F\thin Z} & FC
            \arrow["{h_W}", from=1-1, to=1-2]
            \arrow["{h_Z}"', from=1-1, to=2-1]
            \arrow["{\prj{2}}", from=1-2, to=2-2]
            \arrow["{Fh_W}", from=1-4, to=1-5]
            \arrow["{Fh_Z}"', from=1-4, to=2-4]
            \arrow["{F\prj{2}}", from=1-5, to=2-5]
            \arrow["f"', from=2-1, to=2-2]
            \arrow["Ff"', from=2-4, to=2-5]
        \end{tikzcd}
    \end{equation*}
    Since $F$ preserves weak pullbacks, the right-hand side square is also a weak pullback. Now, since $Ff(\bar z) = F\prj{2}(\bar w)$, there exists $\bar r \in FR$ such that $Fh_Z(\bar z) = \bar z$ and $Fh_W(\bar r) = \bar w$. We define a new pre-run by adding a fresh state $r_I$ to $\mathsf R$ with $\rho_F(r_I) \coloneqq \bar r$, $\rho_X(r_I) \coloneqq \thin\beta_0(\bar z)$ and $\rho_Q(r_I) \coloneqq q$,
    and taking the states in $\mathsf R$ reachable from $r_I$. In order to show that the resulting coalgebra is a pre-run, it suffices to show that properties (i) and (ii) of pre-runs hold at $r_I$, since $\mathsf R$ is a disjoint union of runs with one additional state $r_I$. Property (iii) of pre-runs holds automatically, because $\Acc$ is prefix-agnostic. We have:
    \begin{gather*}
        (F\rho_X \circ \rho_F)(r_I) = F\rho_X(\bar r) = Fh_Z(\bar r) = \bar z = (\thin \zeta \circ \thin\beta_0)(\bar z) = (\thin \zeta \circ \rho_X)(r_I), \\
        (F\rho_Q \circ \rho_F)(r_I) = F\rho_Q(\bar r) = F\prj{1} \circ h_W(\bar r) = F\prj{1}(\bar w) = \bar q \in \delta(q) = (\delta \circ \rho_Q)(r_I).
    \end{gather*}
    Now, by Lemma \ref{lem:preRunImpliesAccepted}, there exists a run of $\mathcal A$ on $(\thin Z, \thin \zeta, \thin \beta_0(\bar z))$, starting from $q$, hence $q \in (f \circ \thin\beta_0)(\bar z)$.

    Secondly, we prove $(f \circ \thin\beta_1)((\bar z'_n)_{n \in \omega}) = (\gamma_1 \circ Gf)((\bar z'_n)_{n \in \omega})$ for an arbitrary $(\bar z'_n)_{n \in \omega} \in G\thin Z$. We again consider the two inclusions separately.

    $(\subseteq)$ Suppose $q \in (f \circ \thin \beta_1)((\bar z'_n)_{n \in \omega})$, i.e., there exists a run $(R, \rho = \langle \rho_F, \rho_X, \rho_Q \rangle, r_I)$ of $\mathcal A$ on $(\thin Z, \thin \zeta, \thin \beta_1((\bar z'_n)_{n\in\omega})$, starting from $q$. We recursively define sequences $(r_n)_{n\in\omega} \in R^\omega$ and $(\bar r'_n)_{n\in\omega} \in GR$ satisfying $\rho_X(r_n) = \thin\beta_1((\bar z'_m)_{m \geq n})$, $F'\rho_X(\bar r'_n) = \bar z'_n$ and $\trig_R(\bar r'_n, r_{n+1}) = \rho_F(r_n)$.
    \begin{itemize}
        \item $r_0 \coloneqq r_I$.
        \item Suppose $r_n$ has been defined. By Proposition \ref{prop:trigNaturalitySquareIsWeakPullback}, we have the weak pullback:
        \begin{equation*}
            \begin{tikzcd}
                {F'R \times R} & FR \\
                {F'\thin Z \times \thin Z} & F\thin Z
                \arrow["{\trig_R}", from=1-1, to=1-2]
                \arrow["{F'\rho_X \times \rho_X}"', from=1-1, to=2-1]
                \arrow["{F\rho_X}", from=1-2, to=2-2]
                \arrow["{\trig_{\thin Z}}"', from=2-1, to=2-2]
            \end{tikzcd}
        \end{equation*}
        Since
            $\trig_{\thin Z}(\bar z'_n, \thin \beta_1((\bar z'_m)_{m \geq n+1})) = \thin\zeta(\thin \beta_1((\bar z'_m)_{m \geq n})) =
            (\thin\zeta \circ \rho_X)(r_n) = (F\rho_X \circ \rho_F)(r_n)$,
        we can take some $(\bar r'_n, r_{n+1}) \in F'R \times R$ satisfying $F'\rho_X(\bar r'_n) = \bar z'_n$, $\rho_X(r_{n+1}) = \thin \beta_1((\bar z'_m)_{m \geq n+1})$ and $\trig_R(\bar r'_n, r_{n+1}) = \rho_F(r_n)$.
    \end{itemize}
    Now define $q_n \coloneqq \rho_Q(r_n)$ and $\bar q'_n \coloneqq F'\rho_Q(\bar r'_n)$, for all $n \in \omega$. We show $(q_n)_{n\in\omega}$ and $(\bar q'_n)_{n\in\omega}$ witness $q \in \gamma_1((F'f(\bar z'_n))_{n\in\omega})$. By property (iii) of pre-runs, we get $(q_n)_{n\in\omega} \in \Acc$. The relation $\bar q'_n \in F'f(\bar z'_n)$ is witnessed by $g(r_n)$ ($g$ was defined earlier, in the $(\subseteq)$ inclusion for $(f \circ \thin\beta_0)(\bar z) = (\gamma_0 \circ Ff)(\bar z)$), because:
    \begin{gather*}
        F'\prj{1}(\bar w'_n) = (F'\prj{1} \circ F'g)(\bar r'_n) = F\rho_Q(\bar r'_n) = \bar q'_n, \\
        F'\prj{2}(\bar w'_n) = (F'\prj{2} \circ F'g)(\bar r'_n) = (F'f \circ F'\rho_X)(\bar r'_n) = F'f(\bar z'_n).
    \end{gather*}
    Lastly,
        $\trig_Q(\bar q'_n, q_{n+1}) = \trig_Q(F'\rho_Q(\bar r'_n), \rho_Q(r_{n+1})) = (F\rho_Q \circ \trig_R)(\bar r'_n, r_{n+1}) = (F\rho_Q \circ \rho_F)(r_n) \in (\delta \circ \rho_Q)(r_n) = \delta(q_n)$.

    $(\supseteq)$ Suppose $q \in (\gamma_0 \circ Gf)((\bar z'_n)_{n\in\omega})$. This means that there exist $(q_n)_{n\in\omega} \in \Acc$ and $(\bar q'_n)_{n\in\omega} \in GQ$ satisfying $\bar q'_n \inlift F'f(\bar z'_n)$ and $\trig_Q(\bar q'_n, q_{n+1}) \in \delta(q_n)$, for all $n \in \omega$. Let $\bar w'_n$ be a witness of $\bar q'_n \inlift F'f(\bar z'_n)$, i.e., $F'\prj{1}(\bar w'_n) = \bar q'_n$ and $F'\prj{2}(\bar w'_n) = F'f(\bar z'_n)$, for all $n \in \omega$. Consider again the runs $\mathsf R_{z, w}$ for each $z \in \thin Z$, $w \in W$, their disjoint union $\mathsf R$ and the maps $h_Z$, $h_W$ defined earlier in the proof (in the $(\supseteq)$ inclusion for $(f \circ \thin\beta_0)(\bar z) = (\gamma_0 \circ Ff)(\bar z)$). Since $F'$ preserves weak pullback, we have the following weak pullback:
    \begin{equation*}
        \begin{tikzcd}
            {F'R} & {F'W} \\
            {F'\thin Z} & {F'C}
            \arrow["{{F'h_W}}", from=1-1, to=1-2]
            \arrow["{{F'h_Z}}"', from=1-1, to=2-1]
            \arrow["{{F'\prj{2}}}", from=1-2, to=2-2]
            \arrow["F'f"', from=2-1, to=2-2]
        \end{tikzcd}
    \end{equation*}
    For each $n \in \omega$, we know $F'f(\bar z'_n) = F'\prj{2}(\bar w'_n)$, hence there exists $\bar r'_n \in F'R$ with $F'h_Z(\bar r'_n) = \bar z'_n$ and $F'h_W(\bar r'_n) = \bar w'_n$. We define a new pre-run by adding fresh states $\{ r_n \mid n \in \omega \}$ to $\mathsf R$ with $\rho_F(r_n) \coloneqq \trig_R(\bar r'_n, r_{n+1})$, $\rho_X(r_n) \coloneqq \thin\beta_1((\bar z'_m)_{m \geq n})$, $\rho_Q(r_n) \coloneqq q_n$, and taking the states reachable from $r_0$. In order to show that the resulting coalgebra is a pre-run, we verify that properties (i) and (ii) hold at every $r_n$ and that property (iii) holds. For property (i) of pre-runs, we have:
    \begin{multline*}
        (F\rho_X \circ \rho_F)(r_n) = (F\rho_X \circ \trig_R)(\bar r'_n, r_{n+1}) = \trig_{\thin Z}(F'\rho_X(\bar r'_n), \rho_X(r_{n+1})) = \\
        \trig_{\thin Z}(F'h_Z(\bar r'_n), \rho_X(r_{n+1})) = \trig_{\thin Z}(\bar z'_n, \thin\beta_1((z_m)_{m \geq n+1})) =
        (\thin \zeta \circ \thin\beta_1)((\bar z'_m)_{m \geq n}) = (\thin \zeta \circ \rho_X)(r_n).
    \end{multline*}
    For property (ii) of pre-runs, we have:
    \begin{multline*}
        (F\rho_Q \circ \rho_F)(r_n) = (F\rho_Q \circ \trig_R)(\bar r'_n, r_{n+1}) = \trig_Q(F'\rho_Q(\bar r'_n), \rho_Q(r_{n+1})) = \\
        \trig_Q((F'\prj{1} \circ F'h_W)(\bar r'_n), q_{n+1}) = \trig_Q(F'\prj{1}(\bar w'_n), q_{n+1}) = 
        \trig_Q(\bar q'_n, q_{n+1}) \in \delta(q_n) = (\delta \circ \rho_Q)(r_n).
    \end{multline*}
    For property (iii) of pre-runs, let $(t_n)_{n \in \omega} \in R^\omega$ be such that $t_0 = r_0$ and $t_{n+1} \in \Base_F(\rho_F(t_n))$ for every $n \in \omega$. If $(t_n)_{n\in\omega} = (r_n)_{n\in\omega}$, we have $(\rho_Q(t_n)) = (q_n)_{n\in\omega} \in \Acc$ by assumption. Otherwise, there exists an $n \in \omega$ such that for all $m \geq n$, we have $t_m \notin \{ r_k \mid k \in \omega \}$. This implies that $(t_m)_{m \geq n}$ is entirely contained in some run $\mathsf R_{z, w}$. By reachability of $\mathsf R_{z,w}$, let $u_0, \dotsc, u_{n-1} \in R_{z,w}$ be such that $u_0$ is the initial state in $R_{z,w}$, $u_{m+1} \in \Base_F(\rho_F(u_m))$ for all $0 \leq m < n-1$, and $t_{n} \in \Base_F(\rho_F(u_{n-1}))$. Since $\mathsf R_{z, w}$ is a run, we have $(\rho_Q(u_m))_{0 \leq m < n} \cdot (\rho_Q(t_m))_{m \geq n} \in \Acc$. Since $\Acc$ is prefix-agnostic, we know $(\rho_Q(t_m))_{m \geq n} \in \Acc$. Again, since $\Acc$ is prefix-agnostic, $(\rho_Q(t_m))_{m \in \omega} \in \Acc$.
\end{proof*}

\begin{proof*}{Proof of Proposition \ref{prop:automatonAlgebraRational}}\qedoverwrite
    Let $\mathcal A = (Q, \delta, Q_I, \Omega)$ be a parity $F$-automaton and $\autalg_{\mathcal A} = (C, \gamma = [\gamma_0, \gamma_1], U)$. Define a two-sorted algebra $(\widehat C, C)$ by:
    \begin{align*}
        \widehat C &\coloneqq \mathcal P(Q \times Q \times \Im(\Omega)), \\
        \widehat c_1 \cdot \widehat c_2 &\coloneqq \{ (q, q_2, max\{m_1, m_2\}) \mid \exists q_1 \in Q: (q, q_1, m_1) \in \widehat c_1 \land (q_1, q_2, m_2) \in \widehat c_2 \}, \\
        \widehat c \times c &\coloneqq \{ q \mid \exists q_1 \in c, m \in \omega: (q, q_1, m) \in \widehat c \, \}, \\
        \Pi((\widehat c_n)_{n\in\omega}) &\coloneqq \{ q_0 \mid \exists (q_n)_{n\in\omega} \in Q^\omega, (m_n)_{n\in\omega} \in \mathbb{N}^\omega: 
        \forall n \in \omega ((q_n, q_{n+1}, m_n) \in \widehat c_n) \: \land 
        \limsup_{n\in\omega} m_n \text{ is even} \}.
    \end{align*}
    By a lengthy but straightforward verification, it can be shown that $(\widehat C, C)$ satisfies the axioms of $\omega$-semigroups. We define the map $\gamma_2: (F'C)^+ \to \widehat C$ by specifying its restriction to the set of generators $F'C$ of the freely generated semigroup $(F'C)^+$. For each $\bar c' \in F'C$, we set:
    \begin{equation*}
        \gamma_2(\bar c') \coloneqq \{ (q, q_1, max\{\Omega(q), \Omega(q_1)\}) \mid \exists \bar q' \in F'Q (\bar q' \inlift \bar c' \land \trig_Q(\bar q', q_1) \in \delta(q) \}.
    \end{equation*}
    We claim that $(\gamma_2, \gamma_1): ((F'C)^+, (F'C)^\omega) \to (\Im(\gamma_2), \Im(\gamma_1))$ is an $\omega$-semigroup homomorphism. It suffices to check preservation of infinite products. For each $(\bar c'_n)_{n\in\omega} \in (F'C)^\omega$ and $q_0 \in Q$, we have:
    \begin{align*}
        q_0 \in \gamma_1((\bar c'_n)_{n\in\omega}) &\iff \\
        \exists (q_n)_{n\geq 1} \in Q^\omega, (\bar q'_n)_{n\in\omega} \in (F'Q)^\omega: \forall n(\bar q'_n \inlift \bar c'_n, \trig_Q(\bar q'_n, q_{n+1}) \in \delta(q_n)),
        \limsup_{n\in\omega} \Omega(q_n) \text{ is even} &\iff \\
        \exists (q_n)_{n\geq 1} \in Q^\omega, (\bar q'_n)_{n\in\omega} \in (F'Q)^\omega: \forall n( \bar q'_n \inlift \bar c'_n, \trig_Q(\bar q'_n, q_{n+1}) \in \delta(q_n)),& \\
        \limsup_{n\in\omega} \max\{\Omega(q_n), \Omega(q_{n+1})\} \text{ is even} &\iff \\
        \exists (q_n)_{n \geq 1} \in Q^\omega, (m_n)_{n \in \omega} \in \omega^\omega: \forall n((q_n, q_{n+1}, m_n) \in \gamma_2(\bar c'_n)),
        \limsup_{n\in\omega} m_n \text{ is even} &\iff \\
        q_0 \in \Pi((\gamma_2(\bar c'_n))_{n\in\omega}). &
        \qedhere
    \end{align*}
\end{proof*}

\section{Detailed Proofs from Section \ref{sec:algebraToAutomaton}}

\begin{proof*}{Proof of Proposition \ref{prop:markingsAndPreruns}}
    Let $\mathcal A_{\mathsf C} = (Q, \delta, Q_I, \Acc)$.
    
    (i) Let $(R,\rho,r_I)$ be a pre-run of $\mathcal A_{\mathsf C}$ on $(X, \xi, x_I)$. We first show that $\mu \coloneqq [\id + \prj{2}] \circ \rho_Q$ satisfies condition (i) of markings, i.e., it is an $F$-coalgebra-to-algebra morphism. Let $r \in R$ and $\bar r = \rho_F(r)$. From property (ii) of pre-runs it follows that $F\rho_Q(\bar r) \in \delta(\rho_Q(r))$. Let $c \coloneqq \mu(r)$. From the definition of the transition function $\delta$ it follows that there exists $\bar c \in C$ such that $\gamma_0(\bar c) = c$ and $F\rho_Q(\bar r) = F\inj{2}(\decomp_C(\bar c))$. Hence
    $F\mu(\bar r) = (F[\id, \prj{2}] \circ F\rho_Q)(\bar r) = (F[\id, \prj{2}] \circ F\inj{2} \circ \decomp_C)(\bar c) = (F\prj{2} \circ \decomp_C)(\bar c)$.
    By Lemma \ref{lem:decompProperties} (i), the latter equals to $\bar c$. Therefore
    $(\gamma_0 \circ F\mu \circ \rho_F)(r) = (\gamma_0 \circ F\mu)(\bar r) = \gamma_0(\bar c) = c = \mu(r)$.

    Next, we prove condition (ii) of markings. Fix arbitrary $(r_n)_{n\in\omega} \in R^\omega$ and $(\bar r'_n)_{n\in\omega} \in (F'R)^\omega$ with $\trig_R(\bar r'_{n+1}, r_{n+1}) = \rho_F(r_n)$ for all $n \in \omega$. Let $\hat R = \{ r \in R \mid \rho_Q(r) \in \inj{2}[F'C \times C] \}$ and $\hat \rho_Q: \hat R \to F'C \times C$ be the map satisfying $\inj{2} \circ \hat \rho_Q = \restr{\rho_Q}{\hat R}$. Consider $(\bar r'_{n+1}, r_{n+1})$ for an arbitrary $n$. Since $F\rho_Q(\rho_F(r_n)) \in \delta(\rho_Q(r_n))$, the definition of $\delta$ implies $F\rho_Q(\rho_F(r_n)) \in (F\inj{2} \circ \decomp_C)[FC]$. Thus $\Base_{F'}(\bar r_{n+1}') \cup \{r_{n+1} \} = \Base_F(\rho_F(r_n)) \subseteq \hat R$. Naturality of $\trig$ gives us:
    $(\trig_{F'C \times C} \circ (F'\hat\rho_Q \times \hat \rho_Q))(\bar r_{n+1}', r_{n+1}) = (F\hat\rho_Q \circ \trig_{\hat R})(\bar r_{n+1}', r_{n+1}) =
    (F\hat\rho_Q \circ \rho_F)(r_n) \in \decomp_C[FC]$.
    By Lemma \ref{lem:decompProperties} (ii), $(F'\prj{2}\circ F'\hat\rho_Q)(\bar r'_{n+1}) = (\prj{1} \circ \hat\rho_Q)(r_{n+1})$.

    By reachability of pre-runs, there exist $s_0, \dotsc, s_m \in R$ such that $s_0 = r_I$, $s_m = r_0$ and $s_{i+1} \in \Base_F(\rho_F(s_i))$. Then $(\rho_Q(s_i))_{i = 0}^{m} \cdot (\rho_Q(r_i))_{i > 0} \in \Acc$ by property (iii) of pre-runs. As a result, $\mu(r_0) = \gamma_1((\prj{1} \circ \hat\rho_Q)(r_n))_{n > 0})$. Now the equality $(F'\prj{2} \circ F'\hat\rho_Q)(\bar r'_{n}) = (\prj{1} \circ \hat\rho_Q)(r_{n})$ implies:
    $\mu(r_0) = \gamma_1(((\prj{1} \circ \hat\rho_Q)(r_n))_{n > 0}) = \gamma_1(((F'\prj{2} \circ F'\hat\rho_Q)(\bar r'_n))_{n > 0}) =
    \gamma_1((F'\mu(\bar r_n'))_{n>0}) = \gamma_1(G\mu((\bar r_n')_{n>0}))$.

    (ii) Let $\mu: X \to C$ be a consistent marking of $(X, \xi)$ with $(C, \gamma)$. Define a pointed $(F \times \Delta_X \times \Delta_Q)$-coalgebra $(R, \langle \rho_F, \rho_X, \rho_Q \rangle, r_I)$ by:
    \begin{align*}
        &R \coloneqq X \times Q, \qquad r_I \coloneqq (x_I, \inj{1} \circ \mu(x_I)), \qquad \rho_X \coloneqq \prj{1}, \qquad \rho_Q \coloneqq \prj{2}, \\
        &\rho_F \coloneqq X \times Q \xrightarrow{\prj{1}} X \xrightarrow{\xi} FX \xrightarrow{\decomp_X} F(F'X \times X) \xrightarrow{F(F'\mu \times \langle \id, \mu \rangle)} F(F'C \times (X \times C)) \\
        &\hspace{167px} \xrightarrow{\cong} F(X \times (F'C \times C)) \xrightarrow{F(\id \times \inj{2})} F(X \times Q).
    \end{align*}
    Let $\dot{\mathsf R} \coloneqq (\dot R, \langle \rho_F, \rho_X, \rho_Q \rangle, r_I)$ be its reachable subcoalgebra\footnote{This is a slight abuse of notation -- we write $\rho_F$ but we mean the restriction of $\rho_F$ to $\dot R$. Similarly for $\rho_X$ and $\rho_Q$.}. We prove that $\dot{\mathsf R}$ is a pre-run.

    For property (i) of pre-runs, we are to show that $F\rho_X \circ \rho_F = \xi \circ \rho_X$. For an arbitrary $(x, q) \in \dot R$ we have:
        $(F\rho_X \circ \rho_F)(x, q) = (F\prj{1} \circ \rho_F)(x,q) = (F\prj{2} \circ \decomp_X \circ \xi)(x) = (\id \circ \xi)(x) =
        (\xi \circ \rho_X)(x, q)$,
    where the second equality uses the definition of $\rho_F$ and the third equality uses Lemma \ref{lem:decompProperties} (i).

    For property (ii) of pre-runs, we first prove that every $(x, q) \in \dot R$ satisfies $[\id, \prj{2}](q) = \mu(x)$. We do this by induction on the successor relation of $\dot{\mathsf R}$.
    \begin{itemize}
        \item Root case: by definition of $r_I = (x_I, \inj{1} \circ \mu(x_I))$.
        \item Successor case: let $(x, q) \in \dot R$, then the equality holds for an arbitrary element of $\Base_{F \times \Delta_X \times \Delta_Q}(\langle \rho_F, \rho_X, \rho_Q \rangle(x, q)) = \Base_F(\rho_F(x, q))$, because:
        \begin{equation}
            \Base_F(\rho_F(x, q)) =
            \{ (x_1, \inj{2}(F'\mu(\bar x'), \mu(x_1)) \mid (\bar x', x_1) \in \Base_F((\decomp_X \circ \xi)(x)) \}. \label{eq:SucRhoF}
        \end{equation}
    \end{itemize}
    Now we prove that if $(x, q) \in \dot R$, then $F\rho_Q(\rho_F(x, q)) \in \delta(\rho_Q(x, q))$. We have
    $(F\rho_Q \circ \rho_F)(x,q) = (F\prj{2} \circ \rho_F)(x, q) = (F(F'\mu \times \mu) \circ \decomp_X \circ \xi)(x) = (\decomp_C \circ F\mu \circ \xi)(x)$.
    Hence there exists $\bar c \coloneqq (F\mu \circ \xi)(x)$ such that
    $\gamma_0(\bar c) = (\gamma_0 \circ F\mu \circ \xi)(x) = \mu(x) = [\id,\prj{2}](q)$ and $(F\rho_Q \circ \rho_F)(x, q) = \decomp_C(\bar c)$.
    According to the definition of $\delta$, this implies $(F\rho_Q \circ \rho_F)(x, q) \in \delta(\rho_Q(x, q))$.

    For property (iii) of pre-runs, let $(r_n)_{n\in \omega} \in {\dot R}^\omega$, with $r_n = (x_n,q_n)$, be such that $r_0 = r_I$ and $r_{n+1} \in \Base_F(\rho_F(r_n))$ for all $n \in \omega$. We are to show that $(\rho_Q(r_n))_{n\in\omega} \in \Acc$, i.e., $(q_n)_{n\in\omega} \in \Acc$. By Equation \eqref{eq:SucRhoF}, for all $n \in \omega$ there exists $\bar x'_{n+1} \in F'X$ such that $(\bar x'_{n+1}, x_{n+1}) \in \Base_F(\decomp_X(\xi(x_n))$ and $q_{n+1} = \inj{2}(F'\mu(\bar x'_{n+1}), \mu(x_{n+1}))$. By Lemma \ref{lem:decompProperties} (iii), we have $\trig_X(\bar x'_{n+1}, x_{n+1}) = \xi(x_n)$. Hence, for every $m \in \omega$, property (ii) of the marking $\mu$ implies that $\gamma_1(G\mu((\bar x'_{n+1})_{n > m})) = \mu(x_m)$. Finally, by letting $c_0 \coloneqq \mu(x_0)$ and $(\bar c'_{n+1}, c_{n+1}) \coloneqq (F'\mu(\bar x'_{n+1}), \mu(x_{n+1}))$, we have that $(q_n)_{n\in \omega}$ can be written as $\inj{1}(c_0) \cdot (\inj{2}(\bar c'_n,c_n))_{n>0}$ with $\forall m(c_m = \gamma_1((\bar c_n')_{n>m})$. Therefore $(q_n)_{n\in\omega} \in \Acc$.
\end{proof*}

\begin{proof*}{Proof of Lemma \ref{lem:markingsPreservedByPrecomp}}
    Condition (i) of markings for $\mu \circ f$ follows immediately from the commutation of the rectangle:
    \[\begin{tikzcd}
        Y & X & C \\
        FY & FX & FC
        \arrow["f", from=1-1, to=1-2]
        \arrow["\upsilon"', from=1-1, to=2-1]
        \arrow["\mu", from=1-2, to=1-3]
        \arrow["\xi", from=1-2, to=2-2]
        \arrow["Ff"', from=2-1, to=2-2]
        \arrow["{F\mu}"', from=2-2, to=2-3]
        \arrow["{\gamma_0}"', from=2-3, to=1-3]
    \end{tikzcd}\]
    where the two squares commute by assumption. To show that $\mu \circ f$ satisfies condition (ii), let $(y_n) \in Y^\omega$ and $(\bar y_n')_{n \in \omega} \in GY$ satisfy $\trig_Y(\bar y'_{n+1}, y_{n+1}) = \upsilon(y_n)$ for all $n \in \omega$. Let $x_n \coloneqq f(y_n)$ and $\bar x_n' \coloneqq F'f(\bar y'_n)$ for every $n \in \omega$ and $n > 0$, respectively. We have
    $\trig_X(\bar x_{n+1}', x_{n+1}) = (Ff \circ \trig_Y)(\bar y'_{n+1}, y_{n+1}) = (Ff \circ \upsilon)(y_n) = (\xi \circ f)(y_n) = \xi(x_n)$.
    Since $\mu$ satisfies condition (ii) of markings, we get $\mu(x_0) = (\gamma_1 \circ Gf)((\bar x'_n)_{n > 0})$. Therefore
    $(\mu \circ f)(y_0) = \mu(x_0) = (\gamma_1 \circ Gf)((\bar x'_n)_{n > 0}) = (\gamma_1 \circ G(\mu \circ f))((\bar y'_n)_{n > 0})$.
\end{proof*}

The following lemma collects properties of the $(F+G)$-coalgebra $(\thin Z, \eta)$ (see Definition \ref{def:etaMap}) that will be used towards proving uniqueness of markings.

\begin{lemma}
\label{lem:etaProperties}
    \begin{enumerate}
        \item[(i)] $\iota: (\thin Z, \eta) \to (A, \alpha^{-1})$ is an $(F+G)$-coalgebra morphism;
        \item[(ii)] $\thin \beta \circ \eta = \id$;
        \item[(iii)] if $\eta(z) \in \inj1[F\thin Z]$, then $\eta(z) = \inj1(\thin \zeta(z))$, for all $z \in \thin Z$;
        \item[(iv)] if $\eta(z) = \inj2((z_n)_{n \in \omega}) \in \inj2[G\thin Z]$, then $\thin \beta_1((z_n)_{n > m}) \notin \Base_{F'}(\bar z'_m)$ for all $m \in \omega$, for all $z \in \thin Z$.
    \end{enumerate} 
\end{lemma}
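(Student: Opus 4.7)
The plan is to unfold the definition $\eta = (F+G)\interpr{-} \circ \alpha^{-1} \circ \iota$ from Definition~\ref{def:etaMap} and then exploit three facts: (a) $\interpr{-} \circ \iota = \id_{\thin Z}$, since $\iota$ picks out a normal representative; (b) $\interpr{-}: (A, \alpha) \to (\thin Z, \thin \beta)$ is an $(F+G)$-algebra morphism, being the unique such map obtained by initiality of $(A, \alpha)$; and (c) the characterisation of normal terms from \cite[Section~V]{ChernevCirsteaHansenKupkeThinCoalg}. Parts (ii) and (iii) are essentially computational and rely only on (a) and (b) together with Equation~\eqref{eq:betaZetaProp1}; parts (i) and (iv) genuinely use (c).

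For part (ii), I would compute directly:
\begin{equation*}
\thin \beta \circ \eta \;=\; \thin \beta \circ (F+G)\interpr{-} \circ \alpha^{-1} \circ \iota \;=\; \interpr{-} \circ \alpha \circ \alpha^{-1} \circ \iota \;=\; \interpr{-} \circ \iota \;=\; \id_{\thin Z},
\end{equation*}
where the second equality uses (b) and the last uses (a). Part (iii) is then immediate from (ii) and Equation~\eqref{eq:betaZetaProp1}: if $\eta(z) = \inj{1}(\bar z)$, applying $\thin \beta = [\thin \beta_0, \thin \beta_1]$ yields $\thin \beta_0(\bar z) = z$, and since $\thin \zeta = (\thin \beta_0)^{-1}$ we conclude $\bar z = \thin \zeta(z)$, i.e., $\eta(z) = \inj{1}(\thin \zeta(z))$.

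For part (i), I must verify $(F+G)\iota \circ \eta = \alpha^{-1} \circ \iota$. After unfolding $\eta$, the left-hand side becomes $(F+G)(\iota \circ \interpr{-}) \circ \alpha^{-1} \circ \iota$, so it is enough to show that the image of $\alpha^{-1} \circ \iota$ lies inside $(F+G)(\iota[\thin Z])$; equivalently, $\alpha^{-1}$ sends a normal term to an $(F+G)$-structure all of whose components are again normal terms. This closure-under-decomposition property is part of the description of normal representatives given in \cite[Section~V]{ChernevCirsteaHansenKupkeThinCoalg}, which I will cite directly; combined with the fact that $\iota \circ \interpr{-}$ acts as the identity on normal terms, the equation follows.

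Part (iv) is the delicate one, and I expect it to be the main obstacle. Given $\eta(z) = \inj{2}((\bar z'_n)_{n \in \omega})$, I have to rule out that $\thin \beta_1((\bar z'_n)_{n > m})$ coincides with any element of $\Base_{F'}(\bar z'_m)$. The intuition is that such a coincidence would make the stream representation ``contractible'': one could plug the tail into the matching sibling position of $\bar z'_m$ via $\trig_{\thin Z}$, and, using Equation~\eqref{eq:betaZetaProp2}, produce an alternative normal representative of $z$ of strictly smaller complexity, contradicting the canonicity of $\iota(z)$. The concrete obstacle is to translate this informal argument into an appeal to the exact normality condition of \cite[Section~V]{ChernevCirsteaHansenKupkeThinCoalg}; once that condition is quoted, the statement follows because normal streams are precisely those that are irreducible in this sense.
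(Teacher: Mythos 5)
Parts (i)--(iii) of your proposal are correct and correspond to what the paper simply declares ``straightforward'': your computation $\thin \beta \circ \eta = \interpr- \circ \alpha \circ \alpha^{-1} \circ \iota = \interpr- \circ \iota = \id$ for (ii), the derivation of (iii) from (ii) and Equation~\eqref{eq:betaZetaProp1}, and the reduction of (i) to the fact that $\alpha^{-1}$ sends a normal term to a structure with normal components (cited from the earlier paper) are all sound.

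The genuine gap is in part (iv), exactly where you flag the obstacle. Your proposed route --- argue that a coincidence $\thin \beta_1((\bar z'_n)_{n > m}) \in \Base_{F'}(\bar z'_m)$ would let you ``contract'' the stream into an alternative normal representative of $z$ of smaller complexity, contradicting canonicity of $\iota(z)$ --- is not carried out, and as sketched it does not obviously go through: the offending element sits in a \emph{sibling} position of $\bar z'_m$, not on the spine, so it is not explained how plugging the tail into that position would produce any representative of $z$ at all, let alone a strictly smaller one, and the normality condition of the cited paper is not phrased as an irreducibility property of streams. The paper closes (iv) by a direct rank comparison instead: by \cite[Lemma~A.2]{ChernevCirsteaHansenKupkeThinCoalg}, $\thin \beta_1((\bar z'_n)_{n > m})$ has the same \emph{major rank} as $\thin \beta_1((\bar z'_n)_{n \geq m})$, while by \cite[Observation~V.5]{ChernevCirsteaHansenKupkeThinCoalg} every element of $\Base_{F'}(\bar z'_m)$ has strictly lower major rank; hence the membership is impossible. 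This rank invariant is the concrete ingredient your argument is missing, and without it (or a worked-out substitute) part (iv) remains unproved.
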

\begin{proof}
    Properties (i), (ii) and (iii) are straightforward. For (iv), we use the fact that $\beta_1((z_n)_{n > m})$ has the same \emph{major rank} as $\beta_1((z_n)_{n \geq m})$~\cite[Lemma~A.2]{ChernevCirsteaHansenKupkeThinCoalg}, while any element of $\Base_{F'}(\bar z'_m)$ has strictly lower major rank~\cite[Observation~V.5]{ChernevCirsteaHansenKupkeThinCoalg}.
\end{proof}

\begin{proof*}{Proof of Lemma \ref{lem:F+GcoalgebraOnX}}
    We first define the map $\upsilon$ and then prove properties (i) and (ii). Let $x \in X$ and $z \coloneqq \tbeh(x)$. We know that either $\eta(z) \in \inj{1}[F\thin Z]$ or $\eta(z) \in \inj{2}[G\thin Z]$. In the former case, define $\upsilon(x) = \inj1(\xi(x))$. In the latter case, we have $z = \thin \beta_1((\bar z'_n)_{n > 0})$ for $\eta(z) = \inj2(\bar z'_n)_{n > 0}$. We let $z_n \coloneqq \thin \beta_1((\bar z'_m)_{m > n})$ and observe the equality $\thin \zeta(z_n) = \trig_{\thin Z}(\bar z'_{n+1}, z_{n+1})$, by coherence of $(\thin Z, \thin \beta)$ and $\thin \beta_0 = (\thin \zeta)^{-1}$. Our goal is to define $(x_n)_{n \in \omega}$ and $(\bar x'_n)_{n > 0}$ such that $x_0 = x$, $\tbeh(x_n) = z_n$, $F'\tbeh(\bar x'_{n+1}) = \bar z'_{n+1}$ and $\trig_X(\bar x'_{n+1}, x_{n+1}) = \xi(x_n)$, for all $n \in \omega$. We proceed by induction on $n \in \omega$. 
    
    For the base case, $x_0 \coloneqq x$. For the inductive step, suppose $x_n$ has been defined and $\tbeh(x_n) = z_n$, we define $x_{n+1}$ and $\bar x'_{n+1}$. Since $\tbeh: (X,\xi) \to (\thin Z, \thin \zeta)$ is an $F$-coalgebra morphism, we have $\mathcal P \tbeh(\Base_F(\xi(x_n))) = \Base_F(\thin \zeta(z_n))$. Since $z_{n+1} \in \Base_F(\thin \zeta(z_n))$, there exists $x \in \Base_F(\xi(x_n))$ with $\tbeh(x) = z_{n+1}$. We take $x_{n+1}$ to be any such $x$. By the properties of $\trig$, there exist $\bar x'_{n+1} \in F'X$ with $\trig_X(\bar x'_{n+1}, x_{n+1}) = \xi(x_n)$. It remains to show $F'\tbeh(\bar x'_{n+1}) = \bar z'_{n+1}$. We have $\trig_{\thin Z}(F'\tbeh(\bar x'_{n+1}), z_{n+1}) = (F\tbeh \circ \trig_X)(\bar x'_{n+1}, x_{n+1}) = (F\tbeh \circ \xi)(x_n) =(\thin \zeta \circ \tbeh)(x_n) = \thin \zeta(z_n) = \trig_{\thin Z}(\bar z'_{n+1}, z_{n+1})$.
        Moreover, Lemma \ref{lem:etaProperties} (iv) tells us that $z_{n+1} \notin \Base_{F'}(\bar z'_{n+1})$. By the properties of $\trig$, we infer $F'\tbeh(\bar x'_{n+1}) = \bar z'_{n+1}$.

    Next, we verify property (i) of the Lemma. Let $x \in X$, we show $((F+G)\tbeh \circ \upsilon)(x) = (\eta \circ \tbeh)(x)$. If $\upsilon(x) \in \inj1[FX]$, then 
    $((F+G)\tbeh \circ \upsilon)(x) = ((F+G)\tbeh \circ \inj1 \circ \xi)(x) = (\inj1 \circ F\tbeh \circ \xi)(x) =
        (\eta \circ \tbeh)(x)$.
    If $\upsilon(x) \in \inj2[F(X)]$, then
    $((F+G)\tbeh \circ \upsilon)(x) = (\inj2 \circ G\tbeh)((\bar x'_n)_{n>0}) = \inj2((\bar z'_n)_{n > 0}) = (\eta \circ \tbeh)(x)$.

    Finally, we verify property (ii) of the Lemma. Let $\mu: X \to C$ be a marking of $(X, \xi)$ with $(C, \gamma)$ and $x \in X$, we show that $\gamma \circ (F+G)\mu \circ \upsilon = \mu$. If $\upsilon(x) \in \inj1[FX]$, then $(\gamma \circ (F+G)\mu \circ \upsilon)(x) = (\gamma \circ (F+G)\mu)(\inj{1}(\xi(x))) = \mu(x)$,
    because $\mu: (X, \xi) \to (C, \gamma_0)$ is an $F$-coalgebra-to-algebra morphism. If $\upsilon(x) \in \inj2[GX]$, by the construction of $\upsilon$, there exist $(x_n)_{n\in\omega} \in X^\omega$ and $(\bar x'_n)_{n>0} \in GX$ such that $\upsilon(x) = \inj2((\bar x'_n)_{n>0})$ and $\trig_X(\bar x'_{n+1}, x_{n+1}) = \xi(x_n)$. Now
    $(\gamma \circ (F+G)\mu \circ \upsilon)(x) = (\gamma \circ (F+G)\mu)(\inj{2}((\bar x'_n)_{n>0})) =
        (\gamma_1 \circ G\mu)((\bar x'_n)_{n>0}) = \mu(x_0) = \mu(x)$,
    where the third equality uses property (ii) of the marking $\mu$.
\end{proof*}

\begin{proof*}{Proof of Lemma \ref{lem:preRunUniquelyDetermined}}
    By Proposition \ref{prop:markingsAndPreruns} (i), $[\id,\prj{2}] \circ \rho_Q: R \to C$ is a marking of $(R, \rho_F)$ with $(C, \gamma)$. By Propositions \ref{prop:existenceMarkings} and \ref{prop:uniquenessOfMarkings}, $[\id,\prj{2}] \circ \rho_Q = \cev_{(C, \gamma)} \circ \tbeh_{(X, \xi)}$. Similarly, $[\id,\prj{2}] \circ \rho_Q' = \cev_{(C, \gamma)} \circ \tbeh_{(X, \xi)}$, so $[\id,\prj{2}] \circ \rho_Q = [\id,\prj{2}] \circ \rho_Q'$. We prove by induction on the successor relation of $(R, \rho_F, r_I)$ that $\rho_Q(r) = \rho_Q'(r)$, for all $r \in R$.
    \begin{itemize}
        \item $r = r_I$. Then $\rho_Q(r) = \inj{1}(c)$ and $\rho_Q'(r) = \inj{1}(c')$ for some $c, c' \in C$. We have $c = ([\id,\prj{2}] \circ \rho_Q)(r) = ([\id,\prj{2}] \circ \rho_Q')(r) = c'$, so $\rho_Q(r) = \rho_Q'(r)$.
        \item $r \in \Base_F(\rho_F(s))$. Define $Y \coloneqq \Base_F(\rho_F(s))$ and $\bar r \coloneqq \rho_F(s) \in FY$. Let $\delta$ be the transition function of $\mathcal A_{\mathsf{C}}$. Then $F\rho_Q (\bar r) \in (\delta \circ \rho_Q)(s)$, so $F\rho_Q (\bar r) = (F\inj{2} \circ \decomp_C)(\bar c)$ for some $\bar c \in FC$. Hence there exists $\hat \rho_Q: Y \to F'C \times C$ such that $\rho_Q(r_1) = (\inj{2} \circ \hat \rho_Q)(r_1)$ for every $r_1 \in Y$.

        Let $\bar r' \in F'Y$ be such that $\trig_Y(\bar r', r) = \bar r$. We have $\trig_{F'C \times C}(F'\hat \rho_Q(\bar r'), \hat \rho_Q(r)) = (F\hat \rho_Q \circ \trig_Y)(\bar r', r) \in \decomp_{C}[FC]$,
        so Lemma \ref{lem:decompProperties} (ii) implies $F'\prj{2}(F'\hat\rho_Q(\bar r')) = \prj{1}(\hat \rho_Q(r))$. We get
        $\rho_Q(r) = \inj{2}((\prj{1} \circ \hat\rho_Q)(r), (\prj{2} \circ \hat\rho_Q)(r)) = 
        \inj{2}(F'(\prj{2} \circ \hat\rho_Q)(\bar r'), \prj{2} \circ \hat \rho_Q(r)) =
        \inj{2}(F'(\cev_{(C,\gamma)} \circ \tbeh_{(X,\xi)})(\bar r'), (\cev_{(C,\gamma)} \circ \tbeh_{(X,\xi)})(r))$.
        Analogously, $\rho_Q'(r) = \inj{2}(F'(\cev_{(C,\gamma)} \circ \tbeh_{(X,\xi)})(\bar r'), (\cev_{(C,\gamma)} \circ \tbeh_{(X,\xi)})(r))$, so $\rho_Q(r) = \rho_Q'(r)$. \qed
    \end{itemize}
    \qedoverwrite
\end{proof*}

\section{Detailed Proofs from Section \ref{sec:mainResults}}

\begin{proof*}{Proof of Lemma \ref{lem:algebraicAutomatonOfRationalIsOmegaRegular}}
    Let $\gamma = [\gamma_0, \gamma_1]$ and $(\gamma_2, \gamma_1): ((F'C)^+, (F'C)^\omega) \to (\widetilde C, C)$ be an $\omega$-semigroup quotient witnessing rationality of $(C, \gamma)$. Let $Q = C + (F'C \times C)$ and $\Acc$ denote the states and the acceptance condition of $\mathcal A_{\mathsf{C}}$, respectively. For each $c \in C$, we know that the language
        $L(c) \coloneqq \{ (\bar c'_n)_{n\in\omega} \in (F'C)^\omega \mid \gamma_1((\bar c'_n)_{n\in\omega}) = c \}$
    is $\omega$-regular. Indeed, $L(c)$ is recognised by the finite $\omega$-semigroup $(\widetilde C, C)$. Thus let $\phi_c$ be a monadic second-order formula defining $L(c)$. We can obtain a formula $\phi_c(n)$ defining 
    $L_n(c) \coloneqq Q^n \cdot (\inj{2})^\omega(L(c) \times C^\omega)$
    by adjusting the alphabet from $F'C$ to $Q$, by replacing all occurrences of the constant $0$ in $\phi_c$ with $n + 1$ and by restricting all first-order quantifiers to $> n$. Now consider the formula:
    \begin{equation*}
        \phi \coloneqq \forall n \bigwedge_{c\in C} \bigwedge_{\bar c' \in F'C} (n \in \inj{2}(\bar c', c) \to \phi_c(n)).
    \end{equation*}
    which defines the set of infinite words $(q_n)_{n \in \omega} \in Q^\omega$ with the property that if $q_n$ is of the form $\inj{2}(\bar c', c)$ then $(q_n)_{n \in \omega} \in L_n(c)$.
    By intersecting the language of $\phi$ with the $\omega$-regular language $\inj{1}[U] \cdot (\inj2)^\omega(F'C \times C)$,
    thus ensuring that that the first letter of any word in the language lies in the recognising set $U$,
    we obtain exactly $\Acc$. Therefore $\Acc$ is $\omega$-regular.
\end{proof*}

\begin{proof*}{Proof of Lemma \ref{lem:algebraicAutomatonPrefixAgnostic}}
    Let $\algaut_{\mathsf C} = \mathcal A \coloneqq (Q, \delta, Q_I, \Acc)$. Define $\mathcal A' = (Q, \delta, Q_I, \Acc')$ with:
    \begin{multline*}
        \Acc' \coloneqq \{ (q_n)_{n \in \omega} \in Q^\omega \mid \exists m \in \omega, (\bar c'_n)_{n \geq m} \in F'C, (c_n)_{n \geq m} \in C^\omega: \\
        (q_n)_{n \geq m} = (\inj{2}(\bar c_n', c_n))_{n \geq m}  \land
        \forall k \geq m (c_m = \gamma_1((\bar c'_n)_{n \geq k})) \}.
    \end{multline*}
    It follows from the definition of $\Acc'$ that it is prefix-agnostic. Since $\Acc \subseteq \Acc'$, we have that every run of $\mathcal A$ is a run of $\mathcal A'$. Conversely, let $\mathsf R = (R, \rho = \langle \rho_F, \rho_X, \rho_Q \rangle, r_I)$ be a run of $\mathcal A'$ on some pointed coalgebra. In order to prove that $\mathsf R$ is a run of $\mathcal A$, it suffices to demonstrate that property (iii) of runs holds, as the other properties hold automatically. Let $(r_n)_{n \in \omega} \in R^\omega$ satisfy $r_{n+1} \in \Base_F(\rho_F(r_n))$ for all $n \in \omega$. From property (ii) of runs we have $(F\rho_Q \circ \rho_F)(r_n) \in (\delta \circ \rho_Q)(r_n)$, so $(F\rho_Q \circ \rho_F)(r_{n+1}) = (F\inj{2} \circ \decomp_C)(\bar c_n)$ for some $\bar c_n \in FC$. In addition, $\rho_Q(r_I) \in Q_I = \inj{1}[U]$, so we deduce that $(\rho_Q(r_n))_{n \in \omega} = \inj{1}(c_0) \cdot (\inj{2}(\bar c'_n, c_n))_{n \geq 1}$ for some $(\bar c'_n)_{n \geq 1} \in GC$ and $(c_n)_{n \geq 1} \in C^\omega$, and $\delta(\bar c_n) = c_n$. Furthermore, $(\bar c'_{n+1}, c_{n+1}) \in \Base_{F}(\decomp_C(\bar c_n))$, so by Lemma \ref{lem:decompProperties} (iii), $\trig(\bar c'_{n+1}, c_{n+1}) = \bar c_n$. Now, to show that $(\rho_Q(r_n))_{n \in \omega} \in \Acc$, we take $m \in \omega$ and show $c_m = \gamma_1((\bar c'_n)_{n > m})$. By assumption, $(\rho_Q(r_n))_{n \in \omega} \in \Acc'$, so there exists $k \in \omega$ such that for all $l \geq k$, we have $c_l = \gamma_1((\bar c'_n)_{n > l})$. If $m \geq k$, we are done. Otherwise, we proceed by induction on $m = k - 1, k-2, \dotsc, 0$. Suppose we have shown $c_{m+1} = \gamma_1((\bar c'_n)_{n > m + 1})$. By coherence of $(C, \gamma)$, $\gamma_1((\bar c'_n)_{n > m}) = \gamma_0(\bar c'_{m + 1}, \gamma_1((\bar c'_n)_{n > m + 1})) = \gamma_0(\bar c'_{m+1}, c_{m+1}) = c_m$.
    Thus $(\rho_Q(r_n))_{n \in \omega} \in \Acc$.
\end{proof*}
 
\end{document}